\newcolumntype{R}[1]{>{\raggedleft\arraybackslash }b{#1}}
\newcolumntype{L}[1]{>{\raggedright\arraybackslash }b{#1}}
\newcolumntype{C}[1]{>{\centering\arraybackslash }b{#1}}
\newcommand\numberthis{\addtocounter{equation}{1}\tag{\theequation}}
\newtheorem{thm}{Theorem}
\newtheorem{prop}{Proposition}
\newtheorem{prop_appendix}{Proposition}[subsection]
\newtheorem{defi}{Definition}
\newtheorem{coro}{Corollary}
\newtheorem{rem}{Remark}
\newtheorem{ass}{Assumption}
\renewcommand\paragraph{\@startsection{paragraph}{4}{\z@}
  {-3.25ex \@plus -1ex \@minus -0.2ex}
  {2.25ex \@plus .25ex}
  {\normalfont\normalsize\bfseries}}
\renewcommand\subparagraph{\@startsection{subparagraph}{5}{\z@}
  {-3.25ex \@plus -1ex \@minus -0.2ex}
  {2.25ex \@plus .25ex}
  {\normalfont\normalsize\bfseries}}
\def\toclevel@paragraph{4}
\def\toclevel@paragraph{5}
\def\l@paragraph{\@dottedtocline{4}{7em}{4em}}
\def\l@subparagraph{\@dottedtocline{5}{7em}{4em}}
\title{How Option Hedging Shapes Market Impact}
\author[]{Emilio Said}
\affil[]{Chaire de Finance Quantitative, Laboratoire MICS, CentraleSupélec, Université Paris-Saclay, Gif-Sur-Yvette, France}
\date{October 7, 2019}
\begin{document}

\maketitle

\begin{abstract}
We present a perturbation theory of the market impact based on an extension of the framework proposed by \cite{loeper2018option} -- originally based on \cite{liu2005option} -- in which we consider only \textit{local linear market impact}. We study the execution process of hedging derivatives and show how these \textit{hedging metaorders} can explain some stylized facts observed in the empirical market impact literature. As we are interested in the execution process of hedging we will establish that the arbitrage opportunities that exist in the discrete time setting vanish when the trading frequency goes to infinity letting us to derive a pricing equation. Furthermore our approach retrieves several results already established in the option pricing literature such that the spot dynamics modified by the market impact. We also study the relaxation of our \textit{hedging metaorders} based on the \textit{fair pricing} hypothesis and establish a relation between the \textit{immediate impact} and the \textit{permanent impact} which is in agreement with recent empirical studies on the subject.
\end{abstract}

\textbf{Keywords:} \textit{Market microstructure, option pricing, market impact, metaorders execution, option hedging, metaorders relaxation, fair pricing.}

\section{Introduction}
\label{introduction}

This paper presents a perturbation theory of the market impact. We will consider the framework of covered options. To illustrate our perturbative approach, let us consider an option's hedger who has to deal with a feedback mechanism between the underlying price dynamics and the option's delta-hedging, better know as \textit{market impact}. Let us consider a market impact rule obeying:
\begin{equation}
\label{market impact rule introduction}
S_* \rightarrow S_* + f(S_*, N_*),
\end{equation}
which means that the impact on the stock price of an order to buy $N_*$ stocks at a price $S_*$ is $f(S_*, N_*)$. From the point of view of the option's hedger, if the spot moves from $S$ to $S + \mathrm{d}S_0$, he will buy $\Gamma \mathrm{d}S_0$ stocks, but doing this, due to the market impact rule (\ref{market impact rule introduction}) the spot price will move to $S + \mathrm{d}S_0 + \mathrm{d}S_1$ where $\mathrm{d}S_1$ will be given by $\mathrm{d}S_1 = f(S + \mathrm{d}S_0, \Gamma \mathrm{d}S_0)$. But again to adjust the hedging, he has to buy $\Gamma \mathrm{d}S_1$ stocks which in turns impacts the spot price by $\mathrm{d}S_2 = f(S + \mathrm{d}S_0 + \mathrm{d}S_1, \Gamma \mathrm{d}S_1)$ and so on and so forth. This perturbative approach taking to the limit $n \rightarrow +\infty$, where $n$ represent the number of transactions of the re-hedging procedure, will move the spot:
\begin{equation}
\label{re-hedging procedure rule}
S \rightarrow{S + \displaystyle\sum_{n = 0}^{+\infty}{\mathrm{d}S_n}}.
\end{equation}
From a realistic trading perspective, a re-hedging procedure leading to the divergence of the \textit{market impact series} $\sum_{n \geq 0}{\mathrm{d}S_n}$ cannot be acceptable, therefore we will need to study some convergence properties of the market impact series derived from the market impact rule (\ref{market impact rule introduction}) described above. The convergence of the market impact series $\sum_{n \geq 0}{\mathrm{d}S_n}$ has several main physical consequences that seem necessary. First it implies that the market impact has only a limited effect on the stock price, secondly the incremental impact $\mathrm{d}S_n$ vanish where $n \rightarrow +\infty$. Those properties have already been observed empirically in the U.S. equity market \cite{bucci2019crossover} and in the European equity market \cite{said2018} highlighting the fact that at the end of certain large metaorders the incremental impacts vanish letting the total market impact reaching a plateau. In our framework, as we will only consider a perturbative approach, the initial exogenous spot move $\mathrm{d}S$ will be supposed small enough. In that case, the following re-hedging trades will be considered also small and we will show how this implies to consider at this scale \textit{local linear market impact} rule. Indeed it is well established that the response to individual small orders are linear and it is only the aggregation of those small orders executed consecutively, better known as \textit{metaorders}, which lead to concave market impact in agreement to a \textit{square-root law} in the equity markets \cite{moro2009market} \cite{bacry2015market} \cite{toth2011anomalous} but also in the options market \cite{toth2016square} \cite{said2019market}. The multi-timescale property of the market impact has also been recently adressed in \cite{benzaquen2018market}.

The rest of the paper is organized as follows. Section \ref{Order book modeling and local linear market impact} presents our order book assumptions and the way it leads in our perturbative approach to \textit{local linear market impact}. Section \ref{Context and main results} gives the main results presented in the paper. Section \ref{Market Impact and Hedging: A perturbation theory of the market impact} introduces the main definitions of the perturbation theory and gives the convergence results about the market impact series. Section \ref{Market Impact and Metaorders Execution} presents some results on the execution of metaorders in our setting. Section \ref{Market Impact and Option Pricing} presents the derivation of the pricing equation based on the results of the previous sections. Section \ref{The Market Impact of Hedging Metaorders} studies the shape and the relaxation of our \textit{hedging metaorders}. Section \ref{Conclusion} recalls and discusses our main results and makes some connections with related works.

\section{Order book modeling and local linear market impact}
\label{Order book modeling and local linear market impact}

Let us consider an order book parametrized by a mid price $\overline{S}$ and a supply intensity $\eta(t,s)$ such as the units of stocks available at the instant $t$ on the limit order book between $S$ and $S + \mathrm{d}S$ is equal to $\eta(t,S)\mathrm{d}S$. Of course the order book approach presented here is a trivial simplification of what it is really observed on the markets. More accurate descriptions are given for instance in \cite{biais1995empirical} and \cite{toth2012does}. For a more detailed order book modeling, see for example \cite{cont2010stochastic} and \cite{abergel2013mathematical}.

In our order book perspective, the execution of an amount $A$ (expressed in currency) will consume the order book up to $\overline{S} + \varepsilon(A)$, where $\varepsilon(A)$ is defined such as
$$ A = \int_{\overline{S}}^{\overline{S} + \varepsilon(A)}{s \eta(t,s)\,\mathrm{d}s}, $$
whereas the number $N(A)$ of stocks purchased to execute the amount $A$ must satisfy
$$ N(A) = \int_{\overline{S}}^{\overline{S} + \varepsilon(A)}{\eta(t,s)\,\mathrm{d}s}. $$
In agreement with the intuition, we expect that a small order has only a small impact on the spot price, hence we will assume that $\varepsilon(0) = 0$ and $\varepsilon$ is continuous in $0$. As we will consider only small orders, we will be interested in the behaviour of $\varepsilon$ around $0$ at the leading order in $A$. The function $\varepsilon$ as defined above allows us to capture the liquid character or not of an underlying. Indeed to illustrate this, let us consider two particular cases: The one of a super liquid stock with $A \gg \varepsilon(A)$ and the second one of a very poor liquid equity satisfying $A \sim \varepsilon(A)$ which reads that $A$ and $\varepsilon(A)$ are the same order of magnitude. Those considerations motivate to define the \textit{market depth} $L(t,\overline{S})$ at $\overline{S}$ by
$$ L(t,\overline{S}) := \lim\limits_{A \rightarrow 0}{\frac{A}{\varepsilon(A)}}. $$
We have
$$ \frac{A}{\varepsilon(A)} = \frac{1}{\varepsilon(A)}\int_{\overline{S}}^{\overline{S} + \varepsilon(A)}{s\eta(t,s)\,\mathrm{d}s} \xrightarrow[A \rightarrow 0]{} \eta(t,\overline{S}) \overline{S},$$
and
$$ \frac{N(A)}{\varepsilon(A)} = \frac{1}{\varepsilon(A)}\int_{\overline{S}}^{\overline{S} + \varepsilon(A)}{\eta(t,s)\,\mathrm{d}s} \xrightarrow[A \rightarrow 0]{} \eta(t,\overline{S}).$$
This gives that 
$$ \varepsilon(A) = \frac{\varepsilon(A)}{N(A)\overline{S}} \times N(A)\overline{S} \sim_{A \rightarrow 0} \frac{1}{\eta(t,\overline{S}) \overline{S}} \times N(A)\overline{S}. $$
By setting
\begin{equation}
\label{definition lambda}
\lambda(t,\overline{S}) := \frac{1}{L(t,\overline{S})} = \frac{1}{\eta(t,\overline{S}) \overline{S}},
\end{equation}
we have $\varepsilon(A) \sim_{A \rightarrow 0} \lambda(t,\overline{S}) N(A)\overline{S}$ corresponding to \textit{linear market impact} for $A$ small enough. Therefore we have established the following \textit{local linear market impact} rule:
\begin{equation}
\label{linear market impact rule}
S_* \rightarrow S_* + \lambda(t,S_*)S_* N_*,
\end{equation}
which is valid when the order size $N_*$ is small enough. When $\lambda(t,\overline{S}) \equiv \lambda \in \mathbb{R}_+$ we retrieve the market impact rule given in \cite{abergel2017option} where the impact is given by
\begin{equation}
\label{abergel market impact}
  S_*(e^{\lambda N_*} - 1) \sim_{N_* \rightarrow 0} \lambda S_* N_*.
\end{equation}
The case $\lambda(t,\overline{S}) \equiv \lambda \overline{S}$ with $\lambda \in \mathbb{R}_+$ corresponds to the market impact rule presented in \cite{loeper2018option}:
\begin{equation}
\label{loeper market impact}
  S_* \rightarrow S_* + \lambda S_*^2 N_*.
\end{equation}
In what follows we will consider indifferently (\ref{abergel market impact}) or (\ref{loeper market impact}) by introducing
\begin{equation}
\label{said market impact}
  S_* \rightarrow S_* + \lambda S_*^{1 + \zeta} N_*,
\end{equation}
where $\zeta \in \{0,1\}$. One must notice that depending of the choice of $\zeta$ the dimension of the parameter $\lambda$ can vary.

\section{Context and main results}
\label{Context and main results}

Let us suppose the market impact rule (\ref{said market impact}) hold, and consider an agent who is short of an european style option for instance. Taking in to account market impact, if the spot moves from $S$ to $S + \mathrm{d}S$ the agent is going to try to react to the exogenous market move $\mathrm{d}S$ by adjusting his hedge by purchasing $N$ stocks. This will result in a final state 
$$ S + \mathrm{d}S + (S + \mathrm{d}S)^{1 + \zeta}\lambda N, $$
and as the trader wants to be hedged at the end of the day, $N$ must satisfy the following equation
$$ \Gamma(t,S + \mathrm{d}S)(\mathrm{d}S + (S + \mathrm{d}S)^{1 + \zeta}\lambda N)  = N$$
leading to
$$ N = \frac{\Gamma(t, S + \mathrm{d}S) \mathrm{d}S}{1 - (S + \mathrm{d}S)^{1 + \zeta} \lambda\Gamma(t,S + \mathrm{d}S)}. $$
Let us assume that $x \longmapsto \displaystyle\frac{\Gamma(t, x)}{1 - \lambda x^{1 + \zeta} \Gamma(t,x)}$ has a Taylor series expansion in the neighbourhood of any $S$, hence $N$ can be read
\begin{align*}
N &= \frac{\Gamma(t, S + \mathrm{d}S)}{1 - (S + \mathrm{d}S)^{1 + \zeta} \lambda\Gamma(t,S + \mathrm{d}S)} \\
  &= \frac{\Gamma(t,S)\mathrm{d}S}{1 - \lambda S^{1 + \zeta} \Gamma(t,S)} + c_2(t,S) (\mathrm{d}S)^2 + c_3(t,S) (\mathrm{d}S)^3 + \dots \numberthis \label{N taylor expansion}
\end{align*}
giving $N \approx \displaystyle\frac{\Gamma(t,S)\mathrm{d}S}{1 - \lambda S^{1 + \zeta}\Gamma(t,S)}$ at the leading order in $\mathrm{d}S$. So everything happens as if the market impact has affected the dynamics of the spot in
\begin{equation}
\label{modified spot dynamics}
\mathrm{d}\tilde{S} = \frac{\mathrm{d}S}{1 - \lambda S^{1 + \zeta}\Gamma(t,S)}
\end{equation}
at the first order. This approach was developped in \cite{loeper2018option} by considering (\ref{modified spot dynamics}) as an \textit{ansatz} to derive his pricing equation. Modified spot dynamics generated by market impact has been also discussed in \cite{bouchard2016almost}. As $\mathrm{d}S$ and $\mathrm{d}\tilde{S}$ need to have the same sign, we have necessarily
\begin{equation}
\label{inequality condition}
\sup\limits_{(t,S) \in [0,T] \times \mathbb{R}_+}{\lambda S^{1 + \zeta}\Gamma(t,S)} < 1
\end{equation}
with $T > 0$ fixed. The condition (\ref{inequality condition}) and its variants are admitted in several papers on the topic (see \cite{liu2005option} \cite{abergel2017option} and \cite{loeper2018option} for instance). However this important question has always been left aside and often assumed in order to derive a pricing equation with exact replication of european style options. From a mathematical point of view it is always possible to replace the pricing equation $\mathcal{P} = 0$ by $\max(\mathcal{P}, \lambda S^{1 + \zeta}\Gamma(t,S) - 1) = 0$ as suggested in \cite{loeper2018option}.

At this stage, there is also an other point that needs to be discussed. Most papers on option pricing and hedging with market impact deal with linear impact as done in \cite{bouchard2016almost} \cite{bouchard2017hedging} and \cite{loeper2018option} for exemple. However, although this approach is acceptable for \textit{small} trades is clearly not realistic for \textit{large} orders in terms of market microstructure. Besides when it comes trades with sufficiently large size, it is not realistic to state that a large order can be executed in a single trade. Hence it becomes necessary to study the execution process of hedging.

The main objective of this paper is to show however that the option pricing approaches developped mainly in \cite{liu2005option} \cite{abergel2017option} and \cite{loeper2018option} are compatible with the market impact foundations mainly based on the study of the metaorders. The first step done in this direction has been presented in \cite{abergel2017option}. In their paper the authors integrate in their option pricing model a relaxation factor to illustrate the results obtained of the \textit{permanent impact} in the metaorders. Our paper is a contribution of this strand of research as our main goal is to provide some realistic connections between \textit{option pricing theory} and \textit{market impact empirical findings}. To this end our approach is mainly inspired by the original definition of a metaorder which is nothing less than a large order split into several small orders to be executed incrementally. Besides it has been shown in two companion papers (\cite{said2018} and \cite{said2019market}) that the metaorders can obey to an \textit{algorithmic definition}. This implies that they are not necessarily driven by \textit{execution strategies} but they are more often simply \textit{opportunistic} or \textit{hedging} trades. So as the vast majority of the meatorders executed on the markets appear to be hedging trades they must explain the market impact curves observed in several studies. We will introduce a hedging procedure -- as presented in Section \ref{introduction} based on a local linear market impact rule (\ref{linear market impact rule}) -- composed of successive small orders each of them impacting the price by $\mathrm{d}S_n$. Under a Gamma constant approximation we will show that the market impact series $\sum_{n \geq 0}{\mathrm{d}S_n}$ is convergent if and only if the condition (\ref{inequality condition}) holds. This result gives a physical meaning to the condition (\ref{inequality condition}) as it linked the convergence of the market impact series with the possibility to derive a pricing equation. We will also show that the sum of the market series $\sum_{n \geq 0}{\mathrm{d}S_n}$ can be expressed as in (\ref{N taylor expansion}). Hence what was considered as an \textit{ansatz} (\ref{modified spot dynamics}) in \cite{loeper2018option} is now simply a consequence of the convergence of the market impact series.

\section{Market Impact and Hedging: A perturbation theory of the market impact}
\label{Market Impact and Hedging: A perturbation theory of the market impact}

In this section we give the main results based on our perturbation theory of the market impact which will be useful to derive the pricing equation. For ease of notations, we will not consider any interest rates or dividends in the rest of the paper.

Let us consider that we have sold an european style option whose value is $u(t,s)$ with a fixed maturity $T$. Greeks are given as usual by
\begin{align*}
	\Delta(t,s) &= \partial_s{u}(t,s), \\
	\Gamma(t,s) &= \partial_{ss}{u}(t,s) = \partial_s{\Delta}(t,s), \\
	\Theta(t,s) &= \partial_t{u}(t,s).
\end{align*}
In order to derive a pricing equation, we will be interested in small enough spot moves. We have already established in Section \ref{Order book modeling and local linear market impact} that this leads to consider linear market impact rule in the framework of our perturbation theory. Therefore in what follows we will take a market impact which reads:
\begin{equation}
\label{main permanent market impact rule}
S_* \rightarrow S_* + \lambda S_*^{1 + \zeta} N_*
\end{equation}
i.e. the impact of an order to buy $N_*$ stocks at a price $S_*$ is $\lambda S_*^{1 + \zeta} N_*$ when the size $N_*$ of the order is sufficiently small (\textit{linear market impact}). We set the parameter $\phi$ defined by
\begin{equation}
  \phi \equiv \phi(t,S) := \lambda S^{1 + \zeta} \partial_{ss}{u}(t,S) = \lambda S^{1 + \zeta} \Gamma(t,S).
\end{equation}
We assume an initial spot moves from $S$ to $S + \mathrm{d}S$, $\mathrm{d}S$ supposed to be small and $S > 0$. By following an iterative hedging strategy one has to adjust the hedge by $\Gamma \mathrm{d}S$ stocks after the initial spot move $S \rightarrow S + \mathrm{d}S$, which then again impacts the spot price by $\mathrm{d}S_1 = \lambda (S + \mathrm{d}S_0)^{1 + \zeta} \Gamma \mathrm{d}S_0$ according to the linear market impact rule presented above with $\mathrm{d}S_0 = \mathrm{d}S$. This spot move is then followed by a second hedge adjustment of $\Gamma \mathrm{d}S_1$, which in turn impacts the spot price by $\mathrm{d}S_2 = \lambda (S + \mathrm{d}S_0 + \mathrm{d}S_1)^{1 + \zeta} \Gamma \mathrm{d}S_1$ and so on and so forth. Hence one has
$$\left\{
    \begin{array}{l}
    \mathrm{d}S_0 = \mathrm{d}S \\
    \forall n \in \mathbb{N}, \mathrm{d}S_{n+1} = \lambda \left(S + \displaystyle\sum_{k=0}^{n}{\mathrm{d}S_k}\right)^{1 + \zeta} \Gamma \mathrm{d}S_n = \phi \left(1 + \displaystyle\frac{1}{S}\sum_{k=0}^{n}{\mathrm{d}S_k}\right)^{1 + \zeta} \mathrm{d}S_n
    \end{array}
\right.$$
with the assumptions that $\Gamma \equiv \Gamma(t,S)$ remain constant during the hedging procedure described just above. The Gamma approximation appears also in \cite{almgren2016option} and considerably simplifies the problem by eliminating the dependence of the variable state $S_t$ in the expression of $\Gamma$. Hence the approximation that $\Gamma$ is constant during the hedging procedure allows us to exhibit the essential features of the local hedging without losing ourselves in complexities. Furthermore from the numerical point of view, this hypothesis is clearly justified in the context of the perturbative approach proposed here. In what follows we will show that for $\mathrm{d}S$ small enough, the total market impact $\displaystyle\sum_{n=0}^{+\infty}{\mathrm{d}S_n}$ will be also small enough. In the context of our approach as the market impact is only considered as a perturbation we have that
$$ \Gamma\left(t,S + \sum_{n=0}^{+\infty}{\mathrm{d}S_n}\right) \approx \Gamma(t,S) + \partial_s \Gamma(t,S) \sum_{n=0}^{+\infty}{\mathrm{d}S_n} \approx \Gamma(t,S). $$

Besides considering the fact that the price is bounded below by $0$ it is more relevant to write
\begin{equation}
\label{equation dS}
\left\{
    \begin{array}{l}
    \mathrm{d}S_0 = \mathrm{d}S  \vee (-S) \\
    \forall n \in \mathbb{N}, \mathrm{d}S_{n+1} = \phi \left(1 + \displaystyle\frac{1}{S}\sum_{k=0}^{n}{\mathrm{d}S_k}\right)^{1 + \zeta} \mathrm{d}S_n \vee \left(-\displaystyle\sum_{k=0}^{n}{\mathrm{d}S_k} -S \right)
    \end{array}
\right.
\end{equation}
with for all $a,b \in \mathbb{R}$, $\max(a,b) = a \vee b$. For all $n \in \mathbb{N}$, $\mathrm{d}S_n$ being a function of $\mathrm{d}S$, we will denote by now $x$ for the exogenous spot move $\mathrm{d}S$ and $u_n$ for the $n-th$ impact $\mathrm{d}S_n$. Hence we introduce for ease of notations the sequence of real valued functions $(u_n)_{n \in \mathbb{N}}$ defined by for all $x \in \mathbb{R}$,
\begin{equation}
\label{equation 1}
\left\{
    \begin{array}{l}
    u_0(x) = x \vee (-S)\\
    \forall n \in \mathbb{N}, u_{n+1}(x) = \phi \left(1 + \displaystyle\frac{s_n(x)}{S}\right)^{1 + \zeta} u_n(x) \vee (-s_n(x) - S)
    \end{array}
\right.
\end{equation}
where $s_n := \sum_{k=0}^{n}{u_k}$ denote the partial sums of the series $\sum_{n \in \mathbb{N}}{u_n}$. As $s_n$ represent the cumulative market impact after $n$ successive re-hedging trades, from a realistic trading perspective this quantity cannot explosed, hence we will add the following constraint:

\begin{ass}
\label{ass s_n(x) bounded}
For all $x \in \mathbb{R}$, the sequence $(s_n(x))_{n \in \mathbb{N}}$ is bounded.
\end{ass}

In the context of our perturbation theory of the market impact, it will be necessary to study the convergence properties of the market impact series derived from the local linear market impact rule (\ref{linear market impact rule}) given previously. To this end, we introduce what we will call a \textbf{market impact scenario}:

\begin{defi}
\label{defi market impact scenario}
A sequence $(u_n)_{n \in \mathbb{N}}$ defined by (\ref{equation 1}) is said to be a \textbf{market impact scenario} starting from $x \in \mathbb{R}$.
\end{defi}

A \textit{market impact scenario} can be seen as an extension of the definition given for a metaorder (see Definition \ref{defi metaorder} given in Section \ref{The Market Impact of Hedging Metaorders}). We have this first elementary result:

\begin{prop}
\label{prop uniformly bounded}
Let $(u_n)_{n \in \mathbb{N}}$ an $\phi-$market impact scenario. Then $(s_n)_{n \in \mathbb{N}}$ is uniformly bounded below by $-S$.
\end{prop}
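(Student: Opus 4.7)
The plan is a direct induction on $n$, exploiting the fact that the max clauses appearing in (\ref{equation 1}) were engineered precisely to enforce the bound $s_n \geq -S$, which encodes the physical constraint that the price $S + s_n$ must remain nonnegative.

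First I would settle the base case $n=0$. By definition $s_0(x) = u_0(x) = x \vee (-S) \geq -S$, which is immediate.

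Next, I would carry out the induction step. Assume $s_n(x) \geq -S$. From the definition of $u_{n+1}$ in (\ref{equation 1}),
\[
u_{n+1}(x) = \phi \left(1 + \tfrac{s_n(x)}{S}\right)^{1+\zeta} u_n(x) \,\vee\, \bigl(-s_n(x) - S\bigr),
\]
so in particular $u_{n+1}(x) \geq -s_n(x) - S$. Adding $s_n(x)$ to both sides gives
\[
s_{n+1}(x) = s_n(x) + u_{n+1}(x) \geq s_n(x) + \bigl(-s_n(x) - S\bigr) = -S,
\]
which closes the induction. Since the bound $-S$ is independent of $x$ and of $n$, uniformity follows at once.

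There is no real obstacle here: the result is essentially a tautology built into the way (\ref{equation 1}) has been written. The only thing to note is that the bound has nothing to do with the nature of $\phi$, $\zeta$, or the convergence of the series — it holds for every market impact scenario in the sense of Definition \ref{defi market impact scenario}, which is why Proposition \ref{prop uniformly bounded} is stated as an elementary preliminary before the more substantive convergence results to come.
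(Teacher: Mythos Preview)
Your proof is correct and follows essentially the same induction argument as the paper's own proof: base case $s_0(x)=x\vee(-S)\geq -S$, then from $u_{n+1}(x)\geq -s_n(x)-S$ deduce $s_{n+1}(x)\geq -S$. In fact your induction hypothesis is stated more cleanly than the paper's (which writes ``suppose that $u_n(x)\geq -S$'' where $s_n(x)$ is clearly intended).
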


\begin{proof}
Let $(u_n)_{n \in \mathbb{N}}$ an $\phi-$market impact scenario starting from $x \in \mathbb{R}$. We proceed by induction.
\begin{itemize}
    \item $s_0(x) = u_0(x) = x \vee (-S) \geq -S$.
    \item Let $n \in \mathbb{N}$ and suppose that $u_n(x) \geq -S$. We have $u_{n+1}(x) \geq -s_n(x) -S$ which gives $s_{n+1}(x) \geq -S$.
\end{itemize}
Hence we have shown that for all $x \in \mathbb{R}$, $n \in \mathbb{N}$, $s_n(x) \geq -S$.
\end{proof}

Among the market impact scenarios, we must distinguish between those that are acceptable from those that are not. The following definition is here to give an admissibility criterion:

\begin{defi}
\label{defi admissible}
A market impact scenario $(u_n)_{n \in \mathbb{N}}$ is said to be \textbf{admissible from a trading perspective} if there exists $R > 0$ such that for any $x \in (-R,R)$, $\sum_{n \in \mathbb{N}}{u_n(x)}$ converges.
\end{defi}

The following result will give an equivalent criterion on the parameter $\phi$ for a market impact scenario to be admissible. We have that:

\begin{thm}
\label{thm phi}
The $\phi-$market impact scenario $(u_n)_{n \in \mathbb{N}}$ is admissible from a trading perspective if, and only if, $\phi \in (-\infty, 1)$.
\end{thm}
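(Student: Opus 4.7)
The plan is to establish both directions by controlling the effective multiplier $\phi(1+s_n/S)^{1+\zeta}$ along the iteration, together with careful tracking of when the floor clamp $-s_n-S$ becomes active. The two directions split cleanly into three regimes: $\phi \geq 1$ (for necessity), and then $\phi \in (-1,1)$ versus $\phi \leq -1$ for the two subcases of sufficiency.

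\emph{Necessity.} Suppose $\phi \geq 1$. For any $x > 0$ I would show by induction that $u_n(x) > 0$, $s_n(x) \geq x$, and the floor is inactive (since the natural value is positive while $-s_n-S \leq -S < 0$). The recursion then yields $u_{n+1}(x) \geq \phi\, u_n(x) \geq \phi^{n+1} x \geq x$, so $\sum_{n} u_n(x) = +\infty$. Since this holds for every $x > 0$ however small, no $R > 0$ can satisfy Definition~\ref{defi admissible}.

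\emph{Sufficiency, regime $\phi \in (-1,1)$.} Pick $\rho$ with $\max(|\phi|,0) < \rho < 1$. By continuity of $\sigma \mapsto |\phi|(1+\sigma/S)^{1+\zeta}$ at $\sigma=0$, choose $\delta > 0$ so that this expression is bounded by $\rho$ on $[-\delta, \delta]$, and set $R = \min(\delta(1-\rho), S/2)$. A simultaneous induction on $n$ then maintains the three invariants $|u_n(x)| \leq \rho^n|x|$, $|s_n(x)| \leq |x|/(1-\rho) \leq \delta$, and inactivity of the floor (the natural value has magnitude $\leq \rho^{n+1}|x|$ while $-s_n-S \leq -S/2$). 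The series converges absolutely, dominated by the geometric $|x|\sum_{n \geq 0}\rho^n$.

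\emph{Sufficiency, regime $\phi \leq -1$.} The ratio $|\phi|(1+s_n/S)^{1+\zeta}$ is no longer automatically below $1$, so the previous argument fails. Here I would argue by dichotomy using Assumption~\ref{ass s_n(x) bounded}. If the floor activates at some finite step $n^*$, then $s_{n^*}=-S$ makes $u_{n^*+1}=\max(0,0)=0$ and $u_k=0$ for all $k > n^*$, so $\sum u_n(x)$ terminates. Otherwise the natural recursion holds for all $n$, and the product identity
\[
|u_n(x)| \;=\; |x|\,|\phi|^n \prod_{k=0}^{n-1}\bigl(1+s_k(x)/S\bigr)^{1+\zeta}
\]
combined with the bounded-partial-sums hypothesis and $|u_n|=|s_n-s_{n-1}|$ forces $s_n(x) \to -S$ asymptotically; this is the only way for the product factor to absorb the growing $|\phi|^n$ while keeping $|u_n|$ bounded. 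In that branch $s_n \to -S$ and $\sum u_n(x)$ converges to $-S$. Either way, admissibility holds.

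The main obstacle is the $\phi \leq -1$ branch, specifically ruling out a pathological orbit in which $(s_n(x))$ is bounded yet not convergent. The clean way to close this is to turn the product representation and the non-negativity of $1+s_n/S$ (needed for the logarithm to be well-defined in a sharper form of the argument) into a quantitative statement, leveraging Assumption~\ref{ass s_n(x) bounded} to promote boundedness into convergence. A secondary but benign subtlety is the uniform-in-$x$ character of the geometric induction in the main regime, which is why $R$ must be fixed in advance from $\delta$ and $\rho$ rather than depending on $n$.
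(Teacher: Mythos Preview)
Your necessity argument and the $\phi\in(-1,1)$ sufficiency argument are correct and essentially match the paper's Propositions~\ref{prop admissible 1}--\ref{prop admissible 3}; your unified continuity/geometric-majorant treatment is if anything tidier than the paper's explicit choices of $r$ and $R$, but the mechanism is the same.

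The genuine gap is the $\phi\leq -1$ branch, and it is more serious than you flag. The heuristic that boundedness of $|u_n(x)|$ forces $s_n(x)\to -S$ is wrong: balancing the product $|x|\,|\phi|^n\prod_{k<n}(1+s_k/S)^{1+\zeta}$ against a bound only constrains the Ces\`aro behaviour of $\log(1+s_k/S)$, and the natural level this singles out is $-S+S|\phi|^{-1/(1+\zeta)}$, not $-S$. Nothing in the product identity alone rules out a bounded but non-convergent orbit oscillating around that level, and for $\phi=-1$ the factor $|\phi|^n\equiv 1$ gives you no decay to exploit at all. So the route you sketch does not close, and the limit you name is not the right one even when it does.

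The paper's Proposition~\ref{prop admissible 4} proceeds quite differently. It uses the alternating sign of $(u_n)$ to split $(s_n)$ into monotone even/odd subsequences, invokes Assumption~\ref{ass s_n(x) bounded} to give each a limit, and from the limiting ratio $|u_{2n+1}/u_{2n}|\to 1$ derives a contradiction unless some $s_n$ drops below the threshold $-S+S|\phi|^{-1/(1+\zeta)}$. A second pass with the same monotone-subsequence device then locates two \emph{consecutive} partial sums below that threshold; once this happens the multiplier $|\phi|(1+s_n/S)^{1+\zeta}$ is trapped strictly below $1$, and geometric decay (hence absolute convergence) follows. The missing ingredient in your sketch is precisely this threshold-and-monotonicity mechanism, which converts boundedness into eventual contraction rather than trying to identify $\lim s_n$ directly.
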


\begin{proof}
see Appendix \ref{proof of thm phi}.
\end{proof}

The previous result establish a straightforward connection between the range values of the parameter $\phi$ and the convergence of the market impact series. Actually the proofs of Theorem \ref{thm phi} given state much more than that. We have in fact the following result:

\begin{thm}
\label{thm lambda gamma 2}
Let $(u_n)_{n \in \mathbb{N}}$ an $\phi-$market impact scenario. The following statements are equivalent:
\begin{enumerate}[(i)]
    \item The market impact scenario $(u_n)_{n \in \mathbb{N}}$ is admissible from a trading perspective.
    \item $\phi \in (-\infty,1)$.
    \item There exists $R > 0$ such that for any $x \in (-R,R)$, $\sum_{n \geq 0}{u_n(x)}$ converges.
    \item There exists $R > 0$ such that for any $x \in (-R,R)$, $\sum_{n \geq 0}{u_n(x)}$ converges absolutely.
\end{enumerate}
\end{thm}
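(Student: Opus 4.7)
The plan is to reduce Theorem \ref{thm lambda gamma 2} to its one non-routine implication. First, (i) $\Leftrightarrow$ (iii) is immediate from Definition \ref{defi admissible}, which is literally statement (iii). Second, (i) $\Leftrightarrow$ (ii) is exactly Theorem \ref{thm phi}. Since absolute convergence implies convergence, (iv) $\Rightarrow$ (iii) requires no argument. Thus the only substantive step is (ii) $\Rightarrow$ (iv): assuming $\phi < 1$, one must upgrade the convergence of $\sum_n u_n(x)$ for $|x|$ small to absolute convergence.

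For the generic regime $\phi \in (-1,1)$, I would set up a geometric bootstrap. In the regime where the $\vee$ in (\ref{equation 1}) does not bind, the recursion reads
$$ |u_{n+1}(x)| = |\phi|\left(1+\frac{s_n(x)}{S}\right)^{1+\zeta} |u_n(x)|. $$
Pick $\rho \in (|\phi|,1)$ and $\varepsilon > 0$ with $|\phi|(1+\varepsilon)^{1+\zeta} \leq \rho$, then choose $R \leq \varepsilon S(1-\rho)$. An induction on $n$ establishes simultaneously $|u_n(x)| \leq \rho^n |x|$ and $|s_n(x)| \leq |x|/(1-\rho) \leq \varepsilon S$ for all $|x| < R$: the second bound keeps the multiplicative factor below $\rho$, and the first bound feeds back into the second via the triangle inequality. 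Along the way this also verifies that the $\vee$ never activates for such $x$, and $\sum_n |u_n(x)|$ is dominated by a convergent geometric series.

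The case $\phi \leq -1$ is the main obstacle, since $|\phi| \geq 1$ means the multiplicative factor is no longer contractive for small $s_n$. Here I would argue that the $\vee$ in (\ref{equation 1}) must saturate at some finite index $n_0 = n_0(x)$: otherwise the pure multiplicative recursion would produce $|u_n(x)|$ bounded below by a non-decreasing factor (using Proposition \ref{prop uniformly bounded} to keep $1+s_n/S$ nonnegative), which, combined with the alternation of signs produced by $\phi < 0$, would force $(s_n(x))_n$ to oscillate with amplitude of order $|\phi|^n|x|$, contradicting Assumption \ref{ass s_n(x) bounded}. Once the $\vee$ binds, one has $s_{n_0}(x) = -S$, so the factor $(1+s_{n_0}(x)/S)^{1+\zeta}$ vanishes and $u_n(x) = 0$ for every $n > n_0$; the series is then a finite sum, trivially absolutely convergent. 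Rigorously closing this termination argument---in particular ruling out pathological scenarios where $(s_n)$ stays bounded via delicate cancellations despite $|\phi| \geq 1$---is the genuinely delicate point of the proof, and the place where Assumption \ref{ass s_n(x) bounded} does the heavy lifting.
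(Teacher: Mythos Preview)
Your reduction to the single implication (ii) $\Rightarrow$ (iv) is correct and matches the paper's own proof, which simply observes that Propositions \ref{prop admissible 1}--\ref{prop admissible 4} already establish absolute convergence in each regime. Your geometric bootstrap for $\phi \in (-1,1)$ is essentially the paper's argument (Propositions \ref{prop admissible 2} and \ref{prop admissible 3}) unified into a single induction, which is fine.

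The gap is in your treatment of $\phi \leq -1$. You propose to show that the $\vee$ must saturate at some finite $n_0$, after which the series terminates. But this is \emph{not} what the paper does, and your heuristic for it is flawed. Your claim that $|u_n|$ grows like $|\phi|^n|x|$ is only valid while $|s_n| \ll S$; once $s_n$ approaches $-S + S/|\phi|^{1/(1+\zeta)}$, the multiplicative factor $|\phi|(1+s_n/S)^{1+\zeta}$ drops below $1$ and the growth stops---without the $\vee$ ever binding. The paper's Proposition \ref{prop admissible 4} handles precisely this non-terminating case $\mathcal{A}(x) = \emptyset$: it tracks the partial sums via auxiliary sets $\mathcal{B}(x)$ and $\mathcal{C}(x)$, uses Assumption \ref{ass s_n(x) bounded} to force $s_n$ into the contractive region $\{s < -S + S/|\phi|^{1/(1+\zeta)}\}$, and then obtains $|u_{n+1}|/|u_n| < 1$ for all large $n$. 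So the ``pathological scenarios'' you flag are not edge cases to be ruled out---they are the main mechanism of convergence in this regime, and your termination strategy does not go through.
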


\begin{proof}
This result is an immediate corollary of the proofs given for the Propositions \ref{prop admissible 1}, \ref{prop admissible 2}, \ref{prop admissible 3} and \ref{prop admissible 4}.
\end{proof}

From now on, we consider only market impact scenarios that are admissible from a trading perspective which is, according to Theorem \ref{thm phi}, equivalent to considering values of $\phi$ in $(-\infty,1)$. We give now an expression of the market impact series for admissible market impact scenarios, in particular one can notice that we retrieve the first order term of the taylor expansion \ref{N taylor expansion} introduced above.

\begin{prop}
\label{prop sum series}
Let $(u_n)_{n \in \mathbb{N}}$ an $\phi-$market impact scenario admissible from a trading perspective starting from $x \in \mathbb{R}$ and
\begin{equation}
\label{equation N}
    N(x) := \inf{\left\{n \in \mathbb{N} \,\bigg|\, s_n(x) = -S \right\}}
\end{equation}
with the convention $\inf{\emptyset} = +\infty$. Then there exists $R > 0$ such that for all $x \in (-R,R)$, $N(x) \in \mathbb{N}^* \cup \{+\infty\}$ and
\begin{equation}
\label{equation sum dS_n}
    \sum_{n=0}^{N(x)}{u_n(x)} = \frac{x}{1 - \phi} + \frac{1}{S}\frac{(1 + \zeta)\phi}{1 - \phi}\sum_{n=0}^{N(x)}{s_n(x)u_n(x)} + \frac{1}{S^2}\frac{\zeta\phi}{1 - \phi}\sum_{n=0}^{N(x)}{s_n^2(x)u_n(x)}.
\end{equation}
\end{prop}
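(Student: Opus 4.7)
The plan is to reduce the claim to the regime where the $\vee$-clamp in (\ref{equation 1}) is never active, then exploit the fact that for $\zeta \in \{0,1\}$ the polynomial factor $(1+y)^{1+\zeta}$ admits an \emph{exact} finite expansion, which turns the nonlinear recursion into a linear combination of $u_n$, $s_n u_n$ and $s_n^2 u_n$ that telescopes.

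First I would pick the radius $R$. Theorem \ref{thm lambda gamma 2} furnishes some $R_0 > 0$ such that $\sum_{n \geq 0} u_n(x)$ converges absolutely for $|x| < R_0$; the quantitative bounds used in the proofs of Propositions \ref{prop admissible 3} and \ref{prop admissible 4} (which actually dominate $\sum_n |u_n(x)|$ by a continuous function of $x$ vanishing at $0$) allow me to shrink $R_0$ to some $R \in (0, S)$ for which $\sum_{n \geq 0} |u_n(x)| < S$ whenever $|x| < R$. On this interval, $|s_n(x)| < S$ for every $n$, so $s_n(x) \neq -S$ and hence $N(x) = +\infty \in \mathbb{N}^*\cup\{+\infty\}$; moreover $R < S$ forces $u_0(x) = x$. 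Since $|s_{n+1}(x)| < S$ rules out $s_{n+1}(x) = -S$, the $\vee$-clamp in (\ref{equation 1}) never attains its lower branch, so
\[
u_{n+1}(x) = \phi\left(1 + \frac{s_n(x)}{S}\right)^{1+\zeta} u_n(x) \qquad \text{for all } n \geq 0.
\]

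Next I use the identity $(1+y)^{1+\zeta} = 1 + (1+\zeta)y + \zeta y^2$, which is \emph{exact} (not a Taylor approximation) for both $\zeta = 0$ and $\zeta = 1$. Substituting $y = s_n(x)/S$ rewrites the recursion as
\[
u_{n+1}(x) = \phi\, u_n(x) + \frac{(1+\zeta)\phi}{S}\, s_n(x)\,u_n(x) + \frac{\zeta \phi}{S^2}\, s_n^2(x)\,u_n(x).
\]
Absolute convergence of $\sum u_n$ together with the uniform bound $|s_n(x)| < S$ ensures absolute convergence of the two series $\sum_n s_n(x) u_n(x)$ and $\sum_n s_n^2(x) u_n(x)$, so I can sum both sides over $n \geq 0$ and rearrange freely. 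The left-hand side telescopes to $\sum_{n \geq 0} u_n(x) - u_0(x) = \sum_{n \geq 0} u_n(x) - x$. Isolating $\sum_{n \geq 0} u_n(x)$ and dividing by $1-\phi > 0$ yields exactly (\ref{equation sum dS_n}) with the convention $N(x) = +\infty$.

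The main obstacle is Step 1: upgrading the qualitative absolute convergence produced by Theorem \ref{thm lambda gamma 2} into a quantitative smallness estimate $\sum_{n \geq 0} |u_n(x)| < S$ as $x \to 0$, because this is what both guarantees $N(x) = +\infty$ and legitimizes the termwise summation. Once that uniform control is extracted from the appendix estimates, the polynomial identity and the telescoping step are routine algebra; in particular the case $\zeta \in \{0,1\}$ is crucial because an infinite binomial series would otherwise introduce additional terms that do not appear in (\ref{equation sum dS_n}).
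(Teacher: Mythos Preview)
Your telescoping step---using the exact identity $(1+y)^{1+\zeta} = 1 + (1+\zeta)y + \zeta y^2$ for $\zeta \in \{0,1\}$ and summing the recursion---is exactly what the paper does. The gap is in Step~1. You claim one can shrink $R$ so that $\sum_n |u_n(x)| < S$ (hence $N(x) = +\infty$) on $(-R,R)$, invoking the estimates behind Propositions~\ref{prop admissible 3} and~\ref{prop admissible 4}. This is fine for $\phi \in (-1,1)$ (it is essentially Proposition~\ref{prop regular}), but it breaks down for $\phi \leq -1$. In that regime, for small $x$ one has $|u_{n+1}|/|u_n| \approx |\phi| \geq 1$, so the increments grow geometrically until the clamp activates; $\sum_n |u_n(x)|$ stays of order $S$ however small $x \neq 0$ is, and the scenario is chaotic. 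The proof of Proposition~\ref{prop admissible 4} does \emph{not} dominate $\sum_n |u_n(x)|$ by a function vanishing at $0$: it only produces eventual geometric decay after an $x$-dependent index $n_2(x)$, with no control on the partial sum up to that index.

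The paper sidesteps this by not forcing regularity. Taking any $R \in (0,S)$ already ensures $u_0(x)=x$ and $N(x)\geq 1$. The key observation is that the unclamped recursion $u_{n+1} = \phi(1+s_n/S)^{1+\zeta} u_n$ continues to hold for every $n$ in the chaotic case too, because once $n \geq N(x)$ the factor $(1+s_n/S)^{1+\zeta}$ vanishes and both sides are zero. One then runs your identical telescoping argument on the (finite or absolutely convergent) sum $\sum_{n=0}^{N(x)} u_n$, and \eqref{equation sum dS_n} drops out. So the repair is not a sharper smallness estimate on $\sum|u_n|$; it is simply to allow $N(x) < +\infty$ and sum only up to $N(x)$.
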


\begin{proof}
see Appendix \ref{proof prop sum series}.
\end{proof}

Among the admissible market impact scenarios, three can be distinguished. Firstly the trivial scenario where nothing strictly happens which consists for the price in reaching $0$ after the first exogenous spot move. Namely this corresponds to $u_0(x) = -S$. In this case there is nothing much to say. The second is to reach $0$ after a finite number of re-hedging trades. In that case it is the market impact that brings the price stock to $0$. And finally the third one and more realistic one implying an infinite number of re-hedging trades and putting the stock price at a non-zero different level. The definition given just below presents the three possibilities:

\begin{defi}
\label{defi regular chaotic and null}
Let $(u_n)_{n \in \mathbb{N}}$ an $\phi-$market impact scenario admissible from a trading perspective starting from $x \in \mathbb{R}$ and $N(x)$ as defined in (\ref{equation N}). The market impact scenario $(u_n)_{n \in \mathbb{N}}$ is said to be:
\begin{itemize}
    \item \textbf{null} when $N(x) = 0$.
    \item \textbf{chaotic} when $N(x) \in \mathbb{N}^*$.
    \item \textbf{regular} when $N(x) = +\infty$.
\end{itemize}
\end{defi}

In the rest of the section, we will only consider chaotic or regular market impact scenarios. Especially we will show (see Proposition \ref{prop regular}) that for $\phi \in (-1,1)$, when the initial spot move is small enough, which is clearly the case of interest in our perturbative approach, it is always possible to consider only regular market impact scenarios. Particularly, it will be the case of our \textit{hedging metaorders} discussed later (see Section \ref{The Market Impact of Hedging Metaorders}).

\begin{prop}
\label{prop regular}
Let $(u_n)_{n \in \mathbb{N}}$ an $\phi-$market impact scenario admissible from a trading perspective. For any $\phi \in (-1,1)$, there exists $R > 0$, such that for all $x \in (-R,R)$, the $\phi-$market impact scenario  $(u_n)_{n \in \mathbb{N}}$ is regular.
\end{prop}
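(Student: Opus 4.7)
The plan is to exploit the strict contraction available when $|\phi| < 1$, by choosing the radius $R$ so small that the partial sums $s_n(x)$ can never reach the barrier $-S$ — equivalently, so that the truncation at $-s_n - S$ in the recurrence (\ref{equation 1}) never becomes active. Once the truncation is dormant, the recursion collapses to the purely multiplicative one $u_{n+1} = \phi(1 + s_n/S)^{1+\zeta} u_n$, which can be iterated geometrically.

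First I would fix an auxiliary $\varepsilon \in (0,1)$ such that
$$\alpha \;:=\; |\phi|(1+\varepsilon)^{1+\zeta} \;<\; 1;$$
this choice is available because $|\phi| < 1$ and $\zeta \in \{0,1\}$. Then I would pick $R > 0$ small enough so that the three inequalities
$$R < S, \qquad \frac{R}{1-\alpha} < \varepsilon S, \qquad \alpha R < (1-\varepsilon)S$$
hold simultaneously. The first two will control $|s_n(x)|$, the third will keep the first argument of the max in (\ref{equation 1}) above its second argument.

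Next I would run a simultaneous induction on $n$ to prove, for every $x \in (-R, R)$, the two statements $|u_n(x)| \le \alpha^n |x|$ and $|s_n(x)| < \varepsilon S$. At the base, $|x| < R < S$ forces $u_0(x) = x$, and $|s_0(x)| = |x| < R < \varepsilon S$. For the inductive step, assume both statements hold up to index $n$. Write $A := \phi(1+s_n(x)/S)^{1+\zeta}\,u_n(x)$ and $B := -s_n(x) - S$. The induction hypothesis gives $|A| \le |\phi|(1+\varepsilon)^{1+\zeta}\,|u_n(x)| = \alpha |u_n(x)| \le \alpha R < (1-\varepsilon)S$, while $s_n(x) > -\varepsilon S$ yields $B < -(1-\varepsilon)S$. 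Hence $A > B$, so the max in (\ref{equation 1}) selects $A$, giving $u_{n+1}(x) = A$ and therefore $|u_{n+1}(x)| \le \alpha |u_n(x)| \le \alpha^{n+1}|x|$. Summing,
$$|s_{n+1}(x)| \;\le\; \sum_{k=0}^{n+1} \alpha^k |x| \;\le\; \frac{|x|}{1-\alpha} \;<\; \frac{R}{1-\alpha} \;<\; \varepsilon S,$$
closing the induction.

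The induction yields $s_n(x) > -\varepsilon S > -S$ strictly for every $n \in \mathbb{N}$, so by definition (\ref{equation N}) we have $N(x) = +\infty$, meaning the $\phi$-scenario is regular in the sense of Definition \ref{defi regular chaotic and null}. I expect the main obstacle to be the coupling between the two inductive invariants: one needs the bound $|s_n(x)| < \varepsilon S$ to guarantee both the geometric decay of $|u_n(x)|$ (through the factor $(1+s_n/S)^{1+\zeta} \le (1+\varepsilon)^{1+\zeta}$) and the dormancy of the truncation $-s_n(x) - S$, and conversely one needs the geometric decay to maintain the bound on $|s_n(x)|$. Choosing $R$ to satisfy all three constraints above decouples the argument and lets the induction close cleanly.
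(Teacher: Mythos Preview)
Your proof is correct and takes essentially the same approach as the paper: both fix a contraction ratio strictly between $|\phi|$ and $1$, choose $R$ small enough that the partial sums $s_n$ stay within a fixed fraction of $S$, and run an induction showing simultaneously the geometric decay of $|u_n|$ and the dormancy of the truncation in (\ref{equation 1}). The only cosmetic difference is parametrization --- your $(\varepsilon,\alpha)$ in place of the paper's explicit choice $r \in (|\phi|,\, 1 \wedge \tfrac{9}{4}|\phi|)$ and $R = (1-r)S\big(\sqrt{r/|\phi|}-1\big)$.
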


\begin{proof}
see Appendix \ref{proof prop regular}.
\end{proof}

The following proposition shows that in the case of regular market impact scenarios with $\phi \in (-1,1)$ the market impact series is nothing less than a power series. In particular we obtain a closed form for the second order term of the market impact series. We recall that the first linear order term has been already given in any case (chaotic or regular) in Proposition \ref{prop sum series}.

\begin{prop}
\label{prop regular power series}
Let $\phi \in (-1,1)$ and $(u_n)_{n \in \mathbb{N}}$ an $\phi-$market impact scenario admissible from a trading perspective. There exists $R > 0$, such that for all $x \in (-R,R)$, the $\phi-$market impact scenario  $(u_n)_{n \in \mathbb{N}}$ is regular. Hence the series function $\sum_{n \geq 0}{u_n}$ can be expressed as a power series on $(-R,R)$, is of class $\mathcal{C}^{\infty}$ on $(-R,R)$ and the first two terms of the decomposition of $\sum_{n \geq 0}{u_n}$ as a power series are given by 
$$ \sum_{n=0}^{+\infty}{u_n(x)} = \displaystyle\frac{1}{1 - \phi} x + \displaystyle\frac{1}{S}\displaystyle\frac{(1 + \zeta) \phi}{(1 - \phi)^3 (1 + \phi)} x^2 + o(x^2) \text{ as } x \rightarrow 0. $$
\end{prop}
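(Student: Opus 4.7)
The plan is to first restrict $x$ to a disk on which the recursion reduces to its polynomial form, then obtain a uniform geometric bound on $u_n$ that promotes the series to a holomorphic function, and finally read off the first two Taylor coefficients via Proposition~\ref{prop sum series}.

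By Proposition~\ref{prop regular} there exists $R_1 > 0$ such that for every $x \in (-R_1,R_1)$ the scenario is regular, so $s_n(x) > -S$ for all $n$ and the $\max(\cdot, -s_n(x)-S)$ truncation in (\ref{equation 1}) is inactive; on this interval the recursion reads $u_0(x)=x$ and $u_{n+1}(x)=\phi(1+s_n(x)/S)^{1+\zeta}u_n(x)$, which is polynomial in $x$ and extends naturally to $x\in\mathbb{C}$. I would then establish a uniform geometric bound: fix $\rho\in(|\phi|,1)$, choose $\varepsilon>0$ small enough that $|\phi|(1+\varepsilon)^{1+\zeta}\leq\rho$, and set $R:=\min(R_1,S\varepsilon(1-\rho))$. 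An induction on $n$ shows that for every complex $x$ with $|x|\leq R$, $|u_n(x)|\leq \rho^n|x|$ and $|s_n(x)|\leq S\varepsilon$: assuming the bound through index $n$, $|s_n|/S\leq\varepsilon$ yields $|u_{n+1}|\leq|\phi|(1+\varepsilon)^{1+\zeta}|u_n|\leq \rho^{n+1}|x|$, and then $|s_{n+1}|\leq |x|\sum_{k=0}^{n+1}\rho^k\leq|x|/(1-\rho)\leq S\varepsilon$. This uniform geometric bound implies that $\sum_n u_n$ converges uniformly on compact subsets of $\{|x|<R\}\subset\mathbb{C}$ to a holomorphic function; such a function admits a convergent power series expansion on the disk, and its restriction to $(-R,R)$ provides both the announced power series representation and the $\mathcal{C}^\infty$ regularity.

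For the two leading Taylor coefficients, Proposition~\ref{prop sum series} specialised to the regular case gives
\[
\sum_{n=0}^{\infty} u_n(x)=\frac{x}{1-\phi}+\frac{(1+\zeta)\phi}{S(1-\phi)}\sum_{n=0}^{\infty}s_n(x)u_n(x)+\frac{\zeta\phi}{S^2(1-\phi)}\sum_{n=0}^{\infty}s_n^2(x)u_n(x).
\]
Linearising the recursion at $x=0$ yields $u_n(x)=\phi^n x+O(x^2)$ and hence $s_n(x)=\tfrac{1-\phi^{n+1}}{1-\phi}x+O(x^2)$, while the uniform bound justifies termwise summation. The third term is $O(x^3)$ because $s_n^2 u_n=O(x^3)$, whereas
\[
\sum_{n=0}^{\infty} s_n(x)u_n(x)=\frac{x^2}{1-\phi}\sum_{n=0}^{\infty}\phi^n(1-\phi^{n+1})+O(x^3)=\frac{x^2}{(1-\phi)^2(1+\phi)}+O(x^3),
\]
using $\sum\phi^n-\phi\sum\phi^{2n}=\frac{1}{1-\phi}-\frac{\phi}{1-\phi^2}=\frac{1}{1-\phi^2}$. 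Substituting reproduces exactly $\tfrac{x}{1-\phi}+\tfrac{(1+\zeta)\phi}{S(1-\phi)^3(1+\phi)}x^2+o(x^2)$. The main obstacle is the self-consistent geometric bound, because controlling $|u_n|$ requires $|s_n|/S$ to stay small while controlling $|s_n|$ requires geometric decay of $|u_n|$; the choice $R=S\varepsilon(1-\rho)$ is tailored to close this circular dependence, after which analyticity and the leading coefficients drop out essentially for free.
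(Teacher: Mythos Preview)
Your proof is correct and follows the same overall architecture as the paper: invoke Proposition~\ref{prop regular} to reduce to the pure polynomial recursion, then use Proposition~\ref{prop sum series} to isolate the first two orders, identify the linearised behaviour $u_n(x)\sim\phi^n x$, $s_n(x)\sim\tfrac{1-\phi^{n+1}}{1-\phi}x$, and sum the resulting geometric series to obtain $\tfrac{1}{(1-\phi)^2(1+\phi)}$.

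The one point where your route differs is the justification of the power-series structure. The paper writes $u_n(x)=P_n(x)$ for polynomials $P_n$, argues that absolute convergence implies unconditional convergence and asserts from this that the limit is a power series; it then extracts the $x^2$ coefficient by factoring $P_n=XQ_n$ and passing to the limit $\tfrac{v(x)}{x^2}\to\sum_n\bigl(\sum_k Q_k(0)\bigr)Q_n(0)$ via a dominated-convergence style bound $|Q_n(x)|\leq r^n$. Your approach instead extends the polynomial recursion to a complex disk and closes the same self-consistent induction there, so that uniform-on-compacts convergence of holomorphic functions immediately yields analyticity, $\mathcal{C}^\infty$ regularity, and the legitimacy of termwise coefficient extraction. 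This is cleaner and more rigorous: unconditional convergence of a series of polynomials at each real point does not by itself guarantee that the limit admits a convergent power-series expansion, whereas your holomorphic argument does. The cost is negligible, since the geometric bound you set up is exactly the real-variable bound already proved in Proposition~\ref{prop regular}, now run on $\{|x|\leq R\}$.
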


\begin{proof}
see Appendix \ref{proof prop regular power series}.
\end{proof}

\section{Market Impact and Metaorders Execution}
\label{Market Impact and Metaorders Execution}

\subsection{\texorpdfstring{$\lambda(t,S) \equiv \lambda$}{zeta = 0}}
\label{zeta = 0}

Let us consider the market impact rule (\ref{main permanent market impact rule}) when $\zeta = 0$
\begin{equation}
\label{main permanent market impact rule zeta = 0}
S_* \rightarrow S_* + \lambda S_* N_*.
\end{equation}
Assume an agent wants to execute incrementally an order of size $N$ with $\mathcal{K} \in \mathbb{N}^*$ child orders of size $n_1, n_2, \dots, n_{\mathcal{K}}$ satisfying
$$ \sum_{k=1}^{\mathcal{K}}{n_k} = N. $$
Without loss of generality we will suppose that $N \in \mathbb{R}_+^*$ and $n_1,\dots,n_{\mathcal{K}} \in \mathbb{R}_+^*$ -- i.e. a buy order, the same holds for a sell order -- such that
\begin{equation}
\label{condition on n_k}
\lim\limits_{\mathcal{K} \rightarrow +\infty}\sup\limits_{1 \leq k \leq \mathcal{K}}{|n_k|} = 0.
\end{equation}
The condition (\ref{condition on n_k}) is needed to ensure that (\ref{main permanent market impact rule zeta = 0}) can be applied to $n_1,\dots,n_{\mathcal{K}}$ for $\mathcal{K}$ large enough. Applying this when $\mathcal{K} = 2$ leads to
$$ S \xrightarrow[]{n_1} S + \lambda S n_1 \xrightarrow[]{n_2} S + \lambda S n_1 + \lambda ( S + \lambda S n_1) n_2, $$
which can be written
$$  S \xrightarrow[]{n_1} S(1 + \lambda n_1) \xrightarrow[]{n_2} S(1 + \lambda (n_1 + n_2) + \lambda^2 n_1 n_2). $$
Let us denote $\xrightarrow[]{n_1, \dots, n_{\mathcal{K}}}$ the contraction of $\xrightarrow[]{n_1} \dots \xrightarrow[]{n_{\mathcal{K}}}$. By a straightforward induction we have for all $\mathcal{K} \in \mathbb{N}^*$,
$$ S \xrightarrow[]{n_1, \dots, n_{\mathcal{K}}} S\left(1 + \sum_{k=1}^{\mathcal{K}}{\lambda^k \sum_{1 \leq i_1 < i_2 < \dots < i_k \leq \mathcal{K}}{n_{i_1}n_{i_2} \dots n_{i_k}}}\right). $$

\begin{defi}
\label{definition discrete execution strategy}
Let $N \in \mathbb{R}_+^*$ and $\mathcal{K} \in \mathbb{N}^*$. A finite sequence $(n_1,\dots,n_{\mathcal{K}}) \in (\mathbb{R}_+^*)^{\mathcal{K}}$ such that $\displaystyle\sum_{k=1}^{\mathcal{K}}{n_k} = N$ is said to be an $(N,\mathcal{K})-$\textbf{execution strategy}. An $(N,\mathcal{K})-$execution strategy is considered \textbf{admissible} if it satisfies (\ref{condition on n_k}).
\end{defi}

\begin{defi}
Let $(n_1,\dots,n_{\mathcal{K}})$ an $(N,\mathcal{K})-$execution strategy. Let us denote by $$ S_{N,k}(n_1,\dots,n_{\mathcal{K}}) :=  S\left(1 + \sum_{i=1}^{k}{\lambda^i \sum_{1 \leq j_1 < j_2 < \dots < j_i \leq k}{n_{j_1}n_{j_2} \dots n_{j_i}}}\right) $$ 
the $k-th$ price of this $(N,\mathcal{K})-$execution strategy for $k \in \llbracket 0,\mathcal{K} \rrbracket$. Then let us set its \textbf{market impact} by
$$ I_{N,\mathcal{K}}(n_1,\dots,n_{\mathcal{K}}) := S_{N,\mathcal{K}}(n_1,\dots,n_{\mathcal{K}}) - S $$
and its \textbf{average execution price} by
$$ \overline{S}_{N,\mathcal{K}}(n_1,\dots,n_{\mathcal{K}}) := \frac{1}{N}\sum_{k=1}^{\mathcal{K}}{n_k S_{N,k-1}(n_1,\dots,n_{\mathcal{K}})}. $$
\end{defi}

\begin{prop}
\label{invariance market impact under permutation}
Let $N \in \mathbb{R}_+^*$ and $\mathcal{K} \in \mathbb{N}^*$. The market impact of an $(N,\mathcal{K})-$execution strategy depends only on the sizes of the $\mathcal{K}$ child orders. The order in which the child orders are executed does not affect the final value of
$I_{N,\mathcal{K}}(n_1,\dots,n_{\mathcal{K}})$.
\end{prop}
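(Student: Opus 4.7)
The plan is to observe that the explicit formula
$$ S_{N,\mathcal{K}}(n_1,\dots,n_{\mathcal{K}}) = S\left(1 + \sum_{k=1}^{\mathcal{K}} \lambda^k \sum_{1 \leq i_1 < \dots < i_k \leq \mathcal{K}} n_{i_1} n_{i_2} \dots n_{i_k}\right) $$
already established just above the proposition expresses $S_{N,\mathcal{K}}$ in terms of the elementary symmetric polynomials $e_k(n_1,\dots,n_{\mathcal{K}})$, which are by definition invariant under any permutation of the arguments. Hence $I_{N,\mathcal{K}}(n_1,\dots,n_{\mathcal{K}}) = S_{N,\mathcal{K}}(n_1,\dots,n_{\mathcal{K}}) - S$ is a symmetric function of $(n_1,\dots,n_{\mathcal{K}})$.

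To make this transparent and to avoid any appeal to the theory of symmetric polynomials, I would first rewrite the above formula in a product form. Under the market impact rule (\ref{main permanent market impact rule zeta = 0}), each child order acts multiplicatively on the current price: $S_* \to S_*(1 + \lambda n)$. A one line induction on $\mathcal{K}$ therefore yields
$$ S_{N,\mathcal{K}}(n_1,\dots,n_{\mathcal{K}}) = S \prod_{k=1}^{\mathcal{K}}(1 + \lambda n_k), $$
which indeed agrees with the given expression after expanding the product and collecting coefficients of $\lambda^k$. Since multiplication in $\mathbb{R}$ is commutative, the product does not depend on the order of the factors, and the conclusion
$$ I_{N,\mathcal{K}}(n_1,\dots,n_{\mathcal{K}}) = S\left(\prod_{k=1}^{\mathcal{K}}(1 + \lambda n_k) - 1\right) $$
is manifestly invariant under any permutation $\sigma$ of $\{1,\dots,\mathcal{K}\}$, i.e. $I_{N,\mathcal{K}}(n_{\sigma(1)},\dots,n_{\sigma(\mathcal{K})}) = I_{N,\mathcal{K}}(n_1,\dots,n_{\mathcal{K}})$.

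There is no real obstacle in this proof: the entire content of the statement is the commutativity of the real product $\prod_k(1+\lambda n_k)$, and the only thing to verify is the product form itself, which is an elementary induction already implicit in the derivation preceding the proposition. Note that this symmetry is a specific feature of $\zeta = 0$ (the purely multiplicative rule); the analogous statement for $\zeta = 1$ is not expected to hold, which is consistent with the paper's separation of the two cases.
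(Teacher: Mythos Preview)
Your proof is correct and follows essentially the same idea as the paper's: both rest on the observation that the formula for $I_{N,\mathcal{K}}$ is a symmetric function of $(n_1,\dots,n_{\mathcal{K}})$. The paper's proof is extremely terse (it simply asserts $I_{N,\mathcal{K}}(n_{\sigma(1)},\dots,n_{\sigma(\mathcal{K})})=I_{N,\mathcal{K}}(n_1,\dots,n_{\mathcal{K}})$ without further comment), whereas you make the reason explicit via the product form $S\prod_k(1+\lambda n_k)$, which the paper only introduces later in the proof of Theorem~\ref{NK execution strategies thm - inequalities}. Your closing remark that exact permutation invariance is specific to $\zeta=0$ is also correct: for $\zeta=1$ the paper recovers an analogous structure only after the approximation $(1+x)^2\approx 1+2x$.
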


\begin{proof}
Let $(n_1,\dots,n_\mathcal{K})$ an $(N,\mathcal{K})-$exection strategy and $\sigma \in \mathcal{S}_{\mathcal{K}}$ the set of the permutations of $\llbracket 1,\mathcal{K}\rrbracket$. The result is a straightforward consequence of the fact that for any permutation $\sigma \in \mathcal{S}_{\mathcal{K}}$, $I_{N,\mathcal{K}}\left(n_{\sigma(1)},\dots,n_{\sigma(\mathcal{K})}\right) = I_{N,\mathcal{K}}(n_1,\dots,n_{\mathcal{K}}).$
\end{proof}

Proposition \ref{invariance market impact under permutation} has shown that for a given $(N,\mathcal{K})-$execution strategy any permutation of $(n_1,\dots,n_{\mathcal{K}})$ is similar in terms of market impact. The point underlined by Proposition \ref{invariance market impact under permutation} is in fact a well known property of market impact which appears in the litterature through the \textit{square root formula} or other related formulas. The square root formula expresses the final market impact of a metaorder as a function of its size but it doest not tell much about the dynamics itself of the metaorder.


The following result (Theorem \ref{NK execution strategies thm - inequalities}) shows that among all the $(N,\mathcal{K})-$execution strategies the one which impacts the most the market is the most predictable one. Indeed given a metaorder size $N$ and $\mathcal{K}$ child orders the first execution strategy that comes in mind is to split the metaorder in lots of equal size. But doing that could be considerably increase the probability to be detected by the market makers especially when $\mathcal{K}$ becomes large enough. Hence it seems normal that the most expensive strategy in terms of market impact and average execution price corresponds to the most obvious and the less clever.

\begin{thm}
\label{NK execution strategies thm - inequalities}
Let $N \in \mathbb{R}_+^*$ and $\mathcal{K} \in \mathbb{N}^*$. The market impact and the average execution price of an $(N,\mathcal{K})-$execution strategy are bounded and reach their upper bound if, and only if the strategy is equally-sized i.e. $n_1 = \dots = n_{\mathcal{K}} = \displaystyle\frac{N}{\mathcal{K}}$. Besides the following inequalities hold
\begin{equation}
\label{NK execution strategies thm inequality price}
S + \lambda S N \leq S_{N,\mathcal{K}}(n_1,\dots,n_{\mathcal{K}}) \leq S e^{\lambda N}
\end{equation}
and
\begin{equation}
\label{NK execution strategies thm inequality average execution price}
S + \frac{1}{2}\lambda S \left(N - \sup\limits_{1 \leq k \leq \mathcal{K}}{n_k}\right)  \leq \overline{S}_{N,\mathcal{K}}(n_1,\dots,n_{\mathcal{K}}) \leq S \frac{e^{\lambda N} - 1}{\lambda N}.
\end{equation}
\end{thm}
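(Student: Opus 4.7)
The plan is to first rewrite $S_{N,\mathcal{K}}$ in a more tractable form, derive a one-line identity relating $\overline{S}_{N,\mathcal{K}}$ to $S_{N,\mathcal{K}}$, and then deduce all four inequalities as well as the extremality statement from elementary estimates on the product $\prod_{k}(1+\lambda n_k)$.

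First, I would recognize that the expansion
$$ S_{N,\mathcal{K}}(n_1,\dots,n_{\mathcal{K}}) = S\left(1 + \sum_{i=1}^{\mathcal{K}}{\lambda^i \, e_i(n_1,\dots,n_{\mathcal{K}})}\right) $$
(where $e_i$ denotes the $i$-th elementary symmetric polynomial) is nothing but the expanded form of a product. Equivalently, a direct induction on $\mathcal{K}$ using the update rule $S_{N,k} = S_{N,k-1}(1 + \lambda n_k)$ yields
$$ S_{N,\mathcal{K}}(n_1,\dots,n_{\mathcal{K}}) = S\prod_{k=1}^{\mathcal{K}}(1 + \lambda n_k). $$
This already re-proves Proposition \ref{invariance market impact under permutation} (the product is symmetric in the $n_k$'s) and makes the analysis much easier.

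Next, I would observe the telescoping identity
$$ S_{N,k} - S_{N,k-1} = \lambda n_k S_{N,k-1}, $$
which after summing from $k=1$ to $\mathcal{K}$ gives $S_{N,\mathcal{K}} - S = \lambda \sum_k n_k S_{N,k-1} = \lambda N \, \overline{S}_{N,\mathcal{K}}$, i.e.
$$ \overline{S}_{N,\mathcal{K}}(n_1,\dots,n_{\mathcal{K}}) = \frac{S_{N,\mathcal{K}}(n_1,\dots,n_{\mathcal{K}}) - S}{\lambda N}. $$
This identity is the key link: any bound on $S_{N,\mathcal{K}}$ immediately transports to one on $\overline{S}_{N,\mathcal{K}}$, and the two upper bounds in the theorem statement match under this transport.

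The remaining bounds then reduce to classical facts about $\prod(1+\lambda n_k)$ subject to $\sum n_k = N$, $n_k>0$. For the upper bound (\ref{NK execution strategies thm inequality price}), the elementary inequality $1+x\le e^x$ applied term by term yields $\prod(1+\lambda n_k)\le e^{\lambda N}$, while for the lower bound, expanding the product and keeping only the linear term gives $\prod(1+\lambda n_k)\ge 1+\lambda N$ (all higher elementary symmetric polynomials being non-negative). To identify the maximizer, I would take logs and use the strict concavity of $x\mapsto \log(1+\lambda x)$ on $\mathbb{R}_+$ together with Jensen's inequality: $\sum_k \log(1+\lambda n_k)\le \mathcal{K}\log(1+\lambda N/\mathcal{K})$ with equality iff $n_1=\cdots=n_{\mathcal{K}}=N/\mathcal{K}$; hence both $I_{N,\mathcal{K}}$ and (by the identity above) $\overline{S}_{N,\mathcal{K}}$ attain their strategy-wise maximum exactly at the equally-sized strategy, and this maximum is bounded by $Se^{\lambda N}$ and $S(e^{\lambda N}-1)/(\lambda N)$ respectively.

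The only bound requiring a bit more care is the lower bound (\ref{NK execution strategies thm inequality average execution price}) for the average execution price, because it is strictly sharper than what the simple identity gives. Here I would plug the sharp lower bound $S_{N,k-1}\ge S\bigl(1+\lambda\sum_{i<k} n_i\bigr)$ (again from $\prod\ge 1+\sum$) into the defining sum:
$$ \overline{S}_{N,\mathcal{K}} \;\ge\; S + \frac{\lambda S}{N}\sum_{k=1}^{\mathcal{K}} n_k \sum_{i<k} n_i \;=\; S + \frac{\lambda S}{2N}\Bigl(N^2 - \sum_{k=1}^{\mathcal{K}} n_k^2\Bigr), $$
using the standard identity $2\sum_{i<k}n_i n_k = (\sum n_k)^2 - \sum n_k^2$. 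The estimate $\sum n_k^2 \le (\sup_k n_k)\sum n_k = N\sup_k n_k$ then yields exactly $\overline{S}_{N,\mathcal{K}}\ge S+\tfrac12 \lambda S(N-\sup_k n_k)$. This last step is the main technical point, but it is genuinely short; the real conceptual content of the proof is the product formula and the telescoping identity, after which everything is routine.
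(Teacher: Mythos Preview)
Your proof is correct and in fact streamlines the paper's argument. The decisive difference is your telescoping identity
\[
\overline{S}_{N,\mathcal{K}} \;=\; \frac{S_{N,\mathcal{K}}-S}{\lambda N},
\]
which the paper does not exploit. Because of it, you only need to maximize the single quantity $\prod_k(1+\lambda n_k)$ under $\sum_k n_k=N$ (done via Jensen and the strict concavity of $\log(1+\lambda\,\cdot)$), and both upper bounds together with the ``if and only if'' for the equally-sized strategy follow at once. The paper, by contrast, runs two separate extremality arguments: Maclaurin's inequalities for $S_{N,\mathcal{K}}$, and then an independent Lagrange-multiplier computation on the simplex for $\overline{S}_{N,\mathcal{K}}$. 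Your route is shorter and more transparent; the paper's route has the minor advantage of citing named inequalities (Maclaurin) and of making the constrained-optimization structure explicit, but neither is needed once the telescoping link is in hand. For the lower bound on $\overline{S}_{N,\mathcal{K}}$ the two arguments coincide: both insert $S_{N,k-1}\ge S(1+\lambda\sum_{i<k}n_i)$, rewrite $\sum_k n_k\sum_{i<k}n_i=\tfrac12(N^2-\sum_k n_k^2)$, and finish with $\sum_k n_k^2\le N\sup_k n_k$.
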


\begin{proof}
see Appendix \ref{proof NK execution strategies thm - inequalities}.
\end{proof}

The lower bound of (\ref{NK execution strategies thm inequality price}) corresponds to the linear market impact model presented in \cite{loeper2018option} and the upper bound to the one presented in \cite{abergel2017option}. One can notice that these two bounds do not depend of the execution strategy chosen -- as they are independant of the sizes and the number $\mathcal{K}$ of the child orders -- but only of the final size $N$ of the order. This has the advantage of avoiding arbitrage opportunities, as well as not being sensitive to the hedging frequency. One can notice that $ S_{N, \mathcal{K}}(n_1,\dots,n_{\mathcal{K}}) \approx S + \gamma\lambda N S$ when $N \rightarrow 0$ in agreement with (\ref{main permanent market impact rule zeta = 0}). Hence the lower bound of (\ref{NK execution strategies thm inequality price}) can be reached when $N$ is small enough. More interesting, regarding metaorders with a size $N$ much larger, we will show that the upper bound of (\ref{NK execution strategies thm inequality price}) corresponds to the case when the number of child orders $\mathcal{K}$ goes to infinity. This result seems pretty obvious from a market making perspective. Indeed when $\mathcal{K} \rightarrow +\infty$ the probability that the market makers detect the metaorder is going to converge towards 1 and the market impact is going to attain its peak value.

\begin{thm}
\label{NK execution strategies thm - limits}
Let $(n_1,\dots,n_{\mathcal{K}})$ an admissible $(N,\mathcal{K})-$execution strategy i.e. such that
$$ \lim\limits_{\mathcal{K} \rightarrow +\infty}\sup\limits_{1 \leq k \leq \mathcal{K}}{|n_k|} = 0. $$
Then we have that
\begin{equation}
\label{price impact max}
\lim\limits_{\mathcal{K} \rightarrow +\infty}{S_{N,\mathcal{K}}(n_1,\dots,n_{\mathcal{K}})} = S e^{\lambda N}
\end{equation}
and
\begin{equation}
\label{average execution price impact max}
\lim\limits_{\mathcal{K} \rightarrow +\infty}{\overline{S}_{N,\mathcal{K}}(n_1,\dots,n_{\mathcal{K}})} = S \frac{e^{\lambda N} - 1}{\lambda N}.
\end{equation}

Hence when $\mathcal{K}$ goes to infinity the market impact attains its peak value. Furthermore the market impact and the average execution price do not depend longer of the execution strategy chosen. This implies that arbitrage opportunities vanish when the trading frequency is going to infinity.
\end{thm}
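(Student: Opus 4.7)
The plan is to recognize that the bracket appearing in the definition of $S_{N,\mathcal{K}}$ is exactly the generating function of the elementary symmetric polynomials in $(n_1,\dots,n_{\mathcal{K}})$, so that
$$ S_{N,\mathcal{K}}(n_1,\dots,n_{\mathcal{K}}) = S \prod_{k=1}^{\mathcal{K}}(1 + \lambda n_k). $$
This reduces both limits to a question about an infinite product with factors $1+\lambda n_k$ whose exponents sum to $\lambda N$ and whose sup tends to $0$.

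For the limit \eqref{price impact max}, I would take logarithms. Since $\sup_k|n_k|\to 0$, for $\mathcal{K}$ large enough $|\lambda n_k|\le 1/2$ for every $k$, and the elementary bound $|\log(1+x)-x|\le x^2$ valid for such $x$ gives
$$ \left|\sum_{k=1}^{\mathcal{K}}\log(1+\lambda n_k) - \lambda N\right| \le \lambda^2 \sum_{k=1}^{\mathcal{K}} n_k^2 \le \lambda^2 N \sup_{1\le k\le\mathcal{K}}|n_k|\xrightarrow[\mathcal{K}\to+\infty]{} 0, $$
where I used $\sum_k n_k=N$ and positivity of the $n_k$. Exponentiating yields $\prod_k(1+\lambda n_k)\to e^{\lambda N}$, i.e.\ $S_{N,\mathcal{K}}\to Se^{\lambda N}$.

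For the limit \eqref{average execution price impact max}, the cleanest route is a telescoping identity. From the recursion $S_{N,k} = (1+\lambda n_k)\,S_{N,k-1}$ we get $S_{N,k}-S_{N,k-1}=\lambda n_k S_{N,k-1}$, hence
$$ \sum_{k=1}^{\mathcal{K}} n_k\, S_{N,k-1}(n_1,\dots,n_{\mathcal{K}}) = \frac{1}{\lambda}\sum_{k=1}^{\mathcal{K}}\bigl(S_{N,k}-S_{N,k-1}\bigr) = \frac{S_{N,\mathcal{K}}-S}{\lambda}, $$
so that $\overline{S}_{N,\mathcal{K}}(n_1,\dots,n_{\mathcal{K}}) = (S_{N,\mathcal{K}}-S)/(N\lambda)$. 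Combining this closed form with the limit just established gives $\overline{S}_{N,\mathcal{K}}\to S(e^{\lambda N}-1)/(\lambda N)$.

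The final statement about the vanishing of arbitrage opportunities as the trading frequency grows is then an immediate corollary: both limits depend only on $N$ (and on $S,\lambda$), not on the particular choice of child orders $(n_1,\dots,n_{\mathcal{K}})$ subject to the admissibility condition, so any two admissible refinements of a metaorder of size $N$ produce the same asymptotic impact and average price, precluding arbitrage by reshuffling execution schedules in the high-frequency limit. There is no real obstacle here; the only delicate point is the uniform smallness needed to legitimately expand the logarithm, which is exactly what condition \eqref{condition on n_k} guarantees.
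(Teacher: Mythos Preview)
Your argument is correct. For the first limit \eqref{price impact max} you proceed exactly as the paper does: rewrite the bracket as $\prod_{k=1}^{\mathcal K}(1+\lambda n_k)$, take logarithms, and control $\sum_k\bigl(\lambda n_k-\log(1+\lambda n_k)\bigr)$ by $\lambda^2\sum_k n_k^2\le \lambda^2 N\sup_k n_k$. The only cosmetic difference is the constant in the quadratic bound on $\log(1+x)-x$.

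For the second limit \eqref{average execution price impact max} your route is genuinely different from the paper's and cleaner. The paper interprets $(e^{\lambda N}-1)/\lambda$ as $\int_0^N e^{\lambda n}\,\mathrm{d}n$ and controls
\[
\left|\int_0^N e^{\lambda n}\,\mathrm{d}n-\sum_{k=1}^{\mathcal K}n_k\prod_{i=1}^{k-1}(1+\lambda n_i)\right|
\]
by a triangle inequality: one piece is a Riemann-sum error for the partition with increments $n_k$, the other compares $\prod_{i<k}(1+\lambda n_i)$ with $e^{\lambda\sum_{i<k}n_i}$ via the same log estimate as above. You bypass all of this with the exact telescoping identity
\[
\sum_{k=1}^{\mathcal K}n_k\,S_{N,k-1}=\frac{1}{\lambda}\sum_{k=1}^{\mathcal K}(S_{N,k}-S_{N,k-1})=\frac{S_{N,\mathcal K}-S}{\lambda},
\]
which gives the closed form $\overline{S}_{N,\mathcal K}=(S_{N,\mathcal K}-S)/(\lambda N)$ valid for every $\mathcal K$, not just in the limit. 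The second limit then follows immediately from the first. This also shows, incidentally, that the upper bound in \eqref{NK execution strategies thm inequality average execution price} is an immediate corollary of \eqref{NK execution strategies thm inequality price} rather than requiring a separate Lagrange-multiplier argument. The paper's Riemann-sum viewpoint has the minor advantage of motivating the continuous-time definitions \eqref{definition market impact N infty}--\eqref{definition average execution price N infty}, but as a proof of the limit your telescoping is strictly more economical.
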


\begin{proof}
see Appendix \ref{proof NK execution strategies thm - limits}.
\end{proof}

\begin{rem}
\label{remark N infty market impact thm}
Theorem \ref{NK execution strategies thm - limits} is actually true for $n_1,\dots,n_{\mathcal{K}} \in \mathbb{R}$ not necessarily positive by weakening (\ref{condition on n_k}) and assuming that for some $\varepsilon > 0$, $\sup\limits_{1 \leq k \leq \mathcal{K}}{|n_k|} = O\left(\left(\displaystyle\frac{1}{{\mathcal{K}}}\right)^{1/2 + \varepsilon}\right).$
\end{rem}

Remark \ref{remark N infty market impact thm} states in fact that when $\mathcal{K} \rightarrow +\infty$ there does not exist any trading strategy that could affect the average execution price or the final market impact. In other words it is not possible to manipulate the price formation process by executing \textit{round trip trades} as defined and studied in \cite{gatheral2010no} and \cite{alfonsi2010optimal} for instance. Theorem \ref{NK execution strategies thm - limits} motivates to extent market impact and average execution price definitions for continuous execution strategies by: 

\begin{defi}
Let $N \in \mathbb{R}$. A function $F : [0,1] \rightarrow \mathbb{R}$ of bounded variation and continuous such that $F(0) = 0$ and $F(1) = N$ is said to be an $(N,+\infty)-$\textbf{execution strategy}. Any $(0,+\infty)-$execution strategy is said to be a \textbf{round trip trade}.
\end{defi}

\begin{defi}
Let $N \in \mathbb{R}$. The \textbf{market impact} of an $(N,+\infty)-$execution strategy is given by
\begin{equation}
\label{definition market impact N infty}
I_{N,+\infty} := S e^{\lambda \int_{0}^{1}{\mathrm{d}F(s)}} - S = S e^{\lambda N} - S
\end{equation}
and the \textbf{average execution price} by
\begin{equation}
\label{definition average execution price N infty}
\overline{S}_{N,+\infty} := \displaystyle\frac{\int_{0}^{1}{S e^{\lambda F(s)}\,\mathrm{d}F(s)}}{\int_{0}^{1}{\lambda\mathrm{d}F(s)}} = S \frac{e^{\lambda N} - 1}{\lambda N}.
\end{equation}
\end{defi}

One can see in (\ref{definition market impact N infty})-(\ref{definition average execution price N infty}) how the choice of the continuous execution strategy $F$ does not affect the market impact and the average execution price in agreement with Theorem \ref{NK execution strategies thm - limits}.

\subsection{\texorpdfstring{$\lambda(t,S) \equiv \lambda S$}{zeta = 1}}
\label{zeta = 1}

Let us consider now the market impact rule (\ref{main permanent market impact rule}) when $\zeta = 1$
\begin{equation}
\label{main permanent market impact rule zeta = 1}
S_* \rightarrow S_* + \lambda S_*^2 N_*.
\end{equation}
Let us show that under the condition (\ref{condition on n_k}) all the results of this case can be derived from the previous one by considering the following approximation 
$$ (1+x)^2 \approx 1 + 2x, $$
when $x$ is small enough. When $\mathcal{K} = 2$ we have
$$  S \xrightarrow[]{n_1} S + \lambda S^2 n_1 \xrightarrow[]{n_2} S + \lambda S^2 n_1 + \underbrace{\lambda S^2 (1+ \lambda S n_1)^2 n_2}_{\approx \lambda S^2 (1+ 2\lambda S n_1) n_2} $$
leading approximately to
$$  S \xrightarrow[]{n_1} S + \lambda S^2 n_1 \xrightarrow[]{n_2} S \left(1 + \frac{1}{2} (2\lambda S) (n_1 + n_2) + \frac{1}{2}  (2\lambda S)^2 n_1 n_2 \right). $$
A straightforward induction gives for all $\mathcal{K} \in \mathbb{N}^*$,
$$ S \xrightarrow[]{n_1, \dots, n_{\mathcal{K}}} S\left(1 + \frac{1}{2} \sum_{k=1}^{\mathcal{K}}{(2\lambda S)^k \sum_{1 \leq i_1 < i_2 < \dots < i_k \leq \mathcal{K}}{n_{i_1}n_{i_2} \dots n_{i_k}}}\right). $$
Hence all the results derived in the previous section still hold. Particularly Theorems \ref{NK execution strategies thm - inequalities} and \ref{NK execution strategies thm - limits} can now be written

\begin{thm}
\label{NK execution strategies thm - inequalities - zeta = 1}
Let $N \in \mathbb{R}_+^*$ and $\mathcal{K} \in \mathbb{N}^*$. The market impact and the average execution price of an $(N,\mathcal{K})-$execution strategy are bounded and reach their upper bound if, and only if the strategy is equally-sized i.e. $n_1 = \dots = n_{\mathcal{K}} = \displaystyle\frac{N}{\mathcal{K}}$. Besides the following inequalities hold
\begin{equation}
\label{NK execution strategies thm inequality price - zeta = 1}
S + \lambda S^2 N \leq S_{N,\mathcal{K}}(n_1,\dots,n_{\mathcal{K}}) \leq \frac{S}{2}(1 + e^{2\lambda S N})
\end{equation}
and
\begin{equation}
\label{NK execution strategies thm inequality average execution price - zeta = 1}
S + \frac{1}{2}\lambda S^2 \left(N - \sup\limits_{1 \leq k \leq \mathcal{K}}{n_k}\right)  \leq \overline{S}_{N,\mathcal{K}}(n_1,\dots,n_{\mathcal{K}}) \leq \frac{S}{2} \left(1 + \frac{e^{2\lambda S N} - 1}{2\lambda S N}\right).
\end{equation}
\end{thm}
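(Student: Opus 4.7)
The plan is to exploit the structural parallel with Theorem \ref{NK execution strategies thm - inequalities} by rewriting the spot price after $k$ child orders in a closed product form, after which all four inequalities reduce to a single application of AM--GM together with a simple expansion identity, exactly as in the $\zeta = 0$ case but with $\lambda$ replaced by $2\lambda S$. First, starting from the induction already spelled out in the paper text, I would use the elementary--symmetric--polynomial identity $1 + \sum_{k \geq 1}{x^k e_k(n_1,\ldots,n_{\mathcal{K}})} = \prod_{k}{(1 + x n_k)}$ applied at $x = 2\lambda S$ to obtain, for every $k \in \llbracket 0, \mathcal{K} \rrbracket$, the closed form
$$ S_{N,k}(n_1,\ldots,n_{\mathcal{K}}) = \frac{S}{2}\left(1 + \prod_{j=1}^{k}{(1 + 2\lambda S n_j)}\right). $$

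For the price inequalities (\ref{NK execution strategies thm inequality price - zeta = 1}), the lower bound follows by expanding the product and keeping only the constant and linear terms: since each $n_j \geq 0$, one has $\prod_{j}{(1 + 2\lambda S n_j)} \geq 1 + 2\lambda S \sum_j{n_j} = 1 + 2\lambda S N$, which gives $S_{N,\mathcal{K}} \geq S + \lambda S^2 N$. The upper bound is a single application of the arithmetic--geometric mean inequality under the constraint $\sum n_k = N$: $\prod_j{(1+2\lambda S n_j)} \leq \bigl(1 + 2\lambda S N/\mathcal{K}\bigr)^{\mathcal{K}} \leq e^{2\lambda S N}$. The equality case in AM--GM identifies the equal-split $n_1 = \cdots = n_{\mathcal{K}} = N/\mathcal{K}$ as the unique maximizer, yielding the ``if and only if'' statement.

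For the average-execution-price inequalities (\ref{NK execution strategies thm inequality average execution price - zeta = 1}), substituting the closed form into the definition of $\overline{S}_{N,\mathcal{K}}$ and then exploiting the telescoping identity
$$ \sum_{k=1}^{\mathcal{K}}{2\lambda S n_k \prod_{j=1}^{k-1}{(1 + 2\lambda S n_j)}} = \prod_{j=1}^{\mathcal{K}}{(1 + 2\lambda S n_j)} - 1 $$
yields the compact expression $\overline{S}_{N,\mathcal{K}} = \frac{S}{2} + \frac{1}{4\lambda N}\bigl(\prod_{j=1}^{\mathcal{K}}(1+2\lambda S n_j) - 1\bigr)$, after which the same AM--GM step furnishes the upper bound $\frac{S}{2}\bigl(1 + (e^{2\lambda S N}-1)/(2\lambda S N)\bigr)$. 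For the lower bound, I would estimate $S_{N,k-1} \geq S + \lambda S^2 \sum_{j<k}{n_j}$ termwise from the same expansion of the product, then use the classical symmetric-sum identity $\sum_k{n_k \sum_{j<k}{n_j}} = \tfrac{1}{2}(N^2 - \sum_k{n_k^2}) \geq \tfrac{N}{2}(N - \sup_k{n_k})$ to reach the stated bound.

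The main obstacle is that the induction establishing the product representation rested on the approximation $(1+x)^2 \approx 1 + 2x$, so the closed form above is strictly valid only modulo higher-order corrections in the $n_k$. Condition (\ref{condition on n_k}) guarantees that these corrections are uniformly $o(n_k)$ as $\mathcal{K} \to \infty$, and the delicate part of a fully rigorous write-up is checking that they do not flip the direction of any of the four stated inequalities --- mirroring the implicit convention adopted throughout Section \ref{zeta = 1}.
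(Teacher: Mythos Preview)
Your proposal is correct and follows the same overall strategy as the paper: reduce the $\zeta=1$ case to the $\zeta=0$ case via the approximation $(1+x)^2 \approx 1+2x$, obtain the closed product form for $S_{N,k}$, and then bound the product $\prod_j (1+2\lambda S n_j)$ from above and below. The paper does not give a separate proof of Theorem~\ref{NK execution strategies thm - inequalities - zeta = 1}; it simply states that after the approximation ``all the results derived in the previous section still hold'', so the actual argument is the proof of Theorem~\ref{NK execution strategies thm - inequalities} transported with $\lambda \mapsto 2\lambda S$. Your lower bounds for both $S_{N,\mathcal K}$ and $\overline S_{N,\mathcal K}$ match that proof essentially line for line.

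Where you genuinely differ is in the upper bound for the average execution price. The paper (Appendix~\ref{proof NK execution strategies thm - inequalities}) maximizes $f(x_1,\dots,x_{\mathcal K})=\sum_k x_k\prod_{i<k}(1+x_i)$ on the simplex via a compactness-plus-boundary argument followed by Lagrange multipliers to locate the critical point at $x_1=\cdots=x_{\mathcal K}$. Your telescoping identity $\sum_k x_k\prod_{i<k}(1+x_i)=\prod_k(1+x_k)-1$ collapses $\overline S_{N,\mathcal K}$ to a monotone function of the single product $\prod_j(1+2\lambda S n_j)$, after which AM--GM (which is the special case of Maclaurin's inequalities that the paper invokes for $S_{N,\mathcal K}$) gives both the upper bound and the equal-split equality case in one stroke. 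This is strictly simpler than the paper's route and yields the same conclusion; it also makes transparent why the maximizers of $S_{N,\mathcal K}$ and $\overline S_{N,\mathcal K}$ coincide, which in the paper's argument requires two separate optimizations. Your final paragraph correctly flags the approximation issue; the paper handles it identically, treating the inequalities as valid under the regime (\ref{condition on n_k}).
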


\begin{thm}
\label{NK execution strategies thm - limits - zeta = 1}
Let $(n_1,\dots,n_{\mathcal{K}})$ an admissible $(N,\mathcal{K})-$execution strategy i.e. such that
$$ \lim\limits_{\mathcal{K} \rightarrow +\infty}\sup\limits_{1 \leq k \leq \mathcal{K}}{|n_k|} = 0. $$
Then we have that
\begin{equation}
\label{price impact max - zeta = 1}
\lim\limits_{\mathcal{K} \rightarrow +\infty}{S_{N,\mathcal{K}}(n_1,\dots,n_{\mathcal{K}})} = \frac{S}{2}(1 + e^{2\lambda S N})
\end{equation}
and
\begin{equation}
\label{average execution price impact max - zeta = 1}
\lim\limits_{\mathcal{K} \rightarrow +\infty}{\overline{S}_{N,\mathcal{K}}(n_1,\dots,n_{\mathcal{K}})} = \frac{S}{2} \left(1 + \frac{e^{2\lambda S N} - 1}{2\lambda S N}\right).
\end{equation}
\end{thm}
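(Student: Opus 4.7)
The plan is to reduce the claim directly to the analogous statement already proved in the $\zeta = 0$ case (Theorem \ref{NK execution strategies thm - limits}), following the explicit bridge drawn at the start of Section \ref{zeta = 1}. Under the local approximation $(1+x)^2 \approx 1 + 2x$ valid in the regime where (\ref{condition on n_k}) holds, the induction formula established in Section \ref{zeta = 1} reads
$$ S_{N,\mathcal{K}}(n_1,\dots,n_{\mathcal{K}}) = S\left(1 + \frac{1}{2}\sum_{k=1}^{\mathcal{K}}(2\lambda S)^k \sum_{1 \leq i_1 < \dots < i_k \leq \mathcal{K}} n_{i_1}\cdots n_{i_k}\right). $$
The first step is to apply the generating-function identity for elementary symmetric polynomials,
$$ \prod_{k=1}^{\mathcal{K}}(1 + 2\lambda S n_k) = 1 + \sum_{k=1}^{\mathcal{K}}(2\lambda S)^k \sum_{1 \leq i_1 < \dots < i_k \leq \mathcal{K}} n_{i_1}\cdots n_{i_k}, $$
to rewrite the previous expression in the compact product form
$$ S_{N,\mathcal{K}}(n_1,\dots,n_{\mathcal{K}}) = \frac{S}{2}\left(1 + \prod_{k=1}^{\mathcal{K}}(1 + 2\lambda S n_k)\right), $$
which is structurally identical to the $\zeta = 0$ product formula $S\prod_{k}(1 + \lambda n_k)$ under the substitution $\lambda \mapsto \tilde{\lambda} := 2\lambda S$.

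Next I would invoke the convergence step from the proof of Theorem \ref{NK execution strategies thm - limits} verbatim, applied to the product $\prod_{k=1}^{\mathcal{K}}(1 + \tilde{\lambda} n_k)$. Concretely, for $\mathcal{K}$ large enough that $\sup_k \tilde{\lambda} n_k < 1/2$, taking logarithms and using $\log(1+x) = x + O(x^2)$ gives
$$ \sum_{k=1}^{\mathcal{K}}\log(1 + \tilde{\lambda} n_k) = \tilde{\lambda} N + O\!\left(\sum_{k=1}^{\mathcal{K}} n_k^2\right) = \tilde{\lambda} N + O\!\left(N \sup_{1\leq k\leq \mathcal{K}} n_k\right), $$
which by (\ref{condition on n_k}) converges to $\tilde{\lambda} N = 2\lambda S N$. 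Exponentiating yields $\prod_{k=1}^{\mathcal{K}}(1+\tilde{\lambda} n_k) \to e^{2\lambda S N}$, and plugging this back into the product form above gives (\ref{price impact max - zeta = 1}).

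For the average execution price (\ref{average execution price impact max - zeta = 1}), the same asymptotic analysis applied at each intermediate rank yields the pointwise convergence $S_{N,k-1} \to \frac{S}{2}\bigl(1 + e^{2\lambda S F_{k-1}}\bigr)$ with $F_{k-1} := \sum_{i<k} n_i$. The weighted average $\overline{S}_{N,\mathcal{K}} = \frac{1}{N}\sum_{k=1}^{\mathcal{K}} n_k S_{N,k-1}$ is then recognized as a Riemann sum for $\frac{1}{N}\int_0^N \frac{S}{2}(1 + e^{2\lambda S f})\,df$, and direct integration of the latter produces exactly the right-hand side of (\ref{average execution price impact max - zeta = 1}).

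The main technical point is to upgrade the pointwise convergence of the intermediate prices $S_{N,k-1}$ to a uniform-in-$k$ estimate, so that the Riemann-sum convergence for $\overline{S}_{N,\mathcal{K}}$ is fully rigorous. The logarithm bound above is however already uniform in the number of factors, since the error is controlled by the global supremum $\sup_{1 \leq k \leq \mathcal{K}} n_k$ and the total volume $N$; this supplies the uniform estimate essentially for free. The remaining piece is the standard Riemann-sum-to-integral convergence for continuous integrands on a bounded interval, so I expect no serious obstruction beyond careful bookkeeping of the residuals coming from the approximation $(1+x)^2 \approx 1 + 2x$ at each iteration, which contribute at most $O(N \sup_k n_k)$ and therefore vanish in the limit.
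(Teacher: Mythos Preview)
Your proposal is correct and follows essentially the same route as the paper: the paper does not give a separate proof of this theorem but simply states, after deriving the formula $S_{N,\mathcal{K}} = S\bigl(1 + \tfrac{1}{2}\sum_k (2\lambda S)^k e_k(n)\bigr)$ via the approximation $(1+x)^2 \approx 1+2x$, that ``all the results derived in the previous section still hold'' under the substitution $\lambda \mapsto 2\lambda S$. Your sketch makes this reduction explicit and then reproduces the log-of-product and Riemann-sum arguments from the proof of Theorem~\ref{NK execution strategies thm - limits}, which is exactly what the paper intends.
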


\section{Market Impact and Option Pricing}
\label{Market Impact and Option Pricing}

\subsection{Pricing equation}
\label{pricing equation}

We consider the framework of covered options, hence we start from a delta-hedged portfolio. Let us assume that the stock price $S$ moves initially by $\mathrm{d}S$ such that $(\mathrm{d}S_n)_{n \in \mathbb{N}}$ is a \textbf{regular} $\phi-$market impact scenario given by (\ref{equation dS}) with
\begin{equation}
\label{equation dS dynamics}
\mathrm{d}S = S(\nu \mathrm{d}t + \sigma \mathrm{d}W_t),
\end{equation}
where $(W_t)_{t \geq 0}$ is a Brownian motion. By following the hedging strategy presented above we move the spot from to $S$ to $S + \tilde{\mathrm{d}S}$ where $\tilde{\mathrm{d}S}$ represents the cumulative market impact at the end of the re-hedging procedure
\begin{equation}
\label{def dS tilde}
  \tilde{\mathrm{d}S} := \sum_{n=0}^{+\infty}{\mathrm{d}S_n},
\end{equation}
hence from (\ref{equation sum dS_n}) we have
\begin{equation}
\label{cumulative market impact}
\tilde{\mathrm{d}S} = \frac{\mathrm{d}S}{1 - \phi} + \frac{1}{S}\frac{(1 + \zeta) \phi}{1 - \phi}\sum_{n=0}^{+\infty}{\left(\sum_{k=0}^{n}{\mathrm{d}S_k}\right) \mathrm{d}S_n} + \frac{1}{S^2}\frac{\zeta \phi}{1 - \phi}\sum_{n=0}^{+\infty}{\left(\sum_{k=0}^{n}{\mathrm{d}S_k}\right)^2 \mathrm{d}S_n},
\end{equation}
which can be rewritten 
$$ \tilde{\mathrm{d}S} = \frac{\mathrm{d}S}{1 - \phi} + \frac{\mu}{S}(\mathrm{d}S)^2 + o((\mathrm{d}S)^2), $$
where $\mu$ is only a function of the parameter $\phi$. Plugging equation (\ref{equation dS dynamics}) in the previous equation leads to
$$ \tilde{\mathrm{d}S} = \frac{\mathrm{d}S}{1 - \phi} + \mu\sigma^2 S \mathrm{d}t + o(\mathrm{d}t), $$
as $(\mathrm{d}S)^2 = \sigma^2 S^2 \mathrm{d}t + o(\mathrm{d}t)$. This gives at the leading order
\begin{equation}
\label{equation dS tilde}
\tilde{\mathrm{d}S} = \frac{\mathrm{d}S}{1 - \phi} + \tilde{\nu} S \mathrm{d}t + o(\mathrm{d}t).
\end{equation}
The value $V$ of the hedging portfolio containing $\Delta(t,S)$ stocks at $t$ evolves as
$$ \mathrm{d}V = \Delta \tilde{\mathrm{d}S} + \mathcal{R}$$
as $S$ moves to $S + \tilde{\mathrm{d}S}$ with
\begin{align*}
\mathcal{R} &= N \times (\text{Final price of the stocks bought - Average execution price}) \\
            &= N \times \left(S + \displaystyle\sum_{n=0}^{+\infty}{\mathrm{d}S_n} -\displaystyle\frac{\displaystyle\sum_{n=0}^{+\infty}{\Gamma\mathrm{d}S_n \left(S + \displaystyle\sum_{k=0}^{n}{\mathrm{d}S_k}\right)}}{\displaystyle\sum_{n=0}^{+\infty}{\Gamma\mathrm{d}S_n}}\right) \\
            &= N \times \left(\displaystyle\sum_{n=0}^{+\infty}{\mathrm{d}S_n} -\displaystyle\frac{\displaystyle\sum_{n=0}^{+\infty}{\Gamma \mathrm{d}S_n \left(\displaystyle\sum_{k=0}^{n}{\mathrm{d}S_k}\right)}}{\displaystyle\sum_{n=0}^{+\infty}{\Gamma \mathrm{d}S_n}}\right)
\end{align*}
$N = \Gamma \tilde{\mathrm{d}S}$ being the number of stocks bought during the re-hedging procedure. As $\mathrm{d}S \rightarrow 0$, we have that $\sup\limits_{n \in \mathbb{N}}{|\mathrm{d}S_n|} \rightarrow 0$. Besides the proof given of Theorem \ref{thm phi} shows that for $n$ large enough there exists $r \in (0,1)$ such that $|\mathrm{d}S_{n+1}| < r |\mathrm{d}S_n|$, hence we can apply Remark \ref{remark N infty market impact thm} and use the results stated in Theorem \ref{NK execution strategies thm - limits}. When $\lambda(t,S) \equiv \lambda$ we have that
$$ \text{Final price of the stocks bought} = S + S (e^{\lambda N} - 1) $$
and
$$ \text{Average execution price} = S + S \left(\frac{e^{\lambda N} - 1}{\lambda N} - 1\right), $$
giving at the leading order
$$ \mathcal{R} =  \frac{1}{2} \lambda S N^2.$$
The same can be done when $\lambda(t,S) \equiv \lambda S$ and gives
$$ \text{Final price of the stocks bought} = S + \frac{S}{2}(e^{2\lambda S N} - 1) $$
and
$$ \text{Average execution price} = S + \frac{S}{2} \left(\frac{e^{2\lambda S N} - 1}{2\lambda S N} - 1\right), $$
impliying also at the leading order
$$ \mathcal{R} =  \frac{1}{2}\lambda S^2 N^2.$$
Finally one can write in any case
$$ \mathcal{R} = \frac{1}{2} \lambda N^2 S^{1 + \zeta} $$
with $N = \Gamma \tilde{\mathrm{d}S}$. This gives
\begin{equation}
\label{equation dV}
\mathrm{d}V = \Delta \tilde{\mathrm{d}S} + \frac{1}{2} \lambda S^{1 + \zeta} (\Gamma \tilde{\mathrm{d}S})^2.
\end{equation}
Besides assuming that the option is sold at its fair price, we have $\mathrm{d}V = \mathrm{d}u$ with at the leading order, for $S$ moving to $S + \tilde{\mathrm{d}S}$,
\begin{equation}
\label{equation du}
\mathrm{d}u = \partial_t{u} \,\mathrm{d}t + \partial_s{u} \,\tilde{\mathrm{d}S} + \frac{1}{2}\partial_{ss}{u} \,(\tilde{\mathrm{d}S})^2 + o(\mathrm{d}t).
\end{equation}
Therefore from equations (\ref{equation dV}) and (\ref{equation du}) we get
\begin{equation}
\label{equation dV = du}
\partial_t{u}\,\mathrm{d}t + \frac{1}{2}(\tilde{\mathrm{d}S})^2[\partial_{ss}u - \lambda S^{1 + \zeta}(\partial_{ss}u)^2] = o(\mathrm{d}t).
\end{equation}
If we plugg the spot dynamics (\ref{equation dS dynamics}) in the expression of $\tilde{\mathrm{d}S}$ (\ref{equation dS tilde}), we obtain at the leading order
\begin{equation}
\label{equation dS tilde square}
(\tilde{\mathrm{d}S})^2 = \frac{\sigma^2 S^2}{(1 - \phi)^2} \,\mathrm{d}t + o(\mathrm{d}t).
\end{equation}
By reinjecting (\ref{equation dS tilde square}) in (\ref{equation dV = du}), we obtain the pricing equation
\begin{align*}
\partial_t{u} + \frac{1}{2}\sigma^2 s^2 \partial_{ss}u \frac{1}{1 - \phi} &= 0, \numberthis \label{pricing equation} \\
\phi &= \lambda s^{1 + \zeta} \partial_{ss}u.
\end{align*}

\subsection{Perfect replication}

Any contingent claim can be perfectly replicated at zero cost, as long as one can exhibit a solution to the pricing equation (\ref{pricing equation}). This implies that we need to study the parabolicity of the given pricing equation. Furthermore we have established that only market impact scenarios admissible from a trading perspective have to be considered which leads according to Theorem \ref{thm phi} to the constraint
\begin{equation}
\label{constraint pricing equation}
\sup\limits_{(t, s) \in [0,T] \times \mathbb{R}_+}{\lambda s^{1 + \zeta} \partial_{ss}u} < 1,
\end{equation}
with $T$ the maturity of the option. In fact we have the following result

\begin{thm}
\label{parabolicity pricing equation}

There holds the two following propositions:
\begin{enumerate}[(i)]
\item The non-linear backward partial differential operator
$$ \partial_t {\cdot} + \frac{1}{2}\sigma^2 s^2  \frac{\partial_{ss}\cdot}{1 - \lambda s^{1 + \zeta} \partial_{ss} \cdot} $$
is parabolic.
\item Every european style contingent claim with payoff $\Phi$ satisfying the terminal constraint
$$  \sup\limits_{s \in \mathbb{R}_+}{\lambda s^{1 + \zeta} \partial_{ss}\Phi} < 1 $$
can be perfectly replicated via a $\delta-$hedging strategy given by the unique, smooth away from the maturity $T$, solution to equation (\ref{pricing equation}).
\end{enumerate}
\end{thm}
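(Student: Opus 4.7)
For part (i), my plan is to verify parabolicity by direct computation. Writing the operator as $\partial_t u + F(s, \partial_{ss}u)$ with $F(s, p) = \tfrac{1}{2}\sigma^2 s^2 \, p/(1 - \lambda s^{1+\zeta} p)$, I would compute
\[
\frac{\partial F}{\partial p}(s,p) = \frac{\tfrac{1}{2}\sigma^2 s^2}{(1 - \lambda s^{1+\zeta} p)^2},
\]
which is strictly positive on the admissible regime $\lambda s^{1+\zeta} p < 1$. Under the constraint (\ref{constraint pricing equation}), the coefficient of $\partial_{ss}u$ in the linearisation around a candidate solution is therefore strictly positive, and (since we are working backward from $t=T$) the operator is backward parabolic. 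I would also remark that the degeneracy of $s^2$ at $s=0$ is of the Black–Scholes type and can be handled by the usual change of variable $x = \log s$, which is where I would actually phrase the a priori estimates.

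For the existence and uniqueness part of (ii), the plan is to cast (\ref{pricing equation}) as a quasilinear backward parabolic Cauchy problem and to apply standard quasilinear parabolic PDE theory (Ladyzhenskaya–Solonnikov–Uraltseva type results, or the contraction/fixed-point scheme used in \cite{loeper2018option}). Concretely, I would first regularise the payoff $\Phi$ by mollification so that the terminal data are smooth and still satisfy $\lambda s^{1+\zeta} \Phi'' < 1 - \varepsilon$ uniformly; then set up a Banach fixed-point iteration $u^{(k+1)}$ defined as the solution of the \emph{linear} parabolic equation obtained by freezing $\phi$ at $\phi^{(k)} = \lambda s^{1+\zeta} \partial_{ss} u^{(k)}$. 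Standard Schauder estimates on each step give $C^{2+\alpha,1+\alpha/2}$ bounds, and the uniform bound $\phi^{(k)} < 1 - \varepsilon$ (to be propagated, see below) keeps the diffusion coefficient $1/(1-\phi^{(k)})$ uniformly elliptic in space. Passing to the limit and then removing the regularisation yields a classical solution on $[0,T) \times \mathbb{R}_+$, smooth away from $T$; uniqueness follows from a standard comparison principle for the linearised equation.

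The main obstacle, and the step I expect to require the most care, is propagating the admissibility constraint $\phi < 1$ backward from $t = T$ to $t = 0$. Existence of a smooth solution is only meaningful if this constraint is preserved throughout the iteration; otherwise the diffusion coefficient blows up. The natural strategy is to differentiate (\ref{pricing equation}) twice in $s$ to obtain an equation of the form
\[
\partial_t \phi + \mathcal{L}[\phi] + \text{lower-order terms in } \phi = 0,
\]
where $\mathcal{L}$ is a second-order parabolic operator whose coefficients degenerate precisely at $\phi = 1$, and to apply the parabolic maximum principle to $\phi$ (or, equivalently, to $1/(1-\phi)$). The terminal constraint $\sup_s \lambda s^{1+\zeta} \Phi'' < 1$ provides the required bound at $t = T$, and the maximum principle transports it backward. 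This is the same mechanism by which the convergence criterion of Theorem \ref{thm phi} at each point $(t,S)$ is consistent with the global existence of a classical solution.

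Once the PDE is solved, perfect replication is a direct consequence of the derivation already carried out in Section \ref{pricing equation}: starting from a delta-hedged portfolio with $\Delta = \partial_s u$ and letting the hedger react to each exogenous move $\mathrm{d}S$ via the regular market impact scenario $(\mathrm{d}S_n)_{n \in \mathbb{N}}$, equations (\ref{equation dV}) and (\ref{equation du}) together with (\ref{equation dS tilde square}) show that $\mathrm{d}V - \mathrm{d}u = o(\mathrm{d}t)$ precisely when $u$ solves (\ref{pricing equation}). Matching the terminal value $u(T,\cdot) = \Phi$ then gives exact replication at zero cost, closing the argument.
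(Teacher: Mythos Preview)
Your proposal is correct and follows essentially the same approach as the paper: parabolicity via monotonicity of $p \mapsto p/(1-p)$ (your computation of $\partial F/\partial p > 0$ is precisely this check), and propagation of the constraint $\phi < 1$ backward from $t=T$ by a maximum principle for the second derivative. The paper's own proof is very terse and simply defers the existence and the second-derivative maximum principle to the references \cite{wang1992regularityI}, \cite{wang1992regularityII} and \cite{abergel2017option}, whereas you spell out a concrete fixed-point construction with Schauder estimates and an explicit equation for $\phi$; this is additional detail rather than a different route.
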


\begin{proof}
The parabolic nature of the operator is determined by the monotonicity of the function
$$ F : p \in (-\infty, 1) \longmapsto \frac{p}{1 - p}. $$
$F$ is globally increasing, therefore the pricing equation is globally well-posed. Besides, given that the payoff satisfies the terminal constraint, classical results on the maximum principle for the second derivative of the solution ensure that the same constraint is satisfied globally for $t \leq T$ which consitutes in fact the constraint (\ref{constraint pricing equation}). Hence, the perfect replication is possible. Results on maximum principle for the second derivative can be found in \cite{wang1992regularityI} and \cite{wang1992regularityII} for instance. A proof of this result in a more general setting is also given in \cite{abergel2017option}.
\end{proof}

\subsection{SDE formulation}
\label{SDE formulation}

\subsubsection{The system of SDE}
\label{The system of SDE}

Let $(\Omega, \mathcal{F}, (\mathcal{F}_t)_{t \geq 0}, \mathbb{P})$ a filtered probability space supporting an $(\mathcal{F}_t)_{t \geq 0}$ standard Brownian motion $(W_t)_{t \geq 0}$. Considering the result established in Theorem \ref{thm phi}, we will assume the following uniform condition:
\begin{ass}
\label{ass lambda gamma uniformly bounded}
There exists a constant $\varepsilon > 0$, such that for all $t \geq 0$, $\mathbb{P}-$a.s.,
\begin{equation}
\label{equation lambda gamma uniformly bounded}
1 - \lambda(t,S_t) S_t \Gamma_t \geq \varepsilon.
\end{equation}
\end{ass}

Let us consider then the following system of stochastic differential equations:
\begin{align*}
    \mathrm{d}\delta_t &= a_t \mathrm{d}t + \Gamma_t \mathrm{d}S_t, \numberthis \label{delta dynamics} \\
    \frac{\mathrm{d}S_t}{S_t} &= \sigma(t,S_t) \mathrm{d}W_t + \left(\nu(t,S_t) +\frac{\sigma^2(t,S_t) S_t \Gamma_t \big(\lambda(t,S_t) + \partial_x\lambda(t,S_t)\big)}{1 - \lambda(t,S_t) S_t \Gamma_t}\right)\mathrm{d}t  + \lambda(t,S_t) \mathrm{d}\delta_t, \numberthis \label{spot dynamics} \\
    \mathrm{d}V_t &= \delta_t \mathrm{d}S_t + \frac{1}{2} \lambda(t,S_t) S_t \mathrm{d}\langle \delta,\delta \rangle_t. \numberthis\label{hedging portfolio dynamics}
\end{align*}
where $\delta$, $S$ and $V$ are three processes starting respectively from $\delta_0$, $S_0$ and $V_0$ at $t = 0$.
We set
$$ \alpha_t \equiv \alpha(t,S_t,\Gamma_t) := \frac{\tilde{\sigma}(t,S_t) \Gamma_t}{1 - \tilde{\lambda}(t,S_t) \Gamma_t}, $$
and
$$ \beta_t \equiv \beta(t,S_t,\Gamma_t) := \frac{a_t}{1 - \tilde{\lambda}(t,S_t) \Gamma_t} + \frac{\Gamma_t}{1 - \tilde{\lambda}(t,S_t) \Gamma_t}\left(\tilde{\nu}(t,S_t) + \frac{\tilde{\sigma}^2(t,S_t) \Gamma_t \partial_x \tilde{\lambda}(t,S_t)}{1 - \tilde{\lambda}(t,S_t) \Gamma_t}\right), $$
by introducing also for all $t \in [0,T], x \in \mathbb{R}_+$,
\begin{align*}
\tilde{\sigma}(t,x) &:= x \sigma(t,x), \\
\tilde{\nu}(t,x) &:= x \nu(t,x),
\end{align*}
and
$$ \tilde{\lambda}(t,x) := x \lambda(t,x). $$
Hence the system of stochastic differential equations (\ref{delta dynamics})-(\ref{spot dynamics})-(\ref{hedging portfolio dynamics}) can be rewritten
\begin{align*}
    \mathrm{d}\delta_t &= \alpha_t \mathrm{d}W_t + \beta_t \mathrm{d}t , \numberthis \label{delta dynamics 2}\\
    \mathrm{d}S_t &= \tilde\sigma(t,S_t) \mathrm{d}W_t + \big(\tilde{\nu}(t,S_t) + \alpha_t (\tilde{\sigma}\partial_x \tilde{\lambda})(t,S_t)\big) \mathrm{d}t + \tilde{\lambda}(t,S_t) \mathrm{d}\delta_t, \numberthis \label{spot dynamics 2}\\
    \mathrm{d}V_t &= \delta_t \mathrm{d}S_t + \frac{1}{2} \tilde{\lambda}(t,S_t) \alpha_t^2 \mathrm{d}t. \numberthis \label{hedging portfolio dynamics 2}
\end{align*}

\begin{thm}
\label{existence and uniqueness of a strong solution}
Suppose the following assumptions hold:
\begin{itemize}
  \item $(a_t)_{t \geq 0}$ and $(\Gamma_t)_{t \geq 0}$ are uniformly bounded adapted processes
  \item $(\Gamma_t)_{t \geq 0}$ satisfies uniformly the condition (\ref{equation lambda gamma uniformly bounded})
  \item there exists a constant $K > 0$, such that for all $(t,x) \in [0,T] \times \mathbb{R}$,
  $$ |x \sigma(t,x)| + |x \nu(t,x)| \leq K $$
  \item there exists a constant $K > 0$, such that for all $(t,x,y) \in [0,T] \times \mathbb{R} \times \mathbb{R}$,
  $$ |x \sigma(t,x) - y \sigma(t,y)| + |x \nu(t,x) - y \nu(t,y)| \leq K|x-y| $$
  \item $(t,x) \in [0,T] \times \mathbb{R} \longmapsto \lambda(t,x)$ is of class $\mathcal{C}^{1,2}$ and all its partial derivatives are bounded.
  \item $(t,x) \in [0,T] \times \mathbb{R} \longmapsto x \lambda(t,x)$ has all its partial derivatives bounded.
\end{itemize}
Then there exists a unique strong solution to the system of stochastic differential equations (\ref{delta dynamics})-(\ref{spot dynamics})-(\ref{hedging portfolio dynamics}) starting from $(\delta_0, S_0, V_0) \in \mathbb{R}^3$ at $t=0$.
\end{thm}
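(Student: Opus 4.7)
The plan is to exploit the (essentially triangular) structure of the system: the process $(\Gamma_t)_{t\ge 0}$ and $(a_t)_{t\ge 0}$ are given adapted and uniformly bounded, and once $S$ is known, $\delta$ and $V$ are obtained by direct stochastic integration. So the whole problem reduces to an autonomous SDE for $S$ alone, obtained by substituting (\ref{delta dynamics 2}) into (\ref{spot dynamics 2}):
\begin{equation*}
\mathrm{d}S_t = \Sigma(t,S_t,\Gamma_t)\,\mathrm{d}W_t + B(t,S_t,\Gamma_t,a_t)\,\mathrm{d}t,
\end{equation*}
with
\begin{align*}
\Sigma(t,x,\gamma) &:= \tilde\sigma(t,x) + \tilde\lambda(t,x)\,\alpha(t,x,\gamma),\\
B(t,x,\gamma,a) &:= \tilde\nu(t,x) + \alpha(t,x,\gamma)\,(\tilde\sigma\,\partial_x\tilde\lambda)(t,x) + \tilde\lambda(t,x)\,\beta(t,x,\gamma,a).
\end{align*}

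First I would verify the standard Lipschitz and linear growth conditions for $\Sigma$ and $B$ as functions of $x$, uniformly in $(t,\gamma,a)$ on the bounded range where the input processes live. The key building block is the map $x\mapsto 1-\tilde\lambda(t,x)\gamma$, which by Assumption \ref{ass lambda gamma uniformly bounded} is bounded below by $\varepsilon>0$ along $(S_t,\Gamma_t)$; combined with the assumed boundedness of $\gamma=\Gamma_t$ and of the derivatives of $\tilde\lambda$, its reciprocal is bounded and Lipschitz in $x$ on the relevant set. Multiplying this Lipschitz bounded factor by $\tilde\sigma$, which is itself bounded and Lipschitz by hypothesis, yields that $\alpha(\cdot,\cdot,\Gamma_t)$ is bounded and Lipschitz in $x$; the same reasoning handles $\beta$ (the extra factor $\partial_x\tilde\lambda$ is bounded by assumption, and $a_t$ contributes only a uniformly bounded additive term). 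Assembling these, $\Sigma$ and $B$ are bounded and Lipschitz in $x$ uniformly in $(t,\omega)$.

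With this in hand, the classical strong existence and uniqueness theorem for SDEs with random, uniformly Lipschitz coefficients (e.g.\ \cite{karatzas1998brownian}, Thm.\ 5.2.9) delivers a unique strong solution $(S_t)_{t\in[0,T]}$ starting from $S_0$, adapted to $(\mathcal F_t)$ and with finite moments of all orders. Because $\alpha_t$ and $\beta_t$ are then given adapted processes with $\alpha_t$ uniformly bounded and $\beta_t$ in $L^2([0,T]\times\Omega)$, the process
\begin{equation*}
\delta_t = \delta_0 + \int_0^t \alpha_s\,\mathrm{d}W_s + \int_0^t \beta_s\,\mathrm{d}s
\end{equation*}
is well defined, adapted, and is the unique strong solution of (\ref{delta dynamics 2}). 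Finally, $\tilde\lambda(t,S_t)\alpha_t^2$ is bounded and $\delta_t\in L^2$, so
\begin{equation*}
V_t = V_0 + \int_0^t \delta_s\,\mathrm{d}S_s + \tfrac12\int_0^t \tilde\lambda(s,S_s)\alpha_s^2\,\mathrm{d}s
\end{equation*}
is a legitimate Itô/Lebesgue integral giving the unique solution of (\ref{hedging portfolio dynamics 2}). Uniqueness of the full triple follows from uniqueness at each layer.

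The main obstacle is the verification that the reduced SDE for $S$ has Lipschitz coefficients: the division by $1-\tilde\lambda(t,S_t)\Gamma_t$ is not a priori well controlled, and it is precisely Assumption \ref{ass lambda gamma uniformly bounded} together with the hypothesis that $\partial_x(x\lambda(t,x))$ is bounded that allows one to differentiate and bound this reciprocal uniformly. Once that calculus is carried out, the rest is the standard Itô machinery applied layer by layer.
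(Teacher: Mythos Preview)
Your proposal is correct and follows essentially the same approach as the paper: substitute (\ref{delta dynamics 2}) into (\ref{spot dynamics 2}) to obtain a closed SDE for $S$ alone, verify that its drift and diffusion coefficients are Lipschitz in the space variable (uniformly in $t,\omega$) using the lower bound from Assumption~\ref{ass lambda gamma uniformly bounded} together with the boundedness hypotheses on $\tilde\sigma$, $\tilde\nu$, $\tilde\lambda$ and their derivatives, invoke the classical strong existence/uniqueness theorem, and then recover $\delta$ and $V$ by direct integration. The only cosmetic difference is that the paper organises the Lipschitz check by writing each coefficient as a sum of terms of the form $b_t\,f(\tilde\lambda(t,x)\Gamma_t)\,g(t,x)$ with $f$ bounded Lipschitz on $(-\infty,1-\varepsilon)$ and $g$ bounded Lipschitz in $x$, whereas you argue directly via the reciprocal of $1-\tilde\lambda(t,x)\Gamma_t$; the content is the same.
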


\begin{proof}
By plugging (\ref{delta dynamics 2}) in (\ref{spot dynamics 2}), we have
\begin{equation}
\label{SDE}
\mathrm{d}S_t = \underbrace{\big(\tilde\sigma(t,S_t) + \alpha_t \tilde{\lambda}(t,S_t)\big)}_{:= a(t,S_t,\Gamma_t)} \mathrm{d}W_t + \underbrace{\big(\tilde{\nu}(t,S_t) + \alpha_t (\tilde{\sigma}\partial_x \tilde{\lambda})(t,S_t) + \beta_t \tilde{\lambda}(t,S_t)\big)}_{:= b(t,S_t,\Gamma_t)} \mathrm{d}t
\end{equation}
with $S$ starting from $S_0 \in \mathbb{R}$. $a(t,S_t,\Gamma_t)$ and $b(t,S_t, \Gamma_t)$ can be expressed as a sum of terms of the form
$$ h(t,S_t,\Gamma_t) := b_t f(\tilde{\lambda}(t,S_t)\Gamma_t)g(t,S_t) $$ 
where $(b_t)_{t \geq 0}$ is a uniformly bounded adapted process, $f$ bounded Lipschitz-continous on $(-\infty, 1 - \varepsilon)$ and $g$ bounded Lipschitz-continous in the space variable on $[0,T] \times \mathbb{R}$. For all $(t,x,y) \in [0,T] \times \mathbb{R} \times \mathbb{R}$,
\begin{align*}
  h(t,x,\Gamma_t) - h(t,y,\Gamma_t) &= b_t \big(f(\tilde{\lambda}(t,x)\Gamma_t)g(t,x) - f(\tilde{\lambda}(t,y)\Gamma_t)g(t,y)\big) \\
  &= b_t \big(f(\tilde{\lambda}(t,x)\Gamma_t) - f(\tilde{\lambda}(t,y)\Gamma_t) \big)g(t,x) \\
  &\quad + b_t f(\tilde{\lambda}(t,y)\Gamma_t) \big(g(t,x) - g(t,y)\big).
\end{align*}
Therefore there exists a constant $C_1 > 0$ such that for all $(t,x,y) \in [0,T] \times \mathbb{R} \times \mathbb{R}$,
$$ \big|h(t,x,\Gamma_t) - h(t,y,\Gamma_t)\big| \leq C_1 |x-y| \quad \mathbb{P}-a.s. $$
giving that there exists a constant $C_2 > 0$ such that for all $(t,x,y) \in [0,T] \times \mathbb{R} \times \mathbb{R}$,
$$ \big|a(t,x,\Gamma_t) - a(t,y,\Gamma_t)\big| + \big|b(t,x,\Gamma_t) - b(t,y,\Gamma_t)\big| \leq C_2 |x-y| \quad \mathbb{P}-a.s.. $$
This leads that the stochastic differential equation (\ref{SDE}) with the initial condition $S_0 \in \mathbb{R}$ admits a unique strong solution. Therefore we deduce that the system of stochastic differential equations (\ref{delta dynamics 2})-(\ref{spot dynamics 2})-(\ref{hedging portfolio dynamics 2}) admits a unique strong solution so does the system (\ref{delta dynamics})-(\ref{spot dynamics})-(\ref{hedging portfolio dynamics}).
\end{proof}

In what follows we will show that the system of stochastic differential equations (\ref{delta dynamics 2})-(\ref{spot dynamics 2})-(\ref{hedging portfolio dynamics 2}) can be derived from a discrete rebalancing trading strategy taking to the limit $n \rightarrow +\infty$, where $n$ represent the number of auctions of the trading strategy. To this end, let us consider for all $n \in \mathbb{N}^*$, the following discrete time rebalancing time grid $(t_i^n)_{i \in \llbracket 0,n \rrbracket}$ such that
$$ 0 = t_0^n < t_1^n < \ldots < t_{n-1}^n < t_n^n = T. $$
with $T > 0$ a fixed maturity. We set for all $n \in \mathbb{N}^*$,
$$ \Delta_n := \sup\limits_{1 \leq i \leq n}{|t_i^n - t_{i-1}^n|}. $$

Let us set $(\delta^n,S^n,V^n)_{n \in \mathbb{N}^*}$ the sequence of processes defined on $[0,T]$ such that for all $t \in [0,T]$,
\begin{align*}
  \delta_t^n &= \sum_{i=0}^{n-1}{\delta_{t_i^n} \mathbbm{1}_{\{t_i^n \leq t < t_{i+1}^n\}}} + \delta_T \mathbbm{1}_{\{t = T\}}, \numberthis \label{def discrete trading strategy delta} \\
  S_t^n &:= S_0 + \sum_{k=1}^{n}{\mathbbm{1}_{\{t_{k-1}^n \leq t \leq T\}}\left(\int_{t_{k-1}^n}^{t \wedge t_k^n}{\tilde{\sigma}(s,S_s^n) \,\mathrm{d}W_s}
+ \int_{t_{k-1}^n}^{t \wedge t_k^n}{\tilde{\nu}(s,S_s^n) \,\mathrm{d}s}\right)} \\
&\quad + \sum_{k=1}^{n}{\mathbbm{1}_{\{t_k^n \leq t \leq T\}} (\delta_{t_k^n} - \delta_{t_{k-1}^n}) \tilde{\lambda}(t_k^n,S_{t_k^n -}^n)}. \numberthis \label{def discrete trading strategy spot} \\
  V_t^n &:= V_0 + \sum_{k=1}^{n}{\mathbbm{1}_{\{t_{k-1}^n \leq t \leq T\}}\delta_{t_{k-1}^n}(S_{t \wedge t_k^n -}^n - S_{t_{k-1}^n}^n)} \\
&\quad + \sum_{k=1}^{n}{\mathbbm{1}_{\{t_k^n \leq t \leq T\}} \left[\frac{1}{2} (\delta_{t_k^n} - \delta_{t_{k-1}^n})^2 \tilde{\lambda}(t_k^n,S_{t_k^n -}^n) + \delta_{t_{k-1}^n}(\delta_{t_k^n} - \delta_{t_{k-1}^n})\tilde{\lambda}(t_k^n,S_{t_k^n -}^n) \right]}. \numberthis \label{def discrete trading strategy V}
\end{align*}

\begin{thm}
\label{thm time continuous formulation}
Suppose the assumptions of Theorem \ref{existence and uniqueness of a strong solution} hold. Let $X := (\delta,S,V)$ the unique strong solution of the system of stochastic differential equations (\ref{delta dynamics 2})-(\ref{spot dynamics 2})-(\ref{hedging portfolio dynamics 2}) on $[0,T]$ starting from $(\delta_0, S_0, V_0) \in \mathbb{R}^3$ at $t=0$ and $(X^n)_{n \in \mathbb{N}^*} := (\delta^n,S^n,V^n)_{n \in \mathbb{N}^*}$ defined as in (\ref{def discrete trading strategy delta})-(\ref{def discrete trading strategy spot})-(\ref{def discrete trading strategy V}). Assume that $\Delta_n = O\left(\displaystyle\frac{1}{n}\right)$. Then there exists a constant $C > 0$ such that for all $n \in \mathbb{N}^*$,
$$ \sup\limits_{t \in [0,T]}{\mathbb{E}\left[\big\Vert X_t - X_t^n \big\Vert_2^2\right]} \leq C \Delta_n. $$
In particular we have that
$$ X^n \xrightarrow[n \rightarrow +\infty]{\mathbb{L}^2, \mathbb{P}, \mathcal{L}} X. $$
Furthermore if $\displaystyle\sum_{n=1}^{+\infty}{\Delta_n} < + \infty$, we have also that
$$ X^n \xrightarrow[n \rightarrow +\infty]{\mathbb{P-}a.s., \mathbb{L}^2, \mathbb{P}, \mathcal{L}} X. $$
\end{thm}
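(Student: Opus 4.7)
The plan is to treat Theorem \ref{thm time continuous formulation} as a strong convergence result for an Euler-type discretization scheme of the SDE system (\ref{delta dynamics 2})-(\ref{spot dynamics 2})-(\ref{hedging portfolio dynamics 2}), obtaining the $\mathbb{L}^2$ rate via It\^o isometry and Gronwall's lemma, then upgrading to the stronger modes of convergence by Markov's inequality and Borel--Cantelli.

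First, I would set $\eta_n(t) := \max\{t_k^n : t_k^n \leq t\}$ and rewrite the discrete processes in a form directly comparable to the continuous solution: for instance
\begin{equation*}
S_t^n = S_0 + \int_{0}^{\eta_n(t)}{\tilde{\sigma}(s, S_s^n)\,\mathrm{d}W_s} + \int_{0}^{\eta_n(t)}{\tilde{\nu}(s, S_s^n)\,\mathrm{d}s} + \sum_{k\,:\,t_k^n \leq t}{\tilde{\lambda}(t_k^n, S_{t_k^n -}^n)(\delta_{t_k^n} - \delta_{t_{k-1}^n})},
\end{equation*}
with analogous rewritings for $\delta^n$ and $V^n$. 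The error $X_t - X_t^n$ then splits componentwise into three contributions: (i) \emph{Lipschitz-drift/diffusion differences} such as $\tilde{\sigma}(s, S_s) - \tilde{\sigma}(s, S_s^n)$, bounded by the assumed Lipschitz estimates; (ii) \emph{grid-gap terms} controlled by $t - \eta_n(t) \leq \Delta_n$ and the uniform boundedness of the coefficients; (iii) a \emph{rebalancing-coupling term} comparing the Riemann-type sum $\sum_k \tilde{\lambda}(t_k^n, S_{t_k^n -}^n)(\delta_{t_k^n} - \delta_{t_{k-1}^n})$ with the stochastic integral $\int_0^t \tilde{\lambda}(s, S_s)\,\mathrm{d}\delta_s$.

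Second, I would bound each contribution in $\mathbb{L}^2$ norm: It\^o's isometry absorbs (i) into $\int_0^t \mathbb{E}\|X_s - X_s^n\|_2^2\,\mathrm{d}s$; the boundedness and Lipschitz assumptions of Theorem \ref{existence and uniqueness of a strong solution} together with Assumption \ref{ass lambda gamma uniformly bounded} (keeping $1 - \tilde{\lambda}\Gamma$ away from zero so that $\alpha$ and $\beta$ remain uniformly bounded) make (ii) of order $\Delta_n$; and for (iii), using the $\mathcal{C}^{1,2}$ regularity of $\lambda$ and the Lipschitz regularity of $x\lambda(t,x)$, together with the explicit SDE (\ref{delta dynamics 2}) for $\delta$, the Riemann-sum discrepancy is again of order $\Delta_n$ modulo a term absorbable into (i). Applying Gronwall's lemma to $t \mapsto \sup_{s \leq t}\mathbb{E}\|X_s - X_s^n\|_2^2$ then yields the claimed bound $C\Delta_n$, and $\mathbb{L}^2$ convergence immediately implies convergence in probability and in law. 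For the almost sure statement under $\sum_n \Delta_n < +\infty$, I would strengthen the pointwise-in-$t$ bound to a uniform-in-$t$ one via the Burkholder--Davis--Gundy inequality applied to the martingale parts, obtaining $\mathbb{E}[\sup_{t \leq T}\|X_t - X_t^n\|_2^2] \leq C'\Delta_n$; Markov's inequality then makes the events $\{\sup_t \|X_t - X_t^n\|_2 > \Delta_n^{1/4}\}$ summable in $n$, and Borel--Cantelli concludes.

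The hard part will be step (iii). The scheme is \textbf{not} a vanilla explicit Euler scheme: the increment $\delta_{t_k^n} - \delta_{t_{k-1}^n}$ is itself a stochastic integral depending on $S^n$ through $\Gamma$ and $a$, so the jump contribution $\tilde{\lambda}(t_k^n, S_{t_k^n -}^n)(\delta_{t_k^n} - \delta_{t_{k-1}^n})$ is an implicit functional of the past trajectory that must be carefully matched with the continuous stochastic integral $\int \tilde{\lambda}\,\mathrm{d}\delta$. The cross-terms appearing in the quadratic covariation of $\delta$ and $S$ are precisely what produces the correction $\alpha_t(\tilde{\sigma}\partial_x \tilde{\lambda})(t, S_t)$ in the drift of (\ref{spot dynamics 2}), and verifying that this correction emerges with the right $O(\Delta_n)$ remainder, using the Lipschitz regularity of $(t,x)\mapsto x\lambda(t,x)$ and the uniform lower bound on $1 - \tilde{\lambda}\Gamma$, is the technical heart of the argument.
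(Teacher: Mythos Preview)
Your $\mathbb{L}^2$ strategy is the same as the paper's in spirit---It\^o isometry plus Gronwall over the interval structure---and you have correctly identified step (iii), the matching of the jump sums $\sum_k \tilde{\lambda}(t_k^n,S_{t_k^n-}^n)(\delta_{t_k^n}-\delta_{t_{k-1}^n})$ with $\int \tilde{\lambda}\,\mathrm{d}\delta$, as the technical heart. The paper implements this match by a concrete device you do not mention: on each subinterval it introduces the auxiliary process $\tilde S_t^n := S_t^n + (\delta_t-\delta_{t_{i-1}^n})\tilde\lambda(t,S_t^n)$, applies It\^o's product rule to $(\delta_t-\delta_{t_{i-1}^n})\tilde\lambda(t,S_t^n)$, and observes that $\tilde S_t^n \to S_{t_i^n}^n$ as $t\uparrow t_i^n$. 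This makes the cross-variation term $\alpha_t(\tilde\sigma\partial_x\tilde\lambda)(t,\cdot)$ appear explicitly and reduces the comparison to a recursive inequality $\mathbb{E}[|\Delta S_{t_i^n}^n|^2]\le (1+C\Delta_n)\mathbb{E}[|\Delta S_{t_{i-1}^n}^n|^2]+C\Delta_n^2$ over the grid, which is then iterated. The same trick, with $\tilde V_t^n := V_t^n + \delta_{t_{i-1}^n}(\delta_t-\delta_{t_{i-1}^n})\tilde\lambda(t,S_t^n)+\tfrac12(\delta_t-\delta_{t_{i-1}^n})^2\tilde\lambda(t,S_t^n)$, handles the portfolio component. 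Your plan is compatible with this, but the auxiliary-process device is what actually closes the argument.

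For the almost-sure part, your route differs from the paper's and carries a small gap. The paper does \emph{not} upgrade to a uniform-in-$t$ bound via BDG; it simply uses the pointwise estimate $\mathbb{E}[\|X_t-X_t^n\|_2^2]\le C\Delta_n$ together with $\sum_n\Delta_n<\infty$ to conclude $\sum_n\|X_t-X_t^n\|_2^2<\infty$ a.s.\ and hence $X_t^n\to X_t$ a.s.\ for each fixed $t$. Your Markov/Borel--Cantelli argument with threshold $\Delta_n^{1/4}$ would require $\sum_n\Delta_n^{1/2}<\infty$, which does not follow from $\sum_n\Delta_n<\infty$ in general; you would need to choose the threshold more carefully (e.g.\ via a sequence $c_n\to\infty$ with $\sum c_n\Delta_n<\infty$). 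The paper's direct argument avoids this issue entirely, at the price of yielding only pointwise-in-$t$ almost-sure convergence.
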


\begin{proof}
The first convergence result is a straightforward application of Propositions \ref{prop discrete trading strategy delta}, \ref{prop discrete trading strategy spot} and \ref{prop discrete trading strategy V}. Assume now that $\displaystyle\sum_{n=1}^{+\infty}{\Delta_n} < +\infty$. Let $t \in [0,T]$, we have for all $n \in \mathbb{N}^*$,
$$ \mathbb{E}\left[\big\Vert X_t - X_t^n \big\Vert_2^2\right] \leq C \Delta_n. $$
Hence
$$ \sum_{n=1}^{+\infty}{\mathbb{E}\left[\big\Vert X_t - X_t^n \big\Vert_2^2\right]} = \mathbb{E}\left[\sum_{n=1}^{+\infty}{\big\Vert X_t - X_t^n \big\Vert_2^2}\right] < +\infty, $$
which implies that
$$ \sum_{n=1}^{+\infty}{\big\Vert X_t - X_t^n \big\Vert_2^2} < +\infty \quad\mathbb{P}-a.s. $$
and
$$ \lim\limits_{n \rightarrow +\infty}{\big\Vert X_t - X_t^n \big\Vert_2} = 0 \quad\mathbb{P-}a.s..$$
\end{proof}

\begin{coro}
\label{coro time continuous formulation}
Suppose the assumptions of Proposition \ref{existence and uniqueness of a strong solution} hold. Let $X := (\delta,S,V)$ the unique strong solution of the system of stochastic differential equations (\ref{delta dynamics 2})-(\ref{spot dynamics 2})-(\ref{hedging portfolio dynamics 2}) on $[0,T]$ starting from $(\delta_0, S_0, V_0) \in \mathbb{R}^3$ at $t=0$ and $(X^n)_{n \in \mathbb{N}^*} := (\delta^n,S^n,V^n)_{n \in \mathbb{N}^*}$ defined as in (\ref{def discrete trading strategy delta})-(\ref{def discrete trading strategy spot})-(\ref{def discrete trading strategy V}).  Assume that one of the two following conditions hold:
\begin{enumerate}[(i)]
    \item There exists $\varepsilon > 0$ such that $\Delta_n = O\left(\displaystyle\frac{1}{n^{1+\varepsilon}}\right)$. \label{condition i}
    \item $(\Delta_n)_{n \in \mathbb{N}^*}$ is a non-increasing sequence such that $\displaystyle\sum_{n=1}^{+\infty}{\Delta_n} < +\infty$. \label{condition ii}
\end{enumerate}
Then we have
$$ X^n \xrightarrow[n \rightarrow +\infty]{\mathbb{P-}a.s., \mathbb{L}^2, \mathbb{P}, \mathcal{L}} X. $$
\end{coro}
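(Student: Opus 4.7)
The plan is to reduce the corollary to Theorem \ref{thm time continuous formulation} by showing that each of the two sufficient conditions (i) and (ii) implies the two hypotheses needed there, namely $\Delta_n = O(1/n)$ together with the summability $\sum_{n=1}^{+\infty} \Delta_n < +\infty$. Once both hypotheses are in place, Theorem \ref{thm time continuous formulation} directly yields the joint convergence (almost sure, $\mathbb{L}^2$, in probability, and in distribution) announced in the corollary, so no further stochastic analysis is needed — the argument is purely a real-analysis verification on the deterministic mesh sequence $(\Delta_n)_{n \in \mathbb{N}^*}$.

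For condition (\ref{condition i}), the argument is immediate: if $\Delta_n = O\!\left(1/n^{1+\varepsilon}\right)$ for some $\varepsilon > 0$, then a fortiori $\Delta_n = O(1/n)$, and Bertrand/Riemann comparison gives $\sum_{n=1}^{+\infty} \Delta_n \leq C \sum_{n=1}^{+\infty} 1/n^{1+\varepsilon} < +\infty$. Both prerequisites of Theorem \ref{thm time continuous formulation} are therefore satisfied, and we conclude.

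For condition (\ref{condition ii}), the only non-trivial step is to obtain $\Delta_n = O(1/n)$ from the assumptions that $(\Delta_n)_{n \in \mathbb{N}^*}$ is non-increasing and summable. I would invoke the classical Abel/Olivier-type lemma: if $(a_n)_{n \in \mathbb{N}^*}$ is a non-increasing sequence of non-negative reals with $\sum_{n \geq 1} a_n < +\infty$, then $n a_n \to 0$. Applied to $a_n = \Delta_n$ (which is non-negative since $\Delta_n$ is a supremum of absolute mesh sizes), this gives $\Delta_n = o(1/n)$, hence in particular $\Delta_n = O(1/n)$. The summability is assumed directly, so again both hypotheses of Theorem \ref{thm time continuous formulation} hold and the four modes of convergence follow.

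The only point that requires any care is the Abel/Olivier lemma in the second case; the standard proof splits the tail $\sum_{k=n+1}^{2n} \Delta_k \geq n \Delta_{2n}$ by monotonicity, and the Cauchy criterion for the convergent series forces this lower bound to vanish, giving $2n \Delta_{2n} \to 0$, and a similar bound handles odd indices. I expect no genuine obstacle: both branches reduce to elementary estimates, and the real content of the corollary is already carried by Theorem \ref{thm time continuous formulation}.
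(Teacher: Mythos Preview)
Your proposal is correct and follows essentially the same route as the paper: both cases are reduced to Theorem \ref{thm time continuous formulation} by checking $\Delta_n = O(1/n)$ and $\sum_n \Delta_n < +\infty$, with case (\ref{condition ii}) handled via the Olivier/Pringsheim lemma (which the paper proves inline using exactly the tail bound $\sum_{k=n+1}^{2n}\Delta_k \geq n\Delta_{2n}$ you sketch).
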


\begin{proof}
If we suppose (\ref{condition i}), using the results on Riemann series and Theorem \ref{thm time continuous formulation} leads to the conclusion. Let us suppose now (\ref{condition ii}) and let us show that this implies that $\Delta_n = o\left(\displaystyle\frac{1}{n}\right)$ when $n \rightarrow +\infty$. Let $n \in \mathbb{N}^*$, we have
$$ \sum_{k=1}^{2n}{\Delta_k} - \sum_{k=1}^{n}{\Delta_k} = \sum_{k=n+1}^{2n}{\Delta_k} \geq n \Delta_{2n},$$
hence $\lim\limits_{n \rightarrow +\infty}{n \Delta_{2n}} = 0$. Besides we have also
\begin{align*}
    (2n+1)\Delta_{2n+1} &\leq (2n+1)\Delta_{2n} \\
    &\leq 2n\Delta_{2n} + \Delta_{2n}.
\end{align*}
Therefore we have 
$$ \lim\limits_{n \rightarrow +\infty}{2n\Delta_{2n}} = \lim\limits_{n \rightarrow +\infty}{(2n+1)\Delta_{2n+1}} = 0, $$ 
wich implies that 
$$ \lim\limits_{n \rightarrow +\infty}{n\Delta_n} = 0. $$ 
The conclusion is now a straightforward application of Theorem \ref{thm time continuous formulation}.
\end{proof}

\subsubsection{The pricing equation}
\label{The pricing equation - sde}

In option theory pricing the establishment of the valuation partial differential equation is often connected to the \textit{replication problem} which consists in finding a self-financed strategy and an initial wealth $V_0$ such that $\Phi(S_T) = V_T$, $\mathbb{P}-$a.s. where $\Phi$ is a terminal \textit{payoff}. Let $(\delta, S,V)$ a strong solution to the system of stochastic differential equations (\ref{delta dynamics 2})-(\ref{spot dynamics 2})-(\ref{hedging portfolio dynamics 2}), which can take the form given in (\ref{delta dynamics})-(\ref{spot dynamics})-(\ref{hedging portfolio dynamics}), and $u \in \mathcal{C}^{1,2}([0,T] \times \mathbb{R})$ a smooth function such that $u(T,.) = \Phi$. The absence of arbitrage opportunities implies that for all $t \in [0,T]$,  $\mathbb{P}-$a.s., $u(t,S_t) = V_t$. We consider the strategy given by $\delta(t,S_t) = \partial_s u(t,S_t)$, hence we have $\Gamma_t =  \partial_{ss} u(t,S_t)$, $\mathbb{P}-$a.s. by the unicity of the Ito decomposition. Applying Ito's lemma to $u$ gives
\begin{equation}
\label{equation du - sde}
\mathrm{d}u(t,S_t) = \partial_t u(t,S_t)\mathrm{d}t + \partial_s u(t,S_t)\mathrm{d}S_t + \frac{1}{2}\partial_{ss}u(t,S_t)\mathrm{d}\langle S,S \rangle_t,
\end{equation}
besides we have also
\begin{equation}
\label{equation dV - sde}
\mathrm{d}V_t = \partial_s u(t,S_t) \mathrm{d}S_t + \frac{1}{2} \lambda(t,S_t) \Gamma_t^2 \mathrm{d}\langle S,S \rangle_t.
\end{equation}
By combining (\ref{delta dynamics}) and (\ref{spot dynamics}), we obtain
$$ \mathrm{d}S_t = \frac{\sigma(t,S_t)S_t}{1 - \lambda(t,S_t)S_t \Gamma_t}\mathrm{d}W_t + \frac{\mu_t S_t + a_t \lambda(t,S_t)}{1 - \lambda(t,S_t)S_t \Gamma_t}\mathrm{d}t $$
with
\begin{equation}
\label{definition mu}
\mu_t := \nu(t,S_t) + \frac{\sigma^2(t,S_t) \Gamma_t \big(\lambda(t,S_t) + S_t \partial_x\lambda(t,S_t)\big)}{1 - \lambda(t,S_t)S_t\Gamma_t}.
\end{equation}
hence 
$$ \mathrm{d}\langle S,S \rangle_t = \frac{\sigma^2(t,S_t)S_t^2}{(1 - \lambda(t,S_t)S_t\Gamma_t)^2}, $$
which reads in the equality between (\ref{equation du - sde}) and (\ref{equation dV - sde})
$$ \partial_t{u}(t,S_t) + \frac{1}{2}\sigma^2(t,S_t)S_t^2 \Gamma_t \frac{1}{1 - \lambda(t,S_t)S_t \Gamma_t} = 0, \mathbb{P}-a.s.. $$
Therefore $u \in \mathcal{C}^{1,2}([0,T] \times \mathbb{R})$ is a solution of
\begin{align*}
\partial_t{u} + \frac{1}{2}\sigma^2 s^2 \partial_{ss}u \frac{1}{1 - \phi} &= 0, \numberthis \label{pricing equation SDE - 1} \\
\phi(t,s) &= \lambda(t,s) s\partial_{ss}u(t,s), \numberthis \label{pricing equation SDE - 2} \\
1 - \phi &\geq \varepsilon, \numberthis \label{pricing equation SDE - 3} \\
u(T,.) &= \Phi. \numberthis \label{pricing equation SDE - 4}
\end{align*}

\section{The Market Impact of Hedging Metaorders}
\label{The Market Impact of Hedging Metaorders}

In \cite{said2018} the authors define a metaorder as a large trading order split into small pieces and executed incrementally the same day by the same \textit{agent} on the same \textit{stock} and all having the same \textit{direction} (buy or sell). This and Proposition \ref{prop regular} motivates the following definition:
\begin{defi}
\label{defi metaorder}
Let $(u_n)_{n \in \mathbb{N}}$ a regular $\phi-$market impact scenario. The market impact scenario $(u_n)_{n \in \mathbb{N}}$ is said to be a \textbf{metaorder} if, and only if, $\phi \in (0,1)$. We will also refer to these metaorders as \textbf{hedging metaorders}.
\end{defi}

\subsection{Dynamics of Hedging Metaorders}
\label{Dynamics of Hedging Metaorders}

Let us consider an \textit{hedging metaorder} $(\mathrm{d}S_n)_{n \in \mathbb{N}}$. The proofs given to establish Theorem \ref{thm phi} have shown that there exists $r \in (0,1)$ such that for all $n \in \mathbb{N}$, $|\mathrm{d}S_{n+1}| \leq r \times |\mathrm{d}S_n|$. Without loss of generality we will consider for now in the rest of the section that $(\mathrm{d}S_n)_{n \in \mathbb{N}}$ is a buy metaorder i.e. $\mathrm{d}S_n > 0$ for any $n \in \mathbb{N}$. Let us set for all $n \in \mathbb{N}$,
\begin{equation}
\label{temporary impact after n trades}
  \mathcal{I}_n := \sum_{k=0}^{n}{\mathrm{d}S_k}
\end{equation}
the temporary market impact after  $n$ trades. As the sequence $(\mathrm{d}S_n)_{n \in \mathbb{N}}$ is strictly non-increasing, the plot $(n,\mathcal{I}_n)_{n \in \mathbb{N}}$ has a concave non-decreasing shape reaching a plateau as $n \rightarrow +\infty$ since we have that $\lim\limits_{n \rightarrow +\infty}{\mathrm{d}S_n} = 0$. In order to illustrate those ideas let us plot the dynamics of an hedging metaorder in the case that the $\mathrm{d}S_{n+1} = r \times \mathrm{d}S_n$ for any $n \in \mathbb{N}$ with $r = 0.8$ and $\mathrm{d}S_1 = 5$ bps.

\begin{figure}[H]
\centering
\includegraphics[scale=0.75]{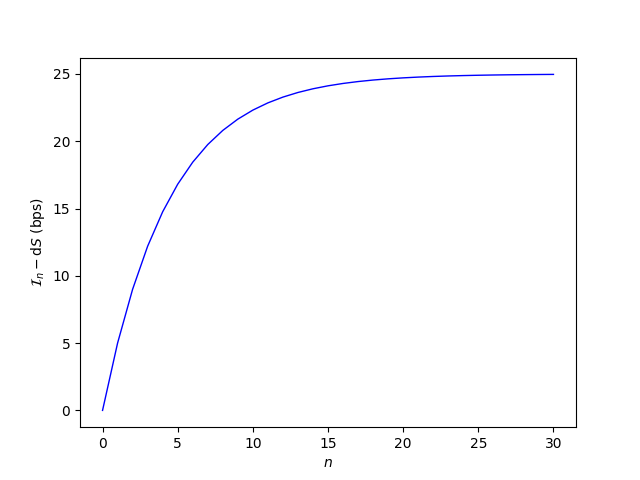}
\end{figure}

\subsection{Immediate Impact and Size of the Hedging Metaorder}
\label{Immediate Impact VS Size of the Hedging Metaorder}

Let us consider an \textit{hedging metaorder} $(\mathrm{d}S_n)_{n \in \mathbb{N}}$ and recall that the number of shares executed during the \textit{hedging metaorder} is given by
$$ N = \Gamma \sum_{n=0}^{+\infty}{\mathrm{d}S_n}. $$
This leads to the following linear relation
$$ \mathcal{I} =  \Gamma^{-1} N, $$

\subsection{Relaxation of Hedging Metaorders}
\label{Relaxation of Hedging Metaorders}

Let us consider an \textit{hedging metaorder} $(\mathrm{d}S_n)_{n \in \mathbb{N}}$. By Proposition \ref{prop regular} we can consider that $N(\mathrm{d}S) = +\infty$. We have already established that at the leading order
$$ \mathcal{R} =  \sum_{n=0}^{+\infty}{\Gamma \mathrm{d}S_n} \times \left(\displaystyle\sum_{n=0}^{+\infty}{\mathrm{d}S_n} -\displaystyle\frac{\displaystyle\sum_{n=0}^{+\infty}{\Gamma \mathrm{d}S_n \left(\displaystyle\sum_{k=0}^{n}{\mathrm{d}S_k}\right)}}{\displaystyle\sum_{n=0}^{+\infty}{\Gamma \mathrm{d}S_n}}\right) = \frac{1}{2} \lambda \left(\displaystyle\sum_{n=0}^{+\infty}{\Gamma \mathrm{d}S_n}\right)^2 S^{1 + \zeta},$$
leading to
$$ \displaystyle\frac{\displaystyle\sum_{n=0}^{+\infty}{\Gamma \mathrm{d}S_n \left(\displaystyle\sum_{k=0}^{n}{\mathrm{d}S_k}\right)}}{\displaystyle\sum_{n=0}^{+\infty}{\Gamma \mathrm{d}S_n}} = \left(1 - \frac{\phi}{2}\right) \sum_{n=0}^{+\infty}{\mathrm{d}S_n}. $$
It has been shown in several empirical works that the market impact of metaorders possesses two distinct phases: A \textit{temporary impact} as a consequence of the execution of the order followed by a relaxation phenomenon leading to a \textit{permanent impact}. The relaxation phenomenon has been studied from an empirical point of view in stocks market \cite{bershova2013non} \cite{bacry2015market} \cite{said2018} and with a theoretical insight in \cite{bouchaud2004fluctuations} \cite{gatheral2011exponential} and \cite{farmer2013efficiency} for instance. Recently the same has been done empirically concerning options market \cite{said2019market}. Let us consider $(\mathrm{d}S_n)_{n \in \mathbb{N}}$ a metaorder starting from $\mathrm{d}S \neq 0$, hence $N(\mathrm{d}S) = +\infty$. In the terminology of metaorders let us denote by
\begin{equation}
\label{temporary impact metaorder}
\mathcal{I} := \sum_{n=0}^{+\infty}{\mathrm{d}S_n}
\end{equation}
as the \textit{temporary impact} of the metaorder $(\mathrm{d}S_n)_{n \in \mathbb{N}}$. We will denote by $I$ the \textit{permanent impact} of the metaorder $(\mathrm{d}S_n)_{n \in \mathbb{N}}$. According to the \textit{Fair Pricing} condition -- predicted theoretically in \cite{farmer2013efficiency} and validated empirically in equity \cite{said2018} and options \cite{said2019market} market -- the \textit{permanent impact} $I$ satisfies the following identity:
\begin{equation}
\label{fair pricing identity}
  S + I = \displaystyle\frac{\displaystyle\sum_{n=0}^{+\infty}{\Gamma \mathrm{d}S_n \times (S + \mathcal{I}_n)}}{\displaystyle\sum_{n=0}^{+\infty}{\Gamma \mathrm{d}S_n}}
\end{equation}
where for all $n \in \mathbb{N}$, $\mathcal{I}_n := \displaystyle\sum_{k=0}^{n}{\mathrm{d}S_k}$ as defined in (\ref{temporary impact after n trades}), giving that
\begin{equation}
\label{fair pricing identity 2}
  I = \displaystyle\frac{\displaystyle\sum_{n=0}^{+\infty}{\Gamma \mathrm{d}S_n \times \mathcal{I}_n}}{\displaystyle\sum_{n=0}^{+\infty}{\Gamma \mathrm{d}S_n}}.
\end{equation}
The fair pricing identity (\ref{fair pricing identity}), introduced in \cite{farmer2013efficiency}, is in fact a \textit{non-arbitrage} condition as it states that the final price -- after the price reversion of the metaorder -- reaches a level such that it equals to the volume-weighted average price of the metaorder. The fair pricing identity (\ref{fair pricing identity 2}) reads that the \textit{permanent impact} is equal to the volume-weighted average \textit{temporary impacts} of the metaorder. Hence we have shown that:

\begin{thm}
\label{thm relaxation}
\begin{equation}
\label{metaorder identity}
\frac{I}{\mathcal{I}} = 1 - \frac{\phi}{2}
\end{equation}
giving the level of the relaxation phenomenon of the metaorder. Hence
\begin{equation}
\label{metorder inequality}
\frac{1}{2} \leq \frac{I}{\mathcal{I}} \leq 1.
\end{equation}
\end{thm}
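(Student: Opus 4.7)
The plan is to observe that Theorem \ref{thm relaxation} is essentially a direct combination of two identities that have already been assembled earlier in Section \ref{Relaxation of Hedging Metaorders}: the fair pricing formula (\ref{fair pricing identity 2}) defining the permanent impact $I$, and the algebraic identity
$$ \frac{\sum_{n=0}^{+\infty} \Gamma \mathrm{d}S_n \bigl(\sum_{k=0}^{n} \mathrm{d}S_k\bigr)}{\sum_{n=0}^{+\infty} \Gamma \mathrm{d}S_n} = \left(1 - \frac{\phi}{2}\right) \sum_{n=0}^{+\infty} \mathrm{d}S_n $$
which was read off from the leading-order expression of the slippage $\mathcal{R}$ obtained in the derivation of the pricing equation.

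The key step is to justify this latter identity cleanly. For that, I would go back to the expression $\mathcal{R} = \tfrac{1}{2}\lambda N^{2} S^{1+\zeta}$ with $N = \Gamma \,\tilde{\mathrm{d}S} = \Gamma\sum_{n \geq 0} \mathrm{d}S_n$, as established in Section \ref{pricing equation} using Remark \ref{remark N infty market impact thm}, and compare it term-by-term with the explicit form
$$ \mathcal{R} = \Gamma\,\mathcal{I}\left(\mathcal{I} - \frac{\sum_{n \geq 0} \Gamma \mathrm{d}S_n \,\mathcal{I}_n}{\sum_{n \geq 0} \Gamma \mathrm{d}S_n}\right). $$
Equating the two expressions and solving for the volume-weighted cumulative impact immediately produces the $(1 - \phi/2)\mathcal{I}$ factor, using $\phi = \lambda S^{1+\zeta}\Gamma$. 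This also implicitly uses the admissibility of the metaorder (Theorem \ref{thm phi}, with $\phi<1$) to ensure absolute convergence of all the series involved so that the rearrangement is legitimate.

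Plugging this into (\ref{fair pricing identity 2}) then gives $I = (1 - \phi/2)\,\mathcal{I}$, i.e.\ equation (\ref{metaorder identity}). For the inequality (\ref{metorder inequality}), I would invoke Definition \ref{defi metaorder}, which restricts a \emph{hedging metaorder} to $\phi \in (0,1)$; this range yields $1 - \phi/2 \in (1/2, 1)$, which upgrades to $[1/2, 1]$ in the closure as stated.

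The only real obstacle is the leading-order matching in the derivation of $\mathcal{R}$: one must be careful that the approximation $\mathcal{R} \approx \tfrac{1}{2}\lambda N^{2}S^{1+\zeta}$ holds at the same order as the identity being extracted, and that the higher-order terms in $\mathrm{d}S$ (controlled via the geometric decay $|\mathrm{d}S_{n+1}| \le r|\mathrm{d}S_n|$ from the proof of Theorem \ref{thm phi}) do not contaminate the first-order relation. Once that is nailed down, the result is a one-line consequence of the fair pricing hypothesis.
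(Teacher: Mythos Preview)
Your proposal is correct and follows essentially the same approach as the paper: the text preceding Theorem \ref{thm relaxation} derives the identity $\frac{\sum_{n\geq 0}\Gamma\mathrm{d}S_n\,\mathcal{I}_n}{\sum_{n\geq 0}\Gamma\mathrm{d}S_n} = (1-\tfrac{\phi}{2})\mathcal{I}$ by equating the two expressions for $\mathcal{R}$ from Section \ref{pricing equation}, then plugs this into the fair pricing formula (\ref{fair pricing identity 2}) to obtain (\ref{metaorder identity}), with (\ref{metorder inequality}) following from $\phi\in(0,1)$ in Definition \ref{defi metaorder}. Your identification of the leading-order matching as the only delicate point is also accurate.
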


Theorem \ref{thm relaxation} connects directly the intensity $\phi$ of the \textit{hedging metaorder} with the intensity of the price reversion after the end of the metaorder. Particularly it states that the more the intensity is the more the relaxation will be. According to recent empirical and theoretical works \cite{bershova2013non} \cite{farmer2013efficiency} \cite{said2018} \cite{bucci2019slow}, we have that $\left\langle 1 - \frac{1}{2} \phi \right\rangle_{metaorders} = 2/3$ for intraday metaorders where $\langle ... \rangle_{metaorders}$ stand for a mean value over all the metaorders. In \cite{bucci2019slow} the authors precise that if we study the price reversion $\sim 50$ days after the end of the metaorder we have approximately $\displaystyle\frac{I}{\mathcal{I}} \approx 0.5 $ also in agreement with (\ref{metaorder identity}). Nevertheless in our framework the limit case $\displaystyle\frac{I}{\mathcal{I}} \rightarrow \displaystyle\frac{1}{2}$ corresponds to the limit case when $\phi \rightarrow 1$ .i.e. when the market impact of the \textit{hedging metaorder} become bigger and bigger. Hence the lower bound of (\ref{metorder inequality}) can be attained in the case of very large metaorders. This is in favor that by taking in consideration very large metaorders whom the execution can last several days as in \cite{bucci2019slow}, we can find a mean-value that can be closed to $1/2$.





\section{Conclusion}
\label{Conclusion}

In this paper we have presented a perturbation theory of market impact that connects option pricing theory with market microstructure empirical findings. From the option pricing point of view our model appears to be an extension of the model presented in \cite{loeper2018option} in which we study the hedging process. Furthermore we have introduced what we have called the \textit{hedging metaorders} to establish a connection between option hedging and market microstructure metaorders. Particularly we have shown in our framework that our \textit{hedging metaorders} obey to linear market impact in the size of the metaorder and possess a relaxation factor in $[1/2,1]$ directly connected to their intensity characterized by the parameter $\phi$.

\newpage
\bibliographystyle{apalike}
\bibliography{bibliography}

\newpage
\appendix

\section{Proofs}

\subsection{Proof of Theorem \ref{thm phi}}
\label{proof of thm phi}

The proof of Theorem \ref{thm phi} is a consequence of the following propositions (see Propositions \ref{prop admissible 1}, \ref{prop admissible 2}, \ref{prop admissible 3} and \ref{prop admissible 4}). It gives an equivalent criterion on the parameter $\phi$ for a market impact scenario to be admissible. We will establish that a market impact scenario is admissible from a trading perspective if, and only if, $\phi \in (-\infty,1)$.

\begin{prop_appendix}
\label{prop admissible 1}
Let $(u_n)_{n \in \mathbb{N}}$ an $\phi-$market impact scenario. If $\phi \geq 1$, then $(u_n)_{n \in \mathbb{N}}$ is not admissible from a trading perspective.
\end{prop_appendix}

\begin{proof}
Let $R > 0$ and $x \in (0,R)$. Hence for all $n \in \mathbb{N}$, $u_n(x) > 0$ which implies that for all $n \in \mathbb{N}$, $s_n(x) > 0$. Thus for all $n \in \mathbb{N}$, 
$$ -s_n(x) -S \leq 0 $$ 
and 
$$ \phi \left(1 + \displaystyle\frac{s_n(x)}{S}\right)^{1 + \zeta} u_n(x) \geq 0, $$
which gives for all $n \in \mathbb{N}$,
$$ u_{n+1}(x) = \phi \left(1 + \displaystyle\frac{s_n(x)}{S}\right)^{1 + \zeta} u_n(x) \geq \phi u_n(x).$$ 
In that case the series $\sum_{n \geq 0}{u_n(x)}$ diverges and $\displaystyle\sum_{n=0}^{+\infty}{u_n(x)} = +\infty$.
\end{proof}

\begin{prop_appendix}
\label{prop admissible 2}
Let $(u_n)_{n \in \mathbb{N}}$ an $\phi-$market impact scenario. If $\phi \in [0,1)$, then $(u_n)_{n \in \mathbb{N}}$ is admissible from a trading perspective.
\end{prop_appendix}

\begin{proof}
When $\phi = 0$, for all $n \in \mathbb{N}$, $u_n = 0$ and $(u_n)_{n \in \mathbb{N}}$ is obviously admissible from a trading perspective. We consider now the case $\phi \in (0,1)$. Let $r \in (\phi, 1 \wedge 4\phi)$, $R = (1-r)S \left(\displaystyle\sqrt{\frac{r}{\phi}} - 1 \right)$ and $x \in [-R,R]$. Set $\mathcal{A}(x) = \left\{n \in \mathbb{N} \,\bigg|\, s_n(x) = -S \right\}$.
\begin{itemize}
    \item If $\mathcal{A}(x) \neq \emptyset$, $\mathcal{A}(x)$ has a least element $n_0(x)$ and for all $n > n_0(x)$, $u_n(x) = 0$. This gives the absolute convergence of the series $\sum_{n \geq 0}{u_n(x)}$.
    \item Let us suppose $\mathcal{A}(x) = \emptyset$, hence for all $n \in \mathbb{N}$, $\phi \left(1 + \displaystyle\frac{s_n(x)}{S}\right)^{1 + \zeta} u_n(x) \geq -s_n(x) -S$ and for all $n \in \mathbb{N}$, $u_{n+1}(x) = \phi \left(1 + \displaystyle\frac{s_n(x)}{S}\right)^{1 + \zeta} u_n(x)$. Let us show by induction that for all $n \in \mathbb{N}$, $|u_{n+1}(x)| \leq r |u_n(x)|$.
    \begin{itemize}
    \item If $n = 0$, $|u_1(x)| = \phi \left|1 + \displaystyle\frac{x}{S}\right|^{1 + \zeta} |u_0(x)|$ and $|x| \leq R < S$. Thus 
    $$ |u_1(x)| \leq \phi \left(1 + \displaystyle\frac{|x|}{S}\right)^2 |u_0(x)| $$
    with $\displaystyle\frac{|x|}{S} \leq \displaystyle\frac{|x|}{(1-r)S} \leq \sqrt{\displaystyle\frac{r}{\phi}} - 1$, which gives $|u_1(x)| \leq r |u_0(x)|$.
    \item Let $n \geq 1$ and suppose that for all $k \in \llbracket0,n-1\rrbracket$, $|u_{k+1}(x)| \leq r |u_k(x)|$. Hence for all $k \in \llbracket0,n\rrbracket$, $|u_k(x)| \leq r^k |x|$ and $ \displaystyle\frac{|s_n(x)|}{S} \leq \displaystyle\frac{1}{S}\sum_{k=0}^{n}{r^k |x|} \leq \displaystyle\frac{|x|}{(1-r)S} \leq \sqrt{\displaystyle\frac{r}{\phi}} - 1 < 1$ which implies $$ |u_{n+1}(x)| = \phi \left(1 + \displaystyle\frac{s_n(x)}{S}\right)^{1 + \zeta} |u_n(x)| \leq \phi \left(1 + \displaystyle\frac{|s_n(x)|}{S}\right)^2 |u_n(x)| \leq r|u_n(x)|. $$
    \end{itemize}
    Thus for all $n \in \mathbb{N}$, $|u_{n+1}(x)| \leq r |u_n(x)|$ leading to the absolute convergence of the series $\sum_{n \geq 0}{u_n(x)}$. 
\end{itemize}
Finally for all $x \in [-R,R]$, $\sum_{n \geq 0}{u_n(x)}$ converges absolutely.
\end{proof}

\begin{prop_appendix}
\label{prop admissible 3}
Let $(u_n)_{n \in \mathbb{N}}$ an $\phi-$market impact scenario. If $\phi \in (-1,0)$, then $(u_n)_{n \in \mathbb{N}}$ is admissible from a trading perspective.
\end{prop_appendix}

\begin{proof}
Let $r \in (|\phi|, 1 \wedge 4|\phi|)$, $R = S \left(\displaystyle\sqrt{\frac{r}{|\phi|}} - 1 \right)$ and $x \in [-R,R]$. $\sum_{n \geq 0}{u_n(x)}$ is an alternating series and for all $n \in \mathbb{N}$, $|s_n(x)| \leq |x| \leq R < S$. Set $\mathcal{A}(x) = \left\{n \in \mathbb{N} \,\bigg|\, s_n(x) = -S \right\}$.
\begin{itemize}
    \item If $\mathcal{A}(x) \neq \emptyset$, $\mathcal{A}(x)$ has a least element $n_0(x)$ and for all $n > n_0(x)$, $u_n(x) = 0$. This gives the absolute convergence of the series $\sum_{n \geq 0}{u_n(x)}$.
    \item Let us suppose $\mathcal{A}(x) = \emptyset$, hence for all $n \in \mathbb{N}$, $\phi \left(1 + \displaystyle\frac{s_n(x)}{S}\right)^{1 + \zeta} u_n(x) \geq -s_n(x) -S$ and for all $n \in \mathbb{N}$, $u_{n+1}(x) = \phi \left(1 + \displaystyle\frac{s_n(x)}{S}\right)^{1 + \zeta} u_n(x)$. We have for all $n \in \mathbb{N}$, 
    $$ |u_{n+1}(x)| \leq |\phi| \left(1 + \displaystyle\frac{s_n(x)}{S}\right)^2 |u_n(x)| $$ 
    and 
    $$ |\phi| \left(1 + \displaystyle\frac{s_n(x)}{S}\right)^2 \leq |\phi| \left(1 + \displaystyle\frac{|x|}{S}\right)^2 \leq r. $$ 
    Thus for all $n \in \mathbb{N}$, $|u_{n+1}(x)| \leq r |u_n(x)|$ which gives the absolute convergence of the series $\sum_{n \geq 0}{u_n(x)}$.
\end{itemize}
Finally for all $x \in [-R,R]$, $\sum_{n \geq 0}{u_n(x)}$ converges absolutely.
\end{proof}

\begin{prop_appendix}
\label{prop admissible 4}
Let $(u_n)_{n \in \mathbb{N}}$ an $\phi-$market impact scenario. If $\phi \in (-\infty,-1]$, then $(u_n)_{n \in \mathbb{N}}$ is admissible from a trading perspective.
\end{prop_appendix}

\begin{proof}
Let $x \in \mathbb{R}$, the case $x = 0$ being trivial let us consider now $x \neq 0$. Without loss of generality let us assume $x > 0$, the proof would be similar for $x < 0$. As $\phi < 0$, $\sum_{n \geq 0}{u_n(x)}$ is an alternating series. Set
$$ \mathcal{A}(x) = \left\{n \in \mathbb{N} \,\bigg|\, s_n(x) = -S \right\}. $$
\begin{itemize}
    \item If $\mathcal{A}(x) \neq \emptyset$ it has a least element denoted by $n_0(x)$. Hence for all $n \geq n_0(x)$, $u_n(x) = 0$ and the series $\sum_{n \geq 0}{u_n(x)}$ converges absolutely.
    \item Let us suppose $\mathcal{A}(x) \neq \emptyset$. Hence for all $n \in \mathbb{N}$, $\phi \left(1 + \displaystyle\frac{s_n(x)}{S}\right)^{1 + \zeta} u_n(x) \geq -s_n(x) -S$ and for all $n \in \mathbb{N}$, $u_{n+1}(x) = \phi \left(1 + \displaystyle\frac{s_n(x)}{S}\right)^{1 + \zeta} u_n(x)$. Let 
    $$ \mathcal{B}(x) = \left\{n \in \mathbb{N} \,\bigg|\, s_n(x) < -S + \displaystyle\frac{S}{|\phi|^{1 / (1 + \zeta)}}\right\}. $$
    Let us assume $\mathcal{B}(x) = \emptyset$, hence for all $n \in \mathbb{N}$, $s_n(x) \geq -S + \displaystyle\frac{S}{|\phi|^{1 / (1 + \zeta)}}$. As $x \neq 0$, for all $n \in \mathbb{N}$, $u_n(x) \neq 0$ and $\displaystyle\frac{|u_{n+1}(x)|}{|u_n(x)|} = |\phi| \left(1 + \displaystyle\frac{s_n(x)}{S}\right)^{1 + \zeta} $ implying that
    $|u_{n+1}(x)| \geq |u_n(x)|$. Hence $(s_{2n}(x))_{n \in \mathbb{N}}$ is a non-decreasing real valued sequence and $(s_{2n+1}(x))_{n \in \mathbb{N}}$ is non-increasing. The sequence $(s_n(x))_{n \in \mathbb{N}}$ being bounded by Assumption \ref{ass s_n(x) bounded}, $(s_{2n}(x))_{n \in \mathbb{N}}$ and $(s_{2n+1}(x))_{n \in \mathbb{N}}$ are convergent. Let $l(x) = \lim\limits_{n \rightarrow +\infty}{s_{2n+1}(x)}$ and $l'(x) = \lim\limits_{n \rightarrow +\infty}{s_{2n}(x)}$. As $l(x) < 0$ and $l'(x) > 0$, we have $l(x) < l'(x)$. Besides the sequence $(|u_n(x)|)_{n \in \mathbb{N}}$ is convergent and $\lim\limits_{n \rightarrow +\infty}{|u_n(x)|} = l'(x) - l(x) > 0$, hence $\lim\limits_{n \rightarrow +\infty}{\displaystyle\frac{|u_{2n+1}(x)|}{|u_{2n}(x)|}} = 1$. We have also for all $n \in \mathbb{N}$,
    $$ \displaystyle\frac{|u_{2n+1}(x)|}{|u_{2n}(x)|} = |\phi| \left(1 + \displaystyle\frac{s_{2n}(x)}{S}\right)^{1 + \zeta} $$ 
which gives 
$$ \lim\limits_{n \rightarrow +\infty}{\displaystyle\frac{|u_{2n+1}(x)|}{|u_{2n}(x)|}} = |\phi| \left(1 + \displaystyle\frac{l'(x)}{S}\right)^{1 + \zeta}. $$
This leads to $l'(x) = -S + \displaystyle\frac{S}{|\phi|^{1 / (1 + \zeta)}} \leq l(x)$ which implies a contradiction. Thus $\mathcal{B}(x)$ is a non empty subset of $\mathbb{N}$ and has a least element $n_1(x)$. Without loss of generality let us suppose that $n_1(x)$ is odd hence we can write $n_1(x) = 2k(x) + 1$ with $k(x) \in \mathbb{N}$. Set
    $$ \mathcal{C}(x) = \left\{n \in \mathbb{N} \,\bigg|\, s_n(x) \vee s_{n+1}(x) < -S + \displaystyle\frac{S}{|\phi|^{1 / (1 + \zeta)}}\right\}. $$
    If $\mathcal{C}(x) = \emptyset$ then $(s_{2n}(x))_{n \geq k(x)}$ and $(s_{2n+1}(x))_{n \geq k(x)}$ are two sequences non-increasing bounded from below. Thus $(s_{2n}(x))_{n \geq k(x)}$ and $(s_{2n+1}(x))_{n \geq k(x)}$ are convergent. Let denote by $l(x)$ and $l'(x)$ their respective limit such that $l(x) = \lim\limits_{n \rightarrow +\infty}{s_{2n+1}(x)}$ and $l'(x) = \lim\limits_{n \rightarrow +\infty}{s_{2n}(x)}$. As $l(x) < -S + \displaystyle\frac{S}{|\phi|^{1 / (1 + \zeta)}} $ and $-S + \displaystyle\frac{S}{|\phi|^{1 / (1 + \zeta)}} \leq l'(x)$ we have $l(x) < l'(x)$. As previously this leads to $l(x) = l'(x) = -S + \displaystyle\frac{S}{|\phi|^{1 / (1 + \zeta)}} $ which gives a contradiction. Thus $\mathcal{C}(x)$ is a non empty subset of $\mathbb{N}$ and has a least element $n_2(x) \geq n_1(x)$ since $\mathcal{C}(x) \subset \mathcal{B}(x)$. For all $n > n_2(x)$, $\displaystyle\frac{|u_{n+1}(x)|}{|u_n(x)|} = |\phi| \left(1 + \displaystyle\frac{s_{n_2(x)}(x)}{S}\right)^{1 + \zeta} < 1$.
\end{itemize}
Thus the series $\sum_{n \geq 0}{u_n(x)}$ is absolutely convergent for any $x \in \mathbb{R}$.
\end{proof}

\subsection{Proof of Proposition \ref{prop sum series}}
\label{proof prop sum series}

\begin{proof}
Let $R' > 0$ such that $(-R',R') \subset (-S, +\infty)$. Besides by Theorem \ref{thm lambda gamma 2} there exists also $R'' > 0$ such that for all $x \in (-R'',R'')$, $\sum_{n \geq 0}{u_n(x)}$ converges absolutely. Let $R = R' \wedge R''$, for all $x \in (-R,R)$, $N(x) \geq 1$.
\begin{itemize}
    \item When $N(x) < +\infty$, for all $n \geq N(x)$, $s_n(x) = -S$ and for all $n \geq N(x)+1$, $u_n(x) = 0$. Hence for all $n \in \mathbb{N}$, $u_{n+1}(x) =  \phi \left(1 + \displaystyle\frac{s_n(x)}{S}\right)^{1 + \zeta} u_n(x) $.
    \item When $N(x) = +\infty$, for all $n \in \mathbb{N}$, $\phi \left(1 + \displaystyle\frac{s_n(x)}{S}\right)^{1 + \zeta} u_n(x) \geq -s_n(x) -S$ and for all $n \in \mathbb{N}$, $u_{n+1}(x) = \phi \left(1 + \displaystyle\frac{s_n(x)}{S}\right)^{1 + \zeta} u_n(x)$.
\end{itemize}
In any case for all $n \in \mathbb{N}$, $ u_{n+1}(x) = \phi \left(1 + \displaystyle\frac{(1 + \zeta) s_n(x)}{S} + \displaystyle\frac{ \zeta s_n^2(x)}{S^2}\right) u_n(x) $. This implies
\begin{align*}
    \sum_{n=1}^{N(x)}{u_n(x)} &= \phi \sum_{n=0}^{N(x)}{u_n(x)} + \frac{(1 + \zeta) \phi}{S} \sum_{n=0}^{N(x)}{s_n(x) u_n(x)} + \frac{\zeta \phi}{S^2} \sum_{n=0}^{N(x)}{s_n^2(x) u_n(x)} \\
    \sum_{n=0}^{N(x)}{u_n(x)} &= x + \phi \sum_{n=0}^{N(x)}{u_n(x)} + \frac{(1 + \zeta) \phi}{S} \sum_{n=0}^{N(x)}{s_n(x) u_n(x)} + \frac{\zeta \phi}{S^2} \sum_{n=0}^{N(x)}{s_n^2(x) u_n(x)} \\
    (1 - \phi)\sum_{n=0}^{N(x)}{u_n(x)} &= x + \frac{(1 + \zeta) \phi}{S} \sum_{n=0}^{N(x)}{s_n(x) u_n(x)} + \frac{\zeta \phi}{S^2} \sum_{n=0}^{N(x)}{s_n^2(x) u_n(x)}
\end{align*}
which finally gives
$$ \sum_{n=0}^{N(x)}{u_n(x)} = \frac{x}{1 - \phi} + \frac{1}{S}\frac{(1 + \zeta) \phi}{1 - \phi}\sum_{n=0}^{N(x)}{s_n(x)u_n(x)} + \frac{1}{S^2}\frac{\zeta \phi}{1 - \phi}\sum_{n=0}^{N(x)}{s_n^2(x)u_n(x)}. $$
\end{proof}

\subsection{Proof of Proposition \ref{prop regular}}
\label{proof prop regular}

\begin{proof}
\begin{itemize}
    \item $\phi = 0$, in that case for all $x \in \mathbb{R}$, $N(x) = +\infty$.
    \item $|\phi| \in (0,1)$, let $r \in (|\phi|, 1 \wedge \frac{9}{4}|\phi|)$, $R = (1-r)S \left(\displaystyle\sqrt{\frac{r}{|\phi|}} - 1 \right)$ and $x \in [-R,R]$. Let us show by induction that for all $n \in \mathbb{N}$, $|u_{n+1}(x)| \leq r |u_n(x)|$.
    \begin{itemize}
        \item If $n = 0$, $|u_1(x)| = |\phi| \left|1 + \displaystyle\frac{x}{S}\right|^{1 + \zeta} |u_0(x)|$ and $|x| \leq R < S$. Thus 
        $$ |u_1(x)| \leq |\phi| \left(1 + \displaystyle\frac{|x|}{S}\right)^2 |u_0(x)| $$
        with $\displaystyle\frac{|x|}{S} \leq \displaystyle\frac{|x|}{(1-r)S} \leq \sqrt{\displaystyle\frac{r}{|\phi|}} - 1$, which gives $|u_1(x)| \leq r |u_0(x)|$.
        \item Let $n \geq 1$ and suppose that for all $k \in \llbracket0,n-1\rrbracket$, $|u_{k+1}(x)| \leq r |u_k(x)|$. Hence for all $k \in \llbracket0,n\rrbracket$, $|u_k(x)| \leq r^k |x|$ and 
        $$ |s_n(x)| \leq \displaystyle\sum_{k=0}^{n}{r^k |x|} \leq \displaystyle\frac{|x|}{(1-r)} \leq S\left(\sqrt{\displaystyle\frac{r}{|\phi|}} - 1\right) < S. $$ 
        Besides 
        $$ -s_n(x) -S \leq -S + \displaystyle\frac{|x|}{1-r} $$ 
        and 
        \begin{align*}
            \left|\phi \left(1 + \displaystyle\frac{s_n(x)}{S}\right)^{1 + \zeta} u_n(x)\right| &\leq |\phi| \left(1 + \displaystyle\frac{s_n(x)}{S}\right)^2 |u_n(x)| \\ 
            &\leq |\phi| \left(1 + \displaystyle\frac{|s_n(x)|}{S}\right)^2 |u_n(x)| \\
            &\leq r|u_n(x)| \\
            &\leq r^{n+1}|x| \\
            &\leq \frac{|x|}{1-r}.
        \end{align*}
        Hence 
        $$ -s_n(x) -S \leq -S + \displaystyle\frac{|x|}{1-r} \leq -\displaystyle\frac{|x|}{1-r} \leq \phi \left(1 + \displaystyle\frac{s_n(x)}{S}\right)^{1 + \zeta} u_n(x) $$
        and 
        $$ u_{n+1}(x) = \phi \left(1 + \displaystyle\frac{s_n(x)}{S}\right)^{1 + \zeta} u_n(x) $$ 
        which implies 
        $$ |u_{n+1}(x)| = |\phi| \left(1 + \displaystyle\frac{s_n(x)}{S}\right)^2 |u_n(x)| \leq |\phi| \left(1 + \displaystyle\frac{|s_n(x)|}{S}\right)^2 |u_n(x)| \leq r|u_n(x)|. $$
    \end{itemize}
    Thus for all $x \in [-R,R]$, $n \in \mathbb{N}$, $|u_{n+1}(x)| \leq r |u_n(x)|$ and $|s_n(x)| \leq S \left(\sqrt{\displaystyle\frac{r}{|\phi|}} - 1\right)$ which gives the absolute convergence of the series $\sum_{n \geq 0}{u_n(x)}$ with 
    $$\left|\displaystyle\sum_{n=0}^{+\infty}{u_n(x)}\right| \leq S \left(\sqrt{\displaystyle\frac{r}{|\phi|}} - 1\right) < S.$$
    Therefore for all $x \in [-R,R]$, $N(x) = +\infty$.
\end{itemize}
\end{proof}

\subsection{Proof of Proposition \ref{prop regular power series}}
\label{proof prop regular power series}

\begin{proof}
\begin{itemize}
    \item By Theorem \ref{thm lambda gamma 2} and Proposition \ref{prop regular} there exists $R > 0$ such that for all $x \in (-R,R)$, $N(x) = +\infty$ and $\sum_{n \geq 0}{u_n(x)}$ converges absolutely. Let $x \in (-R,R)$, by Proposition \ref{prop sum series} we have
    $$ \underbrace{\sum_{n=0}^{+\infty}{u_n(x)}}_{:= u(x)} = \frac{x}{1 - \phi} + \frac{1}{S}\frac{(1 + \zeta) \phi}{1 - \phi}\sum_{n=0}^{+\infty}{s_n(x)u_n(x)} + \frac{1}{S^2}\frac{\zeta \phi}{1 - \phi}\sum_{n=0}^{+\infty}{s_n^2(x)u_n(x)} $$
    and for all $n \in \mathbb{N}$,
    $$\left\{
        \begin{array}{l}
        u_0(x) = x \\
        \forall n \in \mathbb{N}, u_{n+1}(x) = \phi \left(1 + \displaystyle\frac{s_n(x)}{S}\right)^{1 + \zeta} u_n(x).
        \end{array}
    \right.$$
    Hence by induction for all $n \in \mathbb{N}$, $x \in (-R,R)$, $u_n(x) = P_n(x)$ with $P_n \in \mathbb{R}[X]$ and the sequence of polynomials $(P_n)_{n \in \mathbb{N}}$ satisfying the following properties:
    \begin{itemize}
        \item $\left\{
        \begin{array}{l}
        P_0 = X \\
        \forall n \in \mathbb{N}, P_{n+1} = \phi \left(1 + \displaystyle\frac{1}{S}\sum_{k=0}^{n}{P_k}\right)^{1 + \zeta} P_n
        \end{array}
        \right.$
        \item $\forall n \in \mathbb{N}$, $P_n(0) = 0$.
    \end{itemize}
    The series $\sum_{n \geq 0}{u_n(x)}$ being absolutely convergent is also unconditionally convergent which implies that $\sum_{n \geq 0}{u_n}$ has a power series expansion on $(-R,R)$.
    \item For all $n \in \mathbb{N}$, $P_n(0) = 0$ hence there exists $Q_n \in \mathbb{R}[X]$ such that $P_n = X Q_n$. The sequence of polynomials $(Q_n)_{n \in \mathbb{N}}$ satisfy the following recurrence relation
    $$\left\{
        \begin{array}{l}
        Q_0 = 1 \\
        \forall n \in \mathbb{N}, Q_{n+1} = \phi \left(1 + \displaystyle\frac{1}{S}\sum_{k=0}^{n}{P_k}\right)^{1 + \zeta} Q_n
        \end{array}
    \right.$$
which implies for all $n \in \mathbb{N}$, $Q_n(0) = \phi^n$. Besides
$$ u(x) = \frac{x}{1 - \phi} +  \frac{1}{S}\frac{(1 + \zeta) \phi}{1 - \phi} \underbrace{\sum_{n=0}^{+\infty}{\left(\sum_{k=0}^{n}{P_k(x)}\right) P_n(x)}}_{:= v(x)} + \frac{1}{S^2}\frac{\zeta \phi}{1 - \phi} \underbrace{\sum_{n=0}^{+\infty}{\left(\sum_{k=0}^{n}{P_k(x)}\right)^2 P_n(x)}}_{:= w(x)}. $$
Let us show that $v(x) = \displaystyle\sum_{n=0}^{+\infty}{\left(\sum_{k=0}^{n}{Q_k(0)}\right) Q_n(0)} x^2 + o(x^2)$ and $w(x) = o(x^2)$ as $x \rightarrow 0$. Let $x \in \mathbb{R}$ such that $|x| < R \wedge (1-r)S \left(\displaystyle\sqrt{\frac{r}{|\phi|}} - 1 \right)$ with $r \in (|\phi|, 1 \wedge \frac{9}{4}|\phi|)$. As shown in Proposition \ref{prop regular} for all $n \in \mathbb{N}$, $|P_n(x)| \leq r^n |x|$ which gives for all $n \in \mathbb{N}$, $|Q_n(x)| \leq r^n$. Let $x \neq 0$, we have $\displaystyle\frac{v(x)}{x^2} = \sum_{n=0}^{+\infty}{\left(\sum_{k=0}^{n}{Q_k(x)}\right) Q_n(x)}$ and for all $n \in \mathbb{N}$, 
\begin{align*}
    \left|\left(\displaystyle\sum_{k=0}^{n}{Q_k(x)}\right) Q_n(x) \right| &\leq \left(\displaystyle\sum_{k=0}^{n}{|Q_k(x)|}\right) |Q_n(x)| \\
    &\leq \left(\displaystyle\sum_{k=0}^{n}{r^k}\right) r^n \\
    &\leq \frac{1 - r^{n+1}}{1 - r} r^n
\end{align*}
and $\displaystyle\sum_{n=0}^{+\infty}{(1 - r^{n+1})r^n} < +\infty$. This leads to $\lim\limits_{x \rightarrow 0}{\displaystyle\frac{v(x)}{x^2}} = \displaystyle\sum_{n=0}^{+\infty}{\left(\sum_{k=0}^{n}{Q_k(0)}\right) Q_n(0)}$. Similarly we have $\lim\limits_{x \rightarrow 0}{\displaystyle\frac{w(x)}{x^2}} = \displaystyle\sum_{n=0}^{+\infty}{\left(\sum_{k=0}^{n}{Q_k(0)}\right)^2 P_n(0)} = 0$.
We have also
\begin{align*}
    \sum_{n=0}^{+\infty}{\left(\sum_{k=0}^{n}{Q_k(0)}\right) Q_n(0)} &= \sum_{n=0}^{+\infty}{\left(\sum_{k=0}^{n}{\phi^k}\right) \phi^n} \\
    &= \sum_{n=0}^{+\infty}{\frac{1 -  \phi^{n+1}}{1 - \phi} \phi^n} \\
    &= \frac{1}{1 - \phi}\sum_{n=0}^{+\infty}{[\phi^n - \phi^{2n+1}]} \\
    &= \frac{1}{1 - \phi}\left(\frac{1}{1 - \phi} - \frac{\phi}{1 - \phi^2}\right) \\
    &= \frac{1}{1 - \phi}\left(\frac{1}{1 - \phi} - \frac{\phi}{(1 - \phi)(1 + \phi)}\right) \\
    &= \frac{1}{(1 + \phi)(1 - \phi)^2}
\end{align*}
leading to
$$ u(x) = \displaystyle\frac{1}{1 - \phi} x + \displaystyle\frac{1}{S}\displaystyle\frac{(1 + \zeta) \phi}{(1 - \phi)^3 (1 + \phi)} x^2 + o(x^2) \text{ as } x \rightarrow 0. $$
\end{itemize}
\end{proof}

\subsection{Proof of Theorem \ref{NK execution strategies thm - inequalities}}
\label{proof NK execution strategies thm - inequalities}

\begin{proof}
The final price of an $(N,\mathcal{K})-$execution strategy is given by
$$ S_{N,\mathcal{K}}(n_1,\dots,n_{\mathcal{K}}) = S\left(1 + \sum_{k=1}^{\mathcal{K}}{\lambda^k \sum_{1 \leq i_1 < i_2 < \dots < i_k \leq \mathcal{K}}{n_{i_1}n_{i_2} \dots n_{i_k}}}\right). $$
Using Maclaurin's inequalities (see e.g. \cite{hardy1952inequalities}) we have
$$ S_{N,\mathcal{K}}(n_1,\dots,n_{\mathcal{K}}) \leq S\left(1 + \sum_{k=1}^{\mathcal{K}}{\frac{(\lambda N)^k}{\mathcal{K}^k} \binom{\mathcal{K}}{k}}\right) $$
with equality exactly if and only if all the $n_k$ are equal. Furthermore we have
$$ S \left(1 + \lambda \sum_{k=1}^{\mathcal{K}}{n_k}\right) \leq S_{N, \mathcal{K}}(n_1,\dots,n_{\mathcal{K}}) \leq S \left(1 + \sum_{k=1}^{\mathcal{K}}{\frac{(\lambda N)^k}{k!}}\right),$$
leading straightforwardly to (\ref{NK execution strategies thm inequality price}).
To derive (\ref{NK execution strategies thm inequality average execution price}) let us study the maximum of the following function
$$ \begin{array}{ccccc}
f & : & \mathbb{R}^{\mathcal{K}} & \longrightarrow & \mathbb{R} \\
 & & (x_1,\dots,x_{\mathcal{K}}) & \longmapsto & \displaystyle\sum_{k=1}^{\mathcal{K}}{x_k \displaystyle\prod_{i=1}^{k-1}{(1 + x_i)}} \\
\end{array} $$
on 
$$ \Lambda^* := \left\{(x_1,\dots,x_{\mathcal{K}}) \in (\mathbb{R}_+^*)^{\mathcal{K}} \,\bigg|\, \sum_{k=1}^{\mathcal{K}}{x_k} = \lambda N \right\}. $$
by setting for all $k \in \llbracket 1, \mathcal{K} \rrbracket$, $x_k := \lambda n_k$. Let us consider the compact set
$$ \Lambda := \left\{(x_1,\dots,x_{\mathcal{K}}) \in (\mathbb{R}_+)^{\mathcal{K}} \,\bigg|\, \sum_{k=1}^{\mathcal{K}}{x_k} = \lambda N \right\}. $$
As the function $f$ is continuous $f_{|\Lambda}$ is bounded and reaches its upper bound, so there exists $a \in \Lambda$ such that $f(a) = \sup\limits_{x \in \Lambda}{f(x)}$. Let us suppose that there exists $i \in \llbracket 1,\mathcal{K} \rrbracket$ such that $a_i = 0$. Without loss of generality by rearranging the terms $a_k$ we can take $i = \mathcal{K}$. We set
$$ k^* := \max \big\{1 \leq i \leq \mathcal{K}\,\big|\, a_i > 0 \big\} < \mathcal{K}, $$
well-defined as $\displaystyle\sum_{k=1}^{\mathcal{K}}{a_k} = 1$.
Let $0 < \varepsilon < a_{k^*}$, we have that
$$ f(\underbrace{a_1,\dots,a_{k^*} - \varepsilon}_{k^* \text{ first terms}}, \varepsilon,0,\dots,0) - f(\underbrace{a_1,\dots,a_{k^*}}_{k^* \text{ first terms}},0,\dots,0) = \varepsilon(a_{k^*} - \varepsilon) \prod_{i=1}^{k^* - 1}{(1 + a_i)} > 0 $$
giving a contradiction. Hence for all $i \in \llbracket 1,\mathcal{K} \rrbracket$, $a_i > 0$ and $\sup\limits_{x \in \Lambda}{f(x)} = \sup\limits_{x \in \Lambda^*}{f(x)} = f(a)$. Let us set
$$ \begin{array}{ccccc}
g & : & \mathbb{R}^{\mathcal{K}} & \longrightarrow & \mathbb{R} \\
 & & (x_1,\dots,x_{\mathcal{K}}) & \longmapsto & \displaystyle\sum_{k=1}^{\mathcal{K}}{x_k} - \lambda N \\
\end{array} $$
hence $\big\{x \in (\mathbb{R_+^*})^{\mathcal{K}} \,\big|\, g(x) = 0 \big\} = \Lambda^*$. $f$ and $g$ are of class $\mathcal{C}^1$, so there exists a Lagrange multiplier $\beta \in \mathbb{R}$, such that for all $k \in \llbracket 1,\mathcal{K} \rrbracket$,
$$ \frac{\partial f}{\partial x_k}(a) = \beta \frac{\partial g}{\partial x_k}(a) $$
which reads
$$ \prod_{\substack{1 \leq i \leq \mathcal{K} \\ i \neq k}}{(1 + a_i)} = \frac{\displaystyle\prod_{1 \leq i \leq \mathcal{K}}{(1 + a_i)}}{1 + a_k} = \beta.$$
Thus we have that $a_1 = \dots = a_{\mathcal{K}} = \displaystyle\frac{\lambda N}{\mathcal{K}}$. Noticing that
$$ 1 + \sum_{k=1}^{\mathcal{K}}{\lambda^k \sum_{1 \leq i_1 < i_2 < \dots < i_k \leq \mathcal{K}}{n_{i_1} n_{i_2} \dots n_{i_k}}} = \prod_{k=1}^{\mathcal{K}}{(1 + \lambda n_k)}, $$
we get that
\begin{align*}
  \overline{S}_{N,\mathcal{K}}(n_1,\dots,n_{\mathcal{K}}) &\leq \overline{S}_{N,\mathcal{K}}\left(\frac{N}{\mathcal{K}},\dots,\frac{N}{\mathcal{K}}\right) \\
  &\leq \frac{1}{N} \sum_{k=1}^{\mathcal{K}}{\frac{N}{\mathcal{K}} S \prod_{i=1}^{k-1}{\left(1 + \frac{\lambda N}{\mathcal{K}}\right)}} \\
  &\leq \frac{S}{\mathcal{K}} \sum_{k=0}^{\mathcal{K} - 1}{\left(1 + \frac{\lambda N}{\mathcal{K}}\right)^k} \\
  &\leq \frac{S}{\mathcal{K}} \frac{\left(1 + \frac{\lambda N}{\mathcal{K}}\right)^{\mathcal{K}} - 1}{\frac{\lambda N}{\mathcal{K}}} \\
  &\leq S \frac{e^{\lambda N} - 1}{\lambda N}.
\end{align*}
Besides we have also
\begin{align*}
  \overline{S}_{N,\mathcal{K}}(n_1,\dots,n_{\mathcal{K}}) &\geq \frac{1}{N}\sum_{k=1}^{\mathcal{K}}{n_k S\left(1 + \lambda \sum_{l=1}^{k-1}{n_l}\right)} \\
  &\geq S + \frac{\lambda S}{N} \sum_{1 \leq l < k \leq \mathcal{K}}{n_k n_l} \\
  &\geq S + \frac{1}{2 N} \lambda S \sum_{\substack{1 \leq l,k \leq \mathcal{K} \\ l \neq k}}{n_k n_l} \\
  &\geq S + \frac{1}{2 N} \lambda S \left(\sum_{1 \leq k,l \leq \mathcal{K}}{n_k n_l} - \sum_{k=1}^{\mathcal{K}}{n_k^2}\right) \\
  &\geq S + \frac{1}{2} \lambda S \left(N - \frac{1}{N}\sum_{k=1}^{\mathcal{K}}{n_k^2}\right) \\
  &\geq S + \frac{1}{2} \lambda S \left(N - \sup\limits_{1 \leq k \leq \mathcal{K}}{n_k}\right)
\end{align*}
since the following inequality holds
$$ \frac{N}{\mathcal{K}} \leq \frac{1}{N}\sum_{k=1}^{\mathcal{K}}{n_k^2} \leq \sup\limits_{1 \leq k \leq \mathcal{K}}{n_k}. $$
\end{proof}

\subsection{Proof of Theorem \ref{NK execution strategies thm - limits}}
\label{proof NK execution strategies thm - limits}

\begin{proof}
Let us notice that 
$$ 1 + \sum_{k=1}^{\mathcal{K}}{\lambda^k \sum_{1 \leq i_1 < i_2 < \dots < i_k \leq \mathcal{K}}{n_{i_1} n_{i_2} \dots n_{i_k}}} = \prod_{k=1}^{\mathcal{K}}{(1 + \lambda n_k)}. $$
We have
\begin{align*}
  0 \leq \lambda N - \ln \left(\prod_{k=1}^{\mathcal{K}}{(1 + \lambda n_k)}\right) &= \sum_{k=1}^{\mathcal{K}}{\left(\lambda n_k - \ln(1 + \lambda n_k)\right)} \\
  &\leq \frac{\lambda^2}{2} \sum_{k=1}^{\mathcal{K}}{n_k^2} \\
  &\leq \frac{\lambda^2}{2} N \sup\limits_{1 \leq k \leq \mathcal{K}}{n_k} \xrightarrow[\mathcal{K} \rightarrow +\infty]{} 0.
\end{align*}
Hence
$$ \lim\limits_{\mathcal{K} \rightarrow +\infty}{\left(1 + \sum_{k=1}^{\mathcal{K}}{\lambda^k \sum_{1 \leq i_1 < i_2 < \dots < i_k \leq \mathcal{K}}{n_{i_1} n_{i_2} \dots n_{i_k}}}\right)} = e^{\gamma \lambda N}. $$
Besides we also have that
\begin{align*}
  \left| \frac{e^{\lambda N} - 1}{\lambda} - \sum_{k=1}^{\mathcal{K}}{n_k \prod_{i=1}^{k-1}{(1 + \lambda n_i)}} \right| &=
  \left| \int_{0}^{N}{e^{\lambda n}\,\mathrm{d}n} - \sum_{k=1}^{\mathcal{K}}{n_k \prod_{i=1}^{k-1}{(1 + \lambda n_i)}} \right| \\
  &\leq \left| \int_{0}^{N}{e^{\lambda n}\,\mathrm{d}n} - \sum_{k=1}^{\mathcal{K}}{n_k e^{\lambda \sum_{i=1}^{k-1}{n_i}}} \right| \\
  &\qquad + \left| \sum_{k=1}^{\mathcal{K}}{n_k e^{\lambda \sum_{i=1}^{k-1}{n_i}}} - \sum_{k=1}^{\mathcal{K}}{n_k e^{\sum_{i=1}^{k-1}{\ln(1 + \lambda n_i)}}} \right| \\
  &\leq \left| \int_{0}^{N}{e^{\lambda n}\,\mathrm{d}n} - \sum_{k=1}^{\mathcal{K}}{n_k e^{\lambda \sum_{i=1}^{k-1}{n_i}}} \right| \\
  &\qquad + \sum_{k=1}^{\mathcal{K}}{n_k \left(e^{\lambda \sum_{i=1}^{k-1}{n_i}} - e^{\sum_{i=1}^{k-1}{\ln(1 + \lambda n_i)}}\right)}.
\end{align*}
In addition we have that,
\begin{align*}
  0 &\leq \sum_{k=1}^{\mathcal{K}}{n_k \left(e^{\lambda \sum_{i=1}^{k-1}{n_i}} - e^{\sum_{i=1}^{k-1}{\ln(1 + \lambda n_i)}}\right)} \\
  &\leq e^{\lambda N} \sum_{k=1}^{\mathcal{K}}{n_k \left(\lambda \sum_{i=1}^{k-1}{n_i} - \sum_{i=1}^{k-1}{\ln(1 + \lambda n_i)}\right)} \\
  &\leq e^{\lambda N} \sum_{k=1}^{\mathcal{K}}{n_k \frac{\lambda^2}{2} N \sup\limits_{1 \leq k \leq \mathcal{K}}{n_k}} \\
  &\leq e^{\lambda N} \frac{(\lambda N)^2}{2} \sup\limits_{1 \leq k \leq \mathcal{K}}{n_k}
\end{align*}
where we have used that for all $(a,b) \in [0,\lambda N]^2$, $|e^a - e^b| \leq e^{\lambda N}|a -b|$. As
$$ \lim\limits_{\mathcal{K} \rightarrow +\infty}\sup\limits_{1 \leq k \leq \mathcal{K}}{|n_k|} = 0, $$
we get that
$$ \lim\limits_{\mathcal{K} \rightarrow +\infty}{\sum_{k=1}^{\mathcal{K}}{n_k \prod_{i=1}^{k-1}{(1 + \lambda n_i)}}} = \int_{0}^{N}{e^{\lambda n}\,\mathrm{d}n}. $$
Consequently,
$$ \lim\limits_{\mathcal{K} \rightarrow +\infty}{\sum_{k=1}^{\mathcal{K}}{n_k \prod_{i=1}^{k-1}{(1 + \lambda n_i)}}} = \frac{e^{\lambda N} - 1}{\lambda}. $$
\end{proof}

\subsection{Proof of Theorem \ref{thm time continuous formulation}}

\begin{prop_appendix}
\label{prop discrete trading strategy delta}
Suppose the assumptions of Theorem \ref{existence and uniqueness of a strong solution} hold. Let $(\delta^n)_{n \in \mathbb{N}}$ a sequence of processes defined on $[0,T]$ such that for all $t \in [0,T]$,
\begin{equation}
\label{def discrete trading strategy delta}
\delta_t^n := \sum_{i=0}^{n-1}{\delta_{t_i^n} \mathbbm{1}_{\{t_i^n \leq t < t_{i+1}^n\}}} + \delta_T \mathbbm{1}_{\{t = T\}}.
\end{equation}
Assume that $\lim\limits_{n \rightarrow +\infty}{\Delta_n} = 0$. Then there exists a constant $C > 0$ such that for all $n \in \mathbb{N}^*$,
$$ \sup_{t \in [0,T]}{\mathbb{E}[|\delta_t - \delta_t^n|^2]} \leq C \Delta_n. $$
\end{prop_appendix}

\begin{proof}
Let $t \in [0,T]$, if $t = T$, then we have $\delta_t^n = \delta_t$. Let us take now $t \in [t_{i-1}^n, t_i^n)$ with $i \in \llbracket 1,n \rrbracket$. For all $t \in [t_{i-1}^n, t_i^n)$,
\begin{align*}
\delta_t - \delta_t^n &= \delta_t - \delta_{t_{i-1}^n} \\
                      &= \int_{t_{i-1}^n}^{t}{\alpha_s\,\mathrm{d}W_s} + \int_{t_{i-1}^n}^{t}{\beta_s\,\mathrm{d}s}.
\end{align*}
This gives by Ito's lemma
\begin{align*}
\mathrm{d}|\delta_t - \delta_{t_{i-1}^n}|^2 &= 2(\delta_t - \delta_{t_{i-1}^n})(\alpha_t \mathrm{d}W_t + \beta_t \mathrm{d}t) + \alpha_t^2 \mathrm{d}t \\
&= 2(\delta_t - \delta_{t_{i-1}^n})\alpha_t \mathrm{d}W_t + [2(\delta_t - \delta_{t_{i-1}^n})\beta_t + \alpha_t^2] \mathrm{d}t
\end{align*}
and for all $t \in [t_{i-1}^n, t_i^n)$,
\begin{align*}
\mathbb{E}[|\delta_t - \delta_{t_{i-1}^n}|^2] &= 2 \mathbb{E}\left[\int_{t_{i-1}^n}^{t}{(\delta_s - \delta_{t_{i-1}^n})\beta_s\,\mathrm{d}s}\right] + \mathbb{E}\left[\int_{t_{i-1}^n}^{t}{\alpha_s^2 \,\mathrm{d}s}\right] \\
&\leq \mathbb{E}\left[\int_{t_{i-1}^n}^{t}{|\delta_s - \delta_{t_{i-1}^n}|^2\,\mathrm{d}s}\right] + \mathbb{E}\left[\int_{t_{i-1}^n}^{t}{\beta_s^2\,\mathrm{d}s}\right] + \mathbb{E}\left[\int_{t_{i-1}^n}^{t}{\alpha_s^2 \,\mathrm{d}s}\right] \\
&\leq \int_{t_{i-1}^n}^{t}{\mathbb{E}[|\delta_s - \delta_{t_{i-1}^n}|^2]\,\mathrm{d}s} + \int_{t_{i-1}^n}^{t}{(\mathbb{E}[\beta_s^2] + \mathbb{E}[\alpha_s^2]) \,\mathrm{d}s}.
\end{align*}
Therefore there exists a constant $C_{1} > 0$ such that for all $t \in  [t_{i-1}^n, t_i^n)$,
$$ \mathbb{E}[|\delta_t - \delta_{t_{i-1}^n}|^2] \leq  C_1 \int_{t_{i-1}^n}^{t}{\mathbb{E}[|\delta_s - \delta_{t_{i-1}^n}|^2]\,\mathrm{d}s} + C_1 \Delta_n $$
which implies by Gronwall's lemma, for all $t \in  [t_{i-1}^n, t_i^n)$,
$$ \mathbb{E}[|\delta_t - \delta_{t_{i-1}^n}|^2] \leq C_1 \Delta_n \exp{(C_1(t-t_{i-1}^n))} \leq \underbrace{C_1 \Delta_n \exp{(C_1 \Delta_n)}}_{= O(\Delta_n)}, $$
hence there exists a constant $C_2 > 0$ such that for all $t \in  [t_{i-1}^n, t_i^n)$,
$$ \mathbb{E}[|\delta_t - \delta_{t_{i-1}^n}|^2] \leq  C_2 \Delta_n. $$
Let $(\tau_k^{i,n})_{k \in \mathbb{N}}$ a non-decreasing sequence such that $\lim\limits_{k \rightarrow +\infty}{\tau_k^{i,n}} = t_i^n$, by Fatou's lemma we have
$$ \mathbb{E}\left[\liminf\limits_{k \rightarrow +\infty}{|\delta_{\tau_k^{i,n}} - \delta_{t_{i-1}^n}|^2}\right] \leq \liminf\limits_{k \rightarrow +\infty}{\mathbb{E}[|\delta_{\tau_k^{i,n}} - \delta_{t_{i-1}^n}|^2]} \leq C_2 \Delta_n, $$
and
$$ \mathbb{E}\left[\liminf\limits_{k \rightarrow +\infty}{|\delta_{\tau_k^{i,n}} - \delta_{t_{i-1}^n}|^2}\right] = \mathbb{E}\left[\lim\limits_{k \rightarrow +\infty}{|\delta_{\tau_k^{i,n}} - \delta_{t_{i-1}^n}|^2}\right] = \mathbb{E}[|\delta_{t_i^n} - \delta_{t_{i-1}^n}|^2] $$
since $\delta$ is a continuous process. Therefore we have for all $n \in \mathbb{N}^*, i \in \llbracket 1,n \rrbracket, t \in  [t_{i-1}^n, t_i^n]$,
$$ \mathbb{E}[|\delta_t - \delta_t^n|^2] \leq C_2 \Delta_n, $$
and
$$ \sup\limits_{t \in [0,T]}{\mathbb{E}[|\delta_t - \delta_t^n|^2]} = \max\limits_{1 \leq i \leq n}{\sup\limits_{t \in [t_{i-1}^n,t_i^n]}{\mathbb{E}[|\delta_t - \delta_t^n|^2]}}. $$
This gives that for all $n \in \mathbb{N}^*$,
$$ \sup\limits_{t \in [0,T]}{\mathbb{E}[|\delta_t - \delta_t^n|^2]} \leq C_2 \Delta_n. $$
\end{proof}

\begin{prop_appendix}
\label{prop discrete trading strategy spot}
Suppose the assumptions of Theorem \ref{existence and uniqueness of a strong solution} hold. Let $(S^n)_{n \in \mathbb{N}}$ a sequence of processes defined on $[0,T]$ such that for all $t \in [0,T]$,
\begin{align*}
S_t^n &:= S_0 + \sum_{k=1}^{n}{\mathbbm{1}_{\{t_{k-1}^n \leq t \leq T\}}\left(\int_{t_{k-1}^n}^{t \wedge t_k^n}{\tilde{\sigma}(s,S_s^n) \,\mathrm{d}W_s}
+ \int_{t_{k-1}^n}^{t \wedge t_k^n}{\tilde{\nu}(s,S_s^n) \,\mathrm{d}s}\right)} \\
&\quad + \sum_{k=1}^{n}{\mathbbm{1}_{\{t_k^n \leq t \leq T\}} (\delta_{t_k^n} - \delta_{t_{k-1}^n}) \tilde{\lambda}(t_k^n,S_{t_k^n -}^n)}. \numberthis \label{def discrete trading strategy spot}
\end{align*}
Assume that $\Delta_n = O\left(\displaystyle\frac{1}{n}\right)$. Then there exists a constant $C > 0$ such that for all $n \in \mathbb{N}^*$,
$$ \sup_{t \in [0,T]}{\mathbb{E}[|S_t - S_t^n|^2]} \leq C \Delta_n. $$
\end{prop_appendix}

\begin{proof}
Let $n \in \mathbb{N}^*$, $\Delta S^n := S - S^n$ and $t \in [t_{i-1}^n, t_i^n)$ for $i \in \llbracket 1,n\rrbracket$. We have for all $t \in [t_{i-1}^n, t_i^n)$,
\begin{align*}
\mathrm{d}S_t &= \tilde{\sigma}(t,S_t) \mathrm{d}W_t + \left(\tilde{\nu}(t,S_t) + \alpha_t (\tilde{\sigma} \partial_x \tilde{\lambda})(t,S_t)\right) \mathrm{d}t + \tilde{\lambda}(t,S_t) \mathrm{d}\delta_t, \\
\mathrm{d}S_t^n &= \tilde{\sigma}(t,S_t^n) \mathrm{d}W_t + \tilde{\nu}(t,S_t^n) \mathrm{d}t,
\end{align*}
which gives
\begin{align*}
\mathrm{d}\Delta S_t^n &= \mathrm{d}S_t - \mathrm{d}S_t^n \\
&= \big(\tilde{\sigma}(t,S_t) - \tilde{\sigma}(t,S_t^n) + \alpha_t \tilde{\lambda}(t,S_t)\big) \mathrm{d}W_t \\
&\quad + \big(\tilde{\nu}(t,S_t) - \tilde{\nu}(t,S_t^n) + \alpha_t (\tilde{\sigma} \partial_x \tilde{\lambda})(t,S_t) + \beta_t \tilde{\lambda}(t,S_t) \big) \mathrm{d}t.
\end{align*}
Hence by Ito's lemma,
$$ \mathrm{d}|\Delta S_t^n|^2 = 2 \Delta S_t^n \mathrm{d}\Delta S_t^n + \mathrm{d}\langle \Delta S^n, \Delta S^n \rangle_t. $$
Therefore for all $t \in [t_{i-1}^n, t_i^n)$,
\begin{align*}
\mathbb{E}[|\Delta S_t^n|^2] &= \mathbb{E}[|\Delta S_{t_{i-1}^n}^n|^2] \\
& \quad + 2\mathbb{E}\left[\int_{t_{i-1}^n}^{t}{\Delta S_s^n \big(\tilde{\nu}(s,S_s) - \tilde{\nu}(s,S_s^n) + \alpha_s (\tilde\sigma \partial_x \tilde{\lambda})(s,S_s) + \beta_s \tilde{\lambda}(s,S_s) \big)\,\mathrm{d}s}\right] \\
& \quad + \mathbb{E}\left[\int_{t_{i-1}^n}^{t}{\big(\tilde{\sigma}(s,S_s) - \tilde{\sigma}(s,S_s^n) + \alpha_s \tilde{\lambda}(s,S_s)\big)^2\,\mathrm{d}s}\right] \\
&= \mathbb{E}[|\Delta S_{t_{i-1}^n}^n|^2] \\
& \quad + 2\mathbb{E}\left[\int_{t_{i-1}^n}^{t}{\Delta S_s^n \big(\tilde{\nu}(s,S_s) - \tilde{\nu}(s,S_s^n)\big)\,\mathrm{d}s}\right] \\
& \quad + 2\mathbb{E}\left[\int_{t_{i-1}^n}^{t}{\Delta S_s^n \big(\alpha_s (\tilde{\sigma} \partial_x \tilde{\lambda})(s,S_s) + \beta_s \tilde{\lambda}(s,S_s) \big)\,\mathrm{d}s}\right] \\
& \quad + \mathbb{E}\left[\int_{t_{i-1}^n}^{t}{\big(\tilde{\sigma}(s,S_s) - \tilde{\sigma}(s,S_s^n) + \alpha_s \tilde{\lambda}(s,S_s)\big)^2\,\mathrm{d}s}\right] \\
&\leq \mathbb{E}[|\Delta S_{t_{i-1}^n}^n|^2] \\
& \quad + 2\mathbb{E}\left[\int_{t_{i-1}^n}^{t}{|\Delta S_s^n| |\tilde{\nu}(s,S_s) - \tilde{\nu}(s,S_s^n)|\,\mathrm{d}s}\right] \\
& \quad + \mathbb{E}\left[\int_{t_{i-1}^n}^{t}{(|\Delta S_s^n|^2 + |\alpha_s (\tilde{\sigma} \partial_x \tilde{\lambda})(s,S_s) + \beta_s \tilde{\lambda}(s,S_s)|^2) \,\mathrm{d}s}\right] \\
& \quad + 2 \mathbb{E}\left[\int_{t_{i-1}^n}^{t}{(|\tilde{\sigma}(s,S_s) - \tilde{\sigma}(s,S_s^n)|^2 + |\alpha_s \tilde{\lambda}(s,S_s)|^2) \,\mathrm{d}s}\right], \\
\end{align*}
hence there exists a constant $C_1 > 0$ such that for all $t \in [t_{i-1}^n,t_i^n)$,
\begin{equation}
\label{proof discrete trading strategy spot 1}
\mathbb{E}[|\Delta S_t^n|^2] \leq \mathbb{E}[|\Delta S_{t_{i-1}^n}^n|^2] + C_1 \Delta_n + C_1 \int_{t_{i-1}^n}^{t}{\mathbb{E}[|\Delta S_s^n|^2]\,\mathrm{d}s}.
\end{equation}
Applying Gronwall's lemma we obtain for all $t \in [t_{i-1}^n,t_i^n)$,
\begin{align*}
\mathbb{E}[|\Delta S_t^n|^2] &\leq \left(\mathbb{E}[|\Delta S_{t_{i-1}^n}^n|^2] + C_1 \Delta_n \right) \exp(C_1(t-t_{i-1}^n)) \\
&\leq \left(\mathbb{E}[|\Delta S_{t_{i-1}^n}^n|^2] + C_1 \Delta_n \right) \exp(C_1 \Delta_n) \numberthis \label{proof discrete trading strategy spot 2}.
\end{align*}
Plugging (\ref{proof discrete trading strategy spot 2}) in (\ref{proof discrete trading strategy spot 1}) leads to, for all $t \in [t_{i-1}^n,t_i^n)$,
\begin{align*}
\mathbb{E}[|\Delta S_t^n|^2] &\leq \mathbb{E}[|\Delta S_{t_{i-1}^n}^n|^2] + C_1 \Delta_n + C_1 \Delta_n \left(\mathbb{E}[|\Delta S_{t_{i-1}^n}^n|^2] + C_1 \Delta_n \right) \exp(C_1 \Delta_n) \\
&\leq \mathbb{E}[|\Delta S_{t_{i-1}^n}^n|^2] \left(1 + \underbrace{C_1 \Delta_n  \exp\left(C_1 \Delta_n \right)}_{= O(\Delta_n)}\right) + \underbrace{C_1 \Delta_n + C_1^2 \Delta_n^2 \exp(C_1 \Delta_n)}_{= O(\Delta_n)}.
\end{align*}
This implies that there exists a constant $C_2 > 0$ such that for all $t \in [t_{i-1}^n, t_i^n)$,
\begin{equation}
\label{proof discrete trading strategy spot 3}
\mathbb{E}[|\Delta S_t^n|^2] \leq \mathbb{E}[|\Delta S_{t_{i-1}^n}^n|^2](1 + C_2 \Delta_n) + C_2 \Delta_n.
\end{equation}
We have
$$ \mathrm{d}((\delta_t - \delta_{t_{i-1}^n})\tilde{\lambda}(t,S_t^n)) = (\delta_t - \delta_{t_{i-1}^n}) \mathrm{d}(\tilde{\lambda}(t,S_t^n)) + \tilde{\lambda}(t,S_t^n) \mathrm{d}(\delta_t - \delta_{t_{i-1}^n}) + \mathrm{d}\langle \delta - \delta_{t_{i-1}^n}, \tilde{\lambda}(.,S^n)\rangle_t$$
and
$$ \mathrm{d}\tilde{\lambda}(t,S_t^n) = (\tilde{\sigma} \partial_x \tilde{\lambda})(t,S_t^n) \mathrm{d}W_t + \left(\partial_t \tilde{\lambda} + \tilde{\nu} \partial_x \tilde{\lambda} + \frac{1}{2} \tilde{\sigma}^2 \partial_{xx}\tilde{\lambda} \right)(t,S_t^n) \mathrm{d}t. $$
This gives that
\begin{align*}
\mathrm{d}((\delta_t - \delta_{t_{i-1}^n}) \tilde{\lambda}(t,S_t^n)) &= \big((\delta_t - \delta_{t_{i-1}^n})(\tilde{\sigma} \partial_x \tilde{\lambda})(t,S_t^n) + \alpha_t \tilde{\lambda}(t,S_t^n)\big) \mathrm{d}W_t \\
&\quad + (\delta_t - \delta_{t_{i-1}^n})\left(\partial_t \tilde{\lambda} + \tilde{\nu} \partial_x \tilde{\lambda} + \frac{1}{2} \tilde{\sigma}^2 \partial_{xx}\tilde{\lambda} \right)(t,S_t^n) \mathrm{d}t \\
&\quad + \big(\beta_t \tilde{\lambda}(t,S_t^n) + \alpha_t (\tilde{\sigma}\partial_x \tilde{\lambda})(t,S_t^n)\big) \mathrm{d}t,
\end{align*}
which implies that for all $t \in [t_{i-1}^n, t_i^n)$,
$$ (\delta_t - \delta_{t_{i-1}^n}) \tilde{\lambda}(t,S_t^n) = \mathcal{R}_t^{i,n}, $$
by setting for all $t \in [t_{i-1}^n, t_i^n]$,
\begin{align*}
\mathcal{R}_t^{i,n} &:= \int_{t_{i-1}^n}^{t}{\big((\delta_s - \delta_{t_{i-1}^n})(\tilde{\sigma} \partial_x \tilde{\lambda})(s,S_s^n) + \alpha_s \tilde{\lambda}(s,S_s^n)\big) \,\mathrm{d}W_s} \\
&\quad + \int_{t_{i-1}^n}^{t}{(\delta_s - \delta_{t_{i-1}^n})\left(\partial_t \tilde{\lambda} + \tilde{\nu} \partial_x \tilde{\lambda} + \frac{1}{2} \tilde{\sigma}^2 \partial_{xx}\tilde{\lambda} \right)(s,S_s^n) \,\mathrm{d}s} \\
&\quad + \int_{t_{i-1}^n}^{t}{\big(\beta_s \tilde{\lambda}(s,S_s^n) + \alpha_s (\tilde{\sigma}\partial_x \tilde{\lambda})(s,S_s^n)\big) \,\mathrm{d}s}.
\end{align*}
We set for all $t \in [t_{i-1}^n, t_i^n)$, $\tilde{S}_t^n := S_t^n + (\delta_t - \delta_{t_{i-1}^n})\tilde{\lambda}(t,S_t^n)$. Hence 
$$ \lim\limits_{t \rightarrow (t_i^n)^-}{\tilde{S}_t^n} = S_{t_i^n -}^n + (\delta_{t_i^n} - \delta_{t_{i-1}^n})\tilde{\lambda}(t_i^n, S_{t_i^n -}^n) = S_{t_i^n}^n .$$
Let us consider now for all  $t \in [t_{i-1}^n, t_i^n)$, $\Delta \tilde{S}_t^n := S_t - \tilde{S}_t^n = \Delta S_t^n - \mathcal{R}_t^{i,n}$.
We have
\begin{align*}
\mathrm{d}\Delta \tilde{S}_t^n &= \mathrm{d}\Delta S_t^n - \mathrm{d} \mathcal{R}_t^{i,n} \\
&= \big(\tilde{\sigma}(t,S_t) - \tilde{\sigma}(t,S_t^n) + \alpha_t \tilde{\lambda}(t,S_t)\big) \mathrm{d}W_t \\
&\quad + \big(\tilde{\nu}(t,S_t) - \tilde{\nu}(t,S_t^n) + \alpha_t (\tilde{\sigma} \partial_x \tilde{\lambda})(t,S_t) + \beta_t \tilde{\lambda}(t,S_t) \big) \mathrm{d}t \\
&\quad - \big((\delta_t - \delta_{t_{i-1}^n})(\tilde{\sigma} \partial_x \tilde{\lambda})(t,S_t^n) + \alpha_t \tilde{\lambda}(t,S_t^n)\big) \mathrm{d}W_t \\
&\quad - (\delta_t - \delta_{t_{i-1}^n})\left(\partial_t \tilde{\lambda} + \tilde{\nu} \partial_x \tilde{\lambda} + \frac{1}{2} \tilde{\sigma}^2 \partial_{xx}\tilde{\lambda} \right)(t,S_t^n) \mathrm{d}t \\
&\quad - \big(\beta_t \tilde{\lambda}(t,S_t^n) + \alpha_t (\tilde{\sigma}\partial_x \tilde{\lambda})(t,S_t^n)\big) \mathrm{d}t \\
\end{align*}
This gives by Ito's lemma,
$$ \mathrm{d}|\Delta \tilde{S}_t^n|^2 = 2\Delta \tilde{S}_t^n \mathrm{d}\Delta \tilde{S}_t^n + \mathrm{d}\langle \Delta\tilde{S}^n,\Delta\tilde{S}^n \rangle_t $$
which implies that for all $t \in [t_{i-1}^n,t_i^n)$,
\begin{align*}
\mathbb{E}[|\Delta \tilde{S}_t^n|^2] &= \mathbb{E}[|\Delta S_{t_{i-1}^n}^n|^2] \\
&\quad + 2 \mathbb{E}\left[\int_{t_{i-1}^n}^{t}{\Delta\tilde{S}_s^n \big(\tilde{\nu}(s,S_s) - \tilde{\nu}(s,S_s^n)\big)\,\mathrm{d}s}\right] \\
&\quad + 2 \mathbb{E}\left[\int_{t_{i-1}^n}^{t}{\Delta\tilde{S}_s^n \alpha_s\big((\tilde{\sigma} \partial_x \tilde{\lambda})(s,S_s) - (\tilde{\sigma} \partial_x \tilde{\lambda})(s,S_s^n)\big)\,\mathrm{d}s}\right] \\
&\quad + 2 \mathbb{E}\left[\int_{t_{i-1}^n}^{t}{\Delta\tilde{S}_s^n \beta_s (\tilde{\lambda}(s,S_s) - \tilde{\lambda}(s,S_s^n))\,\mathrm{d}s}\right] \\
&\quad -2 \mathbb{E}\left[\int_{t_{i-1}^n}^{t}{\Delta\tilde{S}_s^n (\delta_s - \delta_{t_{i-1}^n})\left(\partial_t \tilde{\lambda} + \tilde{\nu} \partial_x \tilde{\lambda} + \frac{1}{2} \tilde{\sigma}^2 \partial_{xx}\tilde{\lambda} \right)(s,S_s^n) \,\mathrm{d}s} \right] \\
&\quad + \mathbb{E}\left[\int_{t_{i-1}^n}^{t}{\bigg(\tilde{\sigma}(s,S_s) - \tilde{\sigma}(s,S_s^n) + \alpha_s \big(\tilde{\lambda}(s,S_s) - \tilde{\lambda}(s,S_s^n)\big) - (\delta_s - \delta_{t_{i-1}^n})(\tilde{\sigma} \partial_x \tilde{\lambda})(s,S_s^n)\bigg)^2 \,\mathrm{d}s}\right] \\
\end{align*}
leading to for all $t \in [t_{i-1}^n,t_i^n)$,
\begin{align*}
\mathbb{E}[|\Delta \tilde{S}_t^n|^2] &\leq \mathbb{E}[|\Delta S_{t_{i-1}^n}^n|^2] \\
&\leq \mathbb{E}\left[\int_{t_{i-1}^n}^{t}{|\Delta\tilde{S}_s^n| |\tilde{\nu}(s,S_s) - \tilde{\nu}(s,S_s^n)| \,\mathrm{d}s}\right] \\
&\quad + 2 \mathbb{E}\left[\int_{t_{i-1}^n}^{t}{|\Delta\tilde{S}_s^n| |\alpha_s| |(\tilde{\sigma} \partial_x \tilde{\lambda})(s,S_s) - (\tilde{\sigma} \partial_x \tilde{\lambda})(s,S_s^n)|\,\mathrm{d}s}\right] \\
&\quad + 2 \mathbb{E}\left[\int_{t_{i-1}^n}^{t}{|\Delta\tilde{S}_s^n| |\beta_s| |\tilde{\lambda}(s,S_s) - \tilde{\lambda}(s,S_s^n)|\,\mathrm{d}s}\right] \\
&\quad + \mathbb{E}\left[\int_{t_{i-1}^n}^{t}{\left(|\Delta\tilde{S}_s^n|^2 + |\delta_s - \delta_{t_{i-1}^n}|^2 \left|\left(\partial_t \tilde{\lambda} + \tilde{\nu} \partial_x \tilde{\lambda} + \frac{1}{2} \tilde{\sigma}^2 \partial_{xx}\tilde{\lambda} \right)(s,S_s^n)\right|^2 \right) \,\mathrm{d}s} \right] \\
&\quad + 3 \mathbb{E}\left[\int_{t_{i-1}^n}^{t}{|\tilde{\sigma}(s,S_s) - \tilde{\sigma}(s,S_s^n)|^2 \,\mathrm{d}s}\right] \\
&\quad + 3 \mathbb{E}\left[\int_{t_{i-1}^n}^{t}{|\alpha_s^2| |\tilde{\lambda}(s,S_s) - \tilde{\lambda}(s,S_s^n)|^2 \,\mathrm{d}s}\right] \\
&\quad + 3 \mathbb{E}\left[\int_{t_{i-1}^n}^{t}{|\delta_s - \delta_{t_{i-1}^n}|^2 |(\tilde{\sigma} \partial_x \tilde{\lambda})(s,S_s^n)|^2 \,\mathrm{d}s}\right]
\end{align*}
hence there exists a constant $C_3 > 0$ such that for all $t \in [t_{i-1}^n,t_i^n)$,
\begin{equation}
\label{proof discrete trading strategy spot 4}
\mathbb{E}[|\Delta \tilde{S}_t^n|^2] \leq \mathbb{E}[|\Delta S_{t_{i-1}^n}^n|^2] + C_3 \int_{t_{i-1}^n}^{t}{\left(\mathbb{E}[|\delta_s - \delta_{t_{i-1}^n}|^2] + \mathbb{E}[|\Delta S_s^n|^2] + \mathbb{E}[|\Delta \tilde{S}_s^n|^2]\right)\,\mathrm{d}s}.
\end{equation}
Besides by Proposition \ref{prop discrete trading strategy delta} there exists a constant $C_4 > 0$ such that for all $t \in [t_{i-1}^n, t_i^n)$,
\begin{equation}
\label{proof discrete trading strategy spot 5}
\mathbb{E}[|\delta_t - \delta_{t_{i-1}^n}|^2] \leq C_4 \Delta_n,
\end{equation}
therefore by plugging (\ref{proof discrete trading strategy spot 3}) and (\ref{proof discrete trading strategy spot 5})
in the inequality (\ref{proof discrete trading strategy spot 4}) we obtain that for all $t \in [t_{i-1}^n, t_i^n)$,
\begin{align*}
\mathbb{E}[|\Delta \tilde{S}_t^n|^2] &\leq \mathbb{E}[|\Delta S_{t_{i-1}^n}^n|^2] + C_3 C_4 \Delta_n^2 + C_3 \Delta_n \left(\mathbb{E}[|\Delta S_{t_{i-1}^n}^n|^2](1 + C_2 \Delta_n) + C_2 \Delta_n \right) + C_3 \int_{t_{i-1}^n}^{t}{\mathbb{E}[|\Delta \tilde{S}_s^n|^2]\,\mathrm{d}s} \\
&\leq \mathbb{E}[|\Delta S_{t_{i-1}^n}^n|^2]\left(1 + \underbrace{C_3 \Delta_n + C_2 C_3 \Delta_n^2}_{= O(\Delta_n)}\right) + \underbrace{(C_2 + C_4) C_3 \Delta_n^2}_{= O(\Delta_n^2)} + C_3 \int_{t_{i-1}^n}^{t}{\mathbb{E}[|\Delta \tilde{S}_s^n|^2]\,\mathrm{d}s}.
\end{align*}
Hence there exists a constant $C_5 > 0$ such that for all $t \in [t_{i-1}^n, t_i^n)$,
\begin{equation}
\label{proof discrete trading strategy spot 6}
\mathbb{E}[|\Delta \tilde{S}_t^n|^2] \leq \mathbb{E}[|\Delta S_{t_{i-1}^n}^n|^2](1 + C_5 \Delta_n) + C_5 \Delta_n^2 + C_5 \int_{t_{i-1}^n}^{t}{\mathbb{E}[|\Delta \tilde{S}_s^n|^2]\,\mathrm{d}s},
\end{equation}
which implies by Gronwall's lemma that for all $t \in [t_{i-1}^n, t_i^n)$,
\begin{align*}
\mathbb{E}[|\Delta \tilde{S}_t^n|^2] &\leq \exp({C_5(t-t_{i-1}^n)})\left(\mathbb{E}[|\Delta S_{t_{i-1}^n}^n|^2](1 + C_5 \Delta_n) + C_5 \Delta_n^2\right) \\
&\leq \exp(C_5 \Delta_n)\left(\mathbb{E}[|\Delta S_{t_{i-1}^n}^n|^2](1 + C_5 \Delta_n) + C_5 \Delta_n^2\right) \numberthis \label{proof discrete trading strategy spot 7}.
\end{align*}
By plugging the inequality (\ref{proof discrete trading strategy spot 7}) in (\ref{proof discrete trading strategy spot 6}), we get
\begin{align*}
\mathbb{E}[|\Delta \tilde{S}_t^n|^2] &\leq \mathbb{E}[|\Delta S_{t_{i-1}^n}^n|^2](1 + C_5 \Delta_n) + C_5 \Delta_n^2 + C_5 \Delta_n \exp(C_5 \Delta_n)\left(\mathbb{E}[|\Delta S_{t_{i-1}^n}^n|^2](1 + C_5 \Delta_n) + C_5 \Delta_n^2\right) \\
&\leq \mathbb{E}[|\Delta S_{t_{i-1}^n}^n|^2]\left(1 + \underbrace{C_5 \Delta_n + C_5 \Delta_n \exp(C_5 \Delta_n) + C_5^2 \Delta_n^2 \exp(C_5 \Delta_n)}_{= O(\Delta_n)}\right) \\
&\quad + \underbrace{C_5 \Delta_n^2 + C_5^2 \Delta_n^3 \exp(C_5 \Delta_n)}_{= O(\Delta_n^2)}.
\end{align*}
Therefore there exists a constant $C_6 > 0$ such that for all $t \in [t_{i-1}^n, t_i^n)$,
$$ \mathbb{E}[|\Delta \tilde{S}_t^n|^2] \leq \mathbb{E}[|\Delta S_{t_{i-1}^n}^n|^2](1 + C_6 \Delta_n) + C_6 \Delta_n^2. $$
Let $(\tau_k^{i,n})_{k \in \mathbb{N}}$ a non-decreasing sequence such that $\lim\limits_{k \rightarrow +\infty}{\tau_k^{i,n}} = t_i^n$, by Fatou's lemma we have
$$ \mathbb{E}\left[\liminf\limits_{k \rightarrow +\infty}{|\Delta \tilde{S}_{\tau_k^{i,n}}^n|^2}\right] \leq \liminf\limits_{k \rightarrow +\infty}{\mathbb{E}[|\Delta \tilde{S}_{\tau_k^{i,n}}^n|^2]} \leq \mathbb{E}[|\Delta S_{t_{i-1}^n}^n|^2](1 + C_6 \Delta_n) + C_6 \Delta_n^2, $$
and
$$ \mathbb{E}\left[\liminf\limits_{k \rightarrow +\infty}{|\Delta \tilde{S}_{\tau_k^{i,n}}^n|^2}\right] = \mathbb{E}\left[\lim\limits_{k \rightarrow +\infty}{|\Delta \tilde{S}_{\tau_k^{i,n}}^n|^2}\right] = \mathbb{E}[|\Delta S_{t_i^n}^n|^2] $$
since $\lim\limits_{t \rightarrow (t_i^n)^-}{\Delta\tilde{S}_t^n} = \Delta S_{t_i^n}^n$. Therefore we have for all $n \in \mathbb{N}^*, i \in \llbracket 1,n \rrbracket$,
\begin{align*}
\mathbb{E}[|\Delta S_{t_i^n}^n|^2] &\leq \mathbb{E}[|\Delta S_{t_{i-1}^n}^n|^2](1 + C_6 \Delta_n) + C_6 \Delta_n^2 \\
&\leq \mathbb{E}[|\Delta S_{t_{i-2}^n}^n|^2](1 + C_6 \Delta_n)^2 + C_6 \Delta_n^2(1 + C_6 \Delta_n) + C_6 \Delta_n^2 \\
&\ldots \\
&\leq \mathbb{E}[|\Delta S_{t_0^n}^n|^2](1 + C_6 \Delta_n)^i + C_6 \Delta_n^2\sum_{k=0}^{i-1}{(1 + C_6 \Delta_n)^k}.
\end{align*}
Considering the fact that $\Delta S_{t_0^n}^n = 0$, we have for all $n \in \mathbb{N}^*, i \in \llbracket 1,n \rrbracket$,
\begin{align*}
\mathbb{E}[|\Delta S_{t_i^n}^n|^2] &\leq C_6 \Delta_n^2\frac{(1 + C_6 \Delta_n)^i - 1}{C_6 \Delta_n} \\
&\leq \Delta_n (1 + C_6 \Delta_n)^n \\
&\leq \Delta_n \exp\big(n \ln(1 + C_6 \Delta_n)\big) \\
&\leq \underbrace{\Delta_n \exp(C_6 n \Delta_n)}_{O(\Delta_n)}.
\end{align*}
Hence there exists a constant $C_7 > 0$ such that for all $n \in \mathbb{N}^*, i \in \llbracket 0,n \rrbracket$,
$$ \mathbb{E}[|\Delta S_{t_i^n}^n|^2] \leq C_7 \Delta_n $$
and for all $n \in \mathbb{N}^*, i \in \llbracket 0,n \rrbracket, t \in [t_{i-1}^n, t_i^n),$
\begin{align*}
\mathbb{E}[|\Delta S_t^n|^2] &\leq \mathbb{E}[|\Delta S_{t_{i-1}^n}^n|^2](1 + C_2 \Delta_n) + C_2 \Delta_n^2 \\
&\leq \underbrace{C_7 \Delta_n (1 + C_2 \Delta_n) + C_2 \Delta_n^2}_{= O(\Delta_n)}
\end{align*}
which implies that there exists a constant $C_8 > 0$ such that for all $n \in \mathbb{N}^*, i \in \llbracket 1,n \rrbracket$,
$$ \sup\limits_{t \in [t_{i-1}^n,t_i^n]}{\mathbb{E}[|\Delta S_t^n|^2]} \leq C_8 \Delta_n. $$
This gives that for all $n \in \mathbb{N}^*$,
$$ \sup\limits_{t \in [0,T]}{\mathbb{E}[|\Delta S_t^n|^2]} = \max\limits_{1 \leq i \leq n}{\sup\limits_{t \in [t_{i-1}^n,t_i^n]}{\mathbb{E}[|\Delta S_t^n|^2]}} \leq C_8 \Delta_n. $$
\end{proof}

\begin{prop_appendix}
\label{prop discrete trading strategy V}
Suppose the assumptions of Theorem \ref{existence and uniqueness of a strong solution} hold. Let $(V^n)_{n \in \mathbb{N}}$ a sequence of processes on $[0,T]$ such that for all $t \in [0,T]$,
\begin{align*}
V_t^n &:= V_0 + \sum_{k=1}^{n}{\mathbbm{1}_{\{t_{k-1}^n \leq t \leq T\}}\delta_{t_{k-1}^n}(S_{t \wedge t_k^n -}^n - S_{t_{k-1}^n}^n)} \\
&\quad + \sum_{k=1}^{n}{\mathbbm{1}_{\{t_k^n \leq t \leq T\}} \left[\frac{1}{2} (\delta_{t_k^n} - \delta_{t_{k-1}^n})^2 \tilde{\lambda}(t_k^n,S_{t_k^n -}^n) + \delta_{t_{k-1}^n}(\delta_{t_k^n} - \delta_{t_{k-1}^n})\tilde{\lambda}(t_k^n,S_{t_k^n -}^n) \right]}. \numberthis \label{def discrete trading strategy V}
\end{align*}
Assume that $\Delta_n = O\left(\displaystyle\frac{1}{n}\right)$. Then there exists a constant $C > 0$ such that for all $n \in \mathbb{N}^*$,
$$ \sup_{t \in [0,T]}{\mathbb{E}[|V_t - V_t^n|^2]} \leq C \Delta_n. $$
\end{prop_appendix}

\begin{proof}
Let $n \in \mathbb{N}^*$, $\Delta V^n := V - V^n$ and $t \in [t_{i-1}^n, t_i^n)$ for $i \in \llbracket 1,n\rrbracket$. We have for all $t \in [t_{i-1}^n, t_i^n)$,
\begin{align*}
\mathrm{d}V_t &= \delta_t \mathrm{d}S_t + \frac{1}{2} \tilde{\lambda}(t,S_t) \alpha_t^2 \mathrm{d}t \\
\mathrm{d}V_t^n &= \delta_{t_{i-1}^n} \mathrm{d}S_t^n,
\end{align*}
which leads to
\begin{align*}
\mathrm{d}\Delta V_t^n &= \mathrm{d}V_t - \mathrm{d}V_t^n \\
&= \big(\delta_t \tilde{\sigma}(t,S_t) - \delta_{t_{i-1}^n} \tilde{\sigma}(t,S_t^n) + \delta_t\alpha_t \tilde{\lambda}(t,S_t)\big) \mathrm{d}W_t \\
&\quad + \left(\delta_t \tilde{\nu}(t,S_t) - \delta_{t_{i-1}^n} \tilde{\nu}(t,S_t^n) + \frac{1}{2} \alpha_t^2 \tilde{\lambda}(t,S_t) + \delta_t\beta_t \tilde{\lambda}(t,S_t) + \delta_t \alpha_t (\tilde{\sigma}\partial_x \tilde{\lambda})(t,S_t)\right) \mathrm{d}t.
\end{align*}
By Ito's lemma we have
$$ \mathrm{d}|\Delta V_t^n|^2 = 2 \Delta V_t^n \mathrm{d}\Delta V_t^n + \mathrm{d}\langle \Delta V^n, \Delta V^n \rangle_t, $$
which gives for all $t \in [t_{i-1}^n,t_i^n)$,
\begin{align*}
\mathbb{E}[|\Delta V_t^n|^2] &= \mathbb{E}[|\Delta V_{t_{i-1}^n}^n|^2] \\
& \quad + 2\mathbb{E}\left[\int_{t_{i-1}^n}^{t}{\Delta V_s^n \big(\delta_s \tilde{\nu}(s,S_s) - \delta_{t_{i-1}^n} \tilde{\nu}(s,S_s^n) \big)\,\mathrm{d}s}\right] \\
& \quad + 2\mathbb{E}\left[\int_{t_{i-1}^n}^{t}{\Delta V_s^n \left(\frac{1}{2} \alpha_s^2 \tilde{\lambda}(s,S_s) + \delta_s\beta_s \tilde{\lambda}(s,S_s) + \delta_s \alpha_s (\tilde{\sigma}\partial_x \tilde{\lambda})(s,S_s) \right)\,\mathrm{d}s}\right] \\
& \quad + \mathbb{E}\left[\int_{t_{i-1}^n}^{t}{\big(\delta_s \tilde{\sigma}(s,S_s) - \delta_{t_{i-1}^n} \tilde{\sigma}(s,S_s^n) + \delta_s\alpha_s \tilde{\lambda}(s,S_s)\big)^2\,\mathrm{d}s}\right] \\
&\leq \mathbb{E}[|\Delta V_{t_{i-1}^n}^n|^2] \\
& \quad + \mathbb{E}\left[\int_{t_{i-1}^n}^{t}{\big(|\Delta V_s^n|^2 + |\delta_s \tilde{\nu}(s,S_s) - \delta_{t_{i-1}^n} \tilde{\nu}(s,S_s^n)|^2\big) \,\mathrm{d}s}\right] \\
& \quad + \mathbb{E}\left[\int_{t_{i-1}^n}^{t}{\left(|\Delta V_s^n|^2 + \left|\frac{1}{2} \alpha_s^2 \tilde{\lambda}(s,S_s) + \delta_s\beta_s \tilde{\lambda}(s,S_s) + \delta_s \alpha_s (\tilde{\sigma}\partial_x \tilde{\lambda})(t,S_s)\right|^2\right) \,\mathrm{d}s}\right] \\
& \quad + 2 \mathbb{E}\left[\int_{t_{i-1}^n}^{t}{\big(|\delta_s \tilde{\sigma}(s,S_s) - \delta_{t_{i-1}^n} \tilde{\sigma}(s,S_s^n)|^2 + |\delta_s\alpha_s f(s,S_s)|^2\big) \,\mathrm{d}s}\right].
\end{align*}
Besides for all function $h : (t,x) \in \mathbb{R}^2 \mapsto \mathbb{R}$ bounded and Lipschitz in the space variable we have for all $t \in [t_{i-1}^n,t_i^n)$,
\begin{align*}
\mathbb{E}[|\delta_s h(s,S_s) - \delta_{t_{i-1}^n} h(s,S_s^n)|^2] &\leq 2\mathbb{E}[|h(s,S_s^n)|^2 |\delta_s - \delta_s^n|^2] + 2\mathbb{E}[|\delta_{t_{i-1}^n}|^2 |h(s,S_s) - h(s,S_s^n)|^2] \\
&\leq 2\mathbb{E}[|h(s,S_s^n)|^2 |\delta_s - \delta_s^n|^2] + 2\sqrt{\mathbb{E}[|\delta_{t_{i-1}^n}|^4]} \sqrt{\mathbb{E}[|h(s,S_s) - h(s,S_s^n)|^4]}
\end{align*}
which implies that there exists a constant $C_h > 0$ such that for all $t \in [t_{i-1}^n,t_i^n)$,
\begin{equation}
\label{useful result 1}
\mathbb{E}[|\delta_s h(s,S_s) - \delta_{t_{i-1}^n} h(s,S_s^n)|^2] \leq C_h \big(\mathbb{E}[|S_s - S_s^n|^2] + \mathbb{E}[|\delta_s - \delta_s^n|^2]\big).
\end{equation}
Hence by (\ref{useful result 1}) there exists a constant $C_1 > 0$ such that for all $t \in [t_{i-1}^n,t_i^n)$,
\begin{align*}
\mathbb{E}[|\Delta V_t^n|^2] &\leq \mathbb{E}[|\Delta V_{t_{i-1}^n}^n|^2] \\ 
&\quad + C_1 \int_{t_{i-1}^n}^{t}{\big(1 + \mathbb{E}[|\Delta V_s^n|^2] + \mathbb{E}[|S_s - S_s^n|^2] + \mathbb{E}[|\delta_s - \delta_s^n|^2] \big) \,\mathrm{d}s}. \numberthis \label{proof discrete trading strategy V 1}
\end{align*}
By plugging the Propositions \ref{prop discrete trading strategy delta} and \ref{prop discrete trading strategy spot} in the inequality (\ref{proof discrete trading strategy V 1}), we obtain that there exists a constant $C_2 > 0$ such that for all $t \in [t_{i-1}^n,t_i^n)$,
\begin{equation}
\label{proof discrete trading strategy V 2}
\mathbb{E}[|\Delta V_t^n|^2] \leq \mathbb{E}[|\Delta V_{t_{i-1}^n}^n|^2] + C_2 \Delta_n + C_2 \int_{t_{i-1}^n}^{t}{\mathbb{E}[|\Delta V_s^n|^2]\,\mathrm{d}s}.
\end{equation}
By applying Gronwall's lemma, similarly as the inequality (\ref{proof discrete trading strategy spot 1}) leads to the inequality (\ref{proof discrete trading strategy spot 3}), the inequality (\ref{proof discrete trading strategy V 2}) implies also that there exists a constant $C_3 > 0$ such that for all $t \in [t_{i-1}^n, t_i^n)$,
\begin{equation}
\label{proof discrete trading strategy V 3}
\mathbb{E}[|\Delta V_t^n|^2] \leq \mathbb{E}[|\Delta V_{t_{i-1}^n}^n|^2](1 + C_3 \Delta_n) + C_3 \Delta_n.
\end{equation}
We have for all $t \in [t_{i-1}^n,t_i^n)$,
$$ \mathrm{d}\big(|\delta_t - \delta_{t_{i-1}^n}|^2 \tilde{\lambda}(t,S_t^n)\big) = |\delta_t - \delta_{t_{i-1}^n}|^2 \mathrm{d}(\tilde{\lambda}(t,S_t^n)) + \tilde{\lambda}(t,S_t^n) \mathrm{d}\big(|\delta_t - \delta_{t_{i-1}^n}|^2\big) + \mathrm{d}\langle |\delta - \delta_{t_{i-1}^n}|^2, \tilde{\lambda}(.,S^n)\rangle_t $$
with
$$ \mathrm{d}\tilde{\lambda}(t,S_t^n) = (\tilde{\sigma} \partial_x \tilde{\lambda})(t,S_t^n) \mathrm{d}W_t + \left(\partial_t \tilde{\lambda} + \tilde{\nu} \partial_x \tilde{\lambda} + \frac{1}{2} \tilde{\sigma}^2 \partial_{xx}\tilde{\lambda} \right)(t,S_t^n) \mathrm{d}t, $$
and
$$ \mathrm{d}|\delta_t - \delta_{t_{i-1}^n}|^2 = 2(\delta_t - \delta_{t_{i-1}^n})\alpha_t \mathrm{d}W_t + \big(2(\delta_t - \delta_{t_{i-1}^n})\beta_t + \alpha_t^2\big) \mathrm{d}t. $$
Therefore for all $t \in [t_{i-1}^n,t_i^n)$,
\begin{align*}
\mathrm{d}\big(|\delta_t - \delta_{t_{i-1}^n}|^2 \tilde{\lambda}(t,S_t^n)\big) &= |\delta_t - \delta_{t_{i-1}^n}|^2 (\tilde{\sigma} \partial_x \tilde{\lambda})(t,S_t^n) \mathrm{d}W_t \\
&\quad + 2(\delta_t - \delta_{t_{i-1}^n}) \alpha_t \tilde{\lambda}(t,S_t^n) \mathrm{d}W_t \\
&\quad + |\delta_t - \delta_{t_{i-1}^n}|^2 \left(\partial_t \tilde{\lambda} + \tilde{\nu} \partial_x \tilde{\lambda} + \frac{1}{2} \tilde{\sigma}^2 \partial_{xx}\tilde{\lambda} \right)(t,S_t^n) \mathrm{d}t \\
&\quad + \big(2(\delta_t - \delta_{t_{i-1}^n})\beta_t + \alpha_t^2\big) \tilde{\lambda}(t,S_t^n) \mathrm{d}t \\
&\quad + 2(\delta_t - \delta_{t_{i-1}^n}) \alpha_t (\tilde{\sigma} \partial_x \tilde{\lambda})(t,S_t^n) \mathrm{d}t,
\end{align*}
which implies that for all $t \in [t_{i-1}^n, t_i^n)$,
$$ |\delta_t - \delta_{t_{i-1}^n}|^2 \tilde{\lambda}(t,S_t^n) = \mathcal{S}_t^{i,n}, $$
by setting for all $t \in [t_{i-1}^n, t_i^n]$,
\begin{align*}
\mathcal{S}_t^{i,n} &:= \int_{t_{i-1}^n}^{t}{|\delta_s - \delta_{t_{i-1}^n}|^2 (\tilde{\sigma} \partial_x \tilde{\lambda})(s,S_s^n) \,\mathrm{d}W_s} \\
&\quad + \int_{t_{i-1}^n}^{t}{2(\delta_s - \delta_{t_{i-1}^n}) \alpha_s \tilde{\lambda}(s,S_s^n) \,\mathrm{d}W_s} \\
&\quad + \int_{t_{i-1}^n}^{t}{|\delta_s - \delta_{t_{i-1}^n}|^2 \left(\partial_t \tilde{\lambda} + \tilde{\nu} \partial_x \tilde{\lambda} + \frac{1}{2} \tilde{\sigma}^2 \partial_{xx}\tilde{\lambda} \right)(s,S_s^n) \,\mathrm{d}s} \\
&\quad + \int_{t_{i-1}^n}^{t}{\big(2(\delta_s - \delta_{t_{i-1}^n})\beta_s + \alpha_s^2\big) \tilde{\lambda}(s,S_s^n) \,\mathrm{d}s} \\
&\quad + \int_{t_{i-1}^n}^{t}{ 2(\delta_s - \delta_{t_{i-1}^n}) \alpha_s (\tilde{\sigma} \partial_x \tilde{\lambda})(s,S_s^n) \,\mathrm{d}s}.
\end{align*}
We set for all $t \in [t_{i-1}^n, t_i^n)$, $\tilde{V}_t^n := V_t^n + \delta_{t_{i-1}^n}(\delta_t - \delta_{t_{i-1}^n})\tilde{\lambda}(t,S_t^n) + \displaystyle\frac{1}{2} (\delta_t - \delta_{t_{i-1}^n})^2 \tilde{\lambda}(t,S_t^n)$. Hence 
$$ \lim\limits_{t \rightarrow (t_i^n)^-}{\tilde{V}_t^n} = V_{t_i^n -}^n + \delta_{t_{i-1}^n}(\delta_{t_i^n} - \delta_{t_{i-1}^n})\tilde{\lambda}(t_i^n, S_{t_i^n -}^n) + \frac{1}{2} (\delta_{t_i^n} - \delta_{t_{i-1}^n})^2 \tilde{\lambda}(t_i^n, S_{t_i^n -}^n) = V_{t_i^n}^n.$$
Let us consider now for all  $t \in [t_{i-1}^n, t_i^n)$, $\Delta \tilde{V}_t^n := V_t - \tilde{V}_t^n = \Delta V_t^n - \delta_{t_{i-1}^n} \mathcal{R}_t^{i,n} - \displaystyle\frac{1}{2} \mathcal{S}_t^{i,n}$,
where $\mathcal{R}^{i,n}$ is defined, as previously, such that for all $t \in [t_{i-1}^n, t_i^n]$,
\begin{align*}
\mathcal{R}_t^{i,n} &:= \int_{t_{i-1}^n}^{t}{\big((\delta_s - \delta_{t_{i-1}^n})(\tilde{\sigma} \partial_x \tilde{\lambda})(s,S_s^n) + \alpha_s \tilde{\lambda}(s,S_s^n)\big) \,\mathrm{d}W_s} \\
&\quad + \int_{t_{i-1}^n}^{t}{(\delta_s - \delta_{t_{i-1}^n})\left(\partial_t \tilde{\lambda} + \tilde{\nu} \partial_x \tilde{\lambda} + \frac{1}{2} \tilde{\sigma}^2 \partial_{xx}\tilde{\lambda} \right)(s,S_s^n) \,\mathrm{d}s} \\
&\quad + \int_{t_{i-1}^n}^{t}{\big(\beta_s \tilde{\lambda}(s,S_s^n) + \alpha_s (\tilde{\sigma}\partial_x \tilde{\lambda})(s,S_s^n)\big) \,\mathrm{d}s}.
\end{align*}
We have by Ito's lemma
$$ \mathrm{d}|\Delta \tilde{V}_t^n|^2 = 2 \Delta \tilde{V}_t^n \mathrm{d}\Delta \tilde{V}_t^n + \mathrm{d}\langle \Delta \tilde{V}^n, \Delta \tilde{V}^n \rangle_t, $$
hence for all $t \in [t_{i-1}^n, t_i^n)$,
\begin{align*}
\mathrm{d}|\Delta\tilde{V}_t^n|^2 &= 2\Delta\tilde{V}_t^n \big(\delta_t \tilde{\sigma}(t,S_t) - \delta_{t_{i-1}^n} \tilde{\sigma}(t,S_t^n) + \delta_t\alpha_t \tilde{\lambda}(t,S_t)\big) \mathrm{d}W_t \\
&\quad + 2\Delta\tilde{V}_t^n \big(\delta_t \tilde{\nu}(t,S_t) - \delta_{t_{i-1}^n} \tilde{\nu}(t,S_t^n) + \frac{1}{2} \alpha_t^2 \tilde{\lambda}(t,S_t) + \delta_t\beta_t \tilde{\lambda}(t,S_t) + \delta_t \alpha_t (\tilde{\sigma}\partial_x \tilde{\lambda})(t,S_t)\big) \mathrm{d}t \\
&\quad - 2\Delta\tilde{V}_t^n \delta_{t_{i-1}^n} \big((\delta_t - \delta_{t_{i-1}^n})(\tilde{\sigma} \partial_x \tilde{\lambda})(t,S_t^n) + \alpha_t \tilde{\lambda}(t,S_t^n)\big)\mathrm{d}W_t \\
&\quad - 2\Delta\tilde{V}_t^n \delta_{t_{i-1}^n} (\delta_t - \delta_{t_{i-1}^n})\left(\partial_t \tilde{\lambda} + \tilde{\nu} \partial_x \tilde{\lambda} + \frac{1}{2} \tilde{\sigma}^2 \partial_{xx}\tilde{\lambda} \right)(t,S_t^n)\mathrm{d}t \\
&\quad - 2\Delta\tilde{V}_t^n \delta_{t_{i-1}^n} \big(\beta_t \tilde{\lambda}(t,S_t^n) + \alpha_t (\tilde{\sigma}\partial_x \tilde{\lambda})(t,S_t^n)\big)\mathrm{d}t \\
&\quad - 2\Delta\tilde{V}_t^n \frac{1}{2} |\delta_t - \delta_{t_{i-1}^n}|^2 (\tilde{\sigma} \partial_x \tilde{\lambda})(t,S_t^n) \mathrm{d}W_t \\
&\quad - 2\Delta\tilde{V}_t^n (\delta_t - \delta_{t_{i-1}^n}) \alpha_t \tilde{\lambda}(t,S_t^n) \mathrm{d}W_t \\
&\quad - 2\Delta\tilde{V}_t^n \frac{1}{2} |\delta_t - \delta_{t_{i-1}^n}|^2 \left(\partial_t \tilde{\lambda} + \tilde{\nu} \partial_x \tilde{\lambda} + \frac{1}{2} \tilde{\sigma}^2 \partial_{xx}\tilde{\lambda} \right)(t,S_t^n) \mathrm{d}t \\
&\quad - 2\Delta\tilde{V}_t^n \frac{1}{2} \big(2(\delta_t - \delta_{t_{i-1}^n})\beta_t + \alpha_t^2\big) \tilde{\lambda}(t,S_t^n) \mathrm{d}t \\
&\quad - 2\Delta\tilde{V}_t^n (\delta_t - \delta_{t_{i-1}^n}) \alpha_t (\tilde{\sigma} \partial_x \tilde{\lambda})(t,S_t^n) \mathrm{d}t \\
&\quad + \big|\big(\delta_t \tilde{\sigma}(t,S_t) - \delta_{t_{i-1}^n} \tilde{\sigma}(t,S_t^n) + \delta_t\alpha_t \tilde{\lambda}(t,S_t)\big) \\
&\qquad - \delta_{t_{i-1}^n} \big((\delta_t - \delta_{t_{i-1}^n})(\tilde{\sigma} \partial_x \tilde{\lambda})(t,S_t^n) + \alpha_t \tilde{\lambda}(t,S_t^n)\big) \\
&\qquad - \frac{1}{2} |\delta_t - \delta_{t_{i-1}^n}|^2 (\tilde{\sigma} \partial_x \tilde{\lambda})(t,S_t^n) \\
&\qquad - (\delta_t - \delta_{t_{i-1}^n}) \alpha_t \tilde{\lambda}(t,S_t^n)\big|^2 \mathrm{d}t.
\end{align*}
Therefore, for all $t \in [t_{i-1}^n,t_i^n)$,
\begin{align*}
    \mathbb{E}[|\Delta\tilde{V}_t^n|^2] &= \mathbb{E}[|\Delta V_{t_{i-1}^n}^n|^2] \\
    &\quad + 2\mathbb{E}\left[\int_{t_{i-1}^n}^{t}{\Delta\tilde{V}_s^n \big(\delta_s \tilde{\nu}(s,S_s) - \delta_{t_{i-1}^n} \tilde{\nu}(s,S_s^n) \big)\,\mathrm{d}s}\right] \\
    &\quad + 2\mathbb{E}\left[\int_{t_{i-1}^n}^{t}{\Delta\tilde{V}_s^n \frac{1}{2}\alpha_s^2 \big( \tilde{\lambda}(s,S_s) - \tilde{\lambda}(s,S_s^n) \big)\,\mathrm{d}s}\right] \\
    &\quad + 2\mathbb{E}\left[\int_{t_{i-1}^n}^{t}{\Delta\tilde{V}_s^n \beta_s \big( \delta_s \tilde{\lambda}(s,S_s) - \delta_{t_{i-1}^n} \tilde{\lambda}(s,S_s^n) \big)\,\mathrm{d}s}\right] \\
    &\quad + 2\mathbb{E}\left[\int_{t_{i-1}^n}^{t}{\Delta\tilde{V}_s^n \alpha_s \big( \delta_s (\tilde{\sigma}\partial_x \tilde{\lambda})(s,S_s) - \delta_{t_{i-1}^n} (\tilde{\sigma}\partial_x \tilde{\lambda})(s,S_s^n) \big)\,\mathrm{d}s}\right] \\
    &\quad - 2\mathbb{E}\left[\int_{t_{i-1}^n}^{t}{\Delta\tilde{V}_s^n \delta_{t_{i-1}^n} (\delta_s - \delta_{t_{i-1}^n})\left(\partial_t \tilde{\lambda} + \tilde{\nu} \partial_x \tilde{\lambda} + \frac{1}{2} \tilde{\sigma}^2 \partial_{xx}\tilde{\lambda} \right)(s,S_s^n)\,\mathrm{d}s}\right] \\
    &\quad - 2\mathbb{E}\left[\int_{t_{i-1}^n}^{t}{\Delta\tilde{V}_s^n \frac{1}{2} |\delta_s - \delta_{t_{i-1}^n}|^2 \left(\partial_t \tilde{\lambda} + \tilde{\nu} \partial_x \tilde{\lambda} + \frac{1}{2} \tilde{\sigma}^2 \partial_{xx}\tilde{\lambda} \right)(s,S_s^n)\,\mathrm{d}s}\right] \\
    &\quad - 2\mathbb{E}\left[\int_{t_{i-1}^n}^{t}{\Delta\tilde{V}_s^n (\delta_s - \delta_{t_{i-1}^n}) \big(\beta_s \tilde{\lambda}(s,S_s^n) + \alpha_s (\tilde{\sigma}\partial_x \tilde{\lambda})(s,S_s^n) \big)\,\mathrm{d}s}\right] \\
    &\quad + \mathbb{E}\left[\int_{t_{i-1}^n}^{t}{|\gamma_s^{i,n}|^2 \,\mathrm{d}s}\right],
\end{align*}
with $\gamma^{i,n}$ defined such as for all $t \in [t_{i-1}^n,t_i^n)$,
\begin{align*}
    \gamma_t^{i,n} &:= \delta_t \tilde{\sigma}(t,S_t) - \delta_{t_{i-1}^n} \tilde{\sigma}(t,S_t^n) \\
    &\quad + \alpha_t \big(\delta_t \tilde{\lambda}(t,S_t) - \delta_{t_{i-1}^n} \tilde{\lambda}(t,S_t^n)\big) \\
    &\quad - (\delta_t - \delta_{t_{i-1}^n})\big(\delta_{t_{i-1}^n} (\tilde{\sigma} \partial_x \tilde{\lambda})(t,S_t^n) + \frac{1}{2}(\delta_t - \delta_{t_{i-1}^n})(\tilde{\sigma} \partial_x \tilde{\lambda})(t,S_t^n) + \alpha_t \tilde{\lambda}(t,S_t^n)\big).
\end{align*}
Hence, this gives that for all $t \in [t_{i-1}^n,t_i^n)$,
\begin{align*}
    \mathbb{E}[|\Delta\tilde{V}_t^n|^2] &\leq \mathbb{E}[|\Delta V_{t_{i-1}^n}^n|^2] \\
    &\quad + \mathbb{E}\left[\int_{t_{i-1}^n}^{t}{\big(|\Delta\tilde{V}_s^n|^2 + |\delta_s \tilde{\nu}(s,S_s) - \delta_{t_{i-1}^n} \tilde{\nu}(s,S_s^n)|^2\big)\,\mathrm{d}s}\right] \\
    &\quad + \mathbb{E}\left[\int_{t_{i-1}^n}^{t}{\big(|\Delta\tilde{V}_s^n|^2 + \left|\frac{1}{2}\alpha_s^2\right|^2 |\tilde{\lambda}(s,S_s) - \tilde{\lambda}(s,S_s^n)|^2\big)\,\mathrm{d}s}\right] \\
    &\quad + \mathbb{E}\left[\int_{t_{i-1}^n}^{t}{\big(|\Delta\tilde{V}_s^n|^2 + |\beta_s|^2 |\delta_s \tilde{\lambda}(s,S_s) - \delta_{t_{i-1}^n} \tilde{\lambda}(s,S_s^n)|^2 \big)\,\mathrm{d}s}\right] \\
    &\quad + \mathbb{E}\left[\int_{t_{i-1}^n}^{t}{\big(|\Delta\tilde{V}_s^n|^2 + |\alpha_s|^2 |\delta_s (\tilde{\sigma}\partial_x \tilde{\lambda})(s,S_s) - \delta_{t_{i-1}^n} (\tilde{\sigma}\partial_x \tilde{\lambda})(s,S_s^n)|^2 \big)\,\mathrm{d}s}\right] \\
    &\quad + \mathbb{E}\left[\int_{t_{i-1}^n}^{t}{\left(|\Delta\tilde{V}_s^n|^2 + \left|\left(\partial_t \tilde{\lambda} + \tilde{\nu} \partial_x \tilde{\lambda} + \frac{1}{2} \tilde{\sigma}^2 \partial_{xx}\tilde{\lambda} \right)(s,S_s^n)\right|^2 |\delta_{t_{i-1}^n}(\delta_s - \delta_{t_{i-1}^n})|^2\right)\,\mathrm{d}s}\right] \\
    &\quad + \mathbb{E}\left[\int_{t_{i-1}^n}^{t}{\left(|\Delta\tilde{V}_s^n|^2 + \left|\frac{1}{2} \left(\partial_t \tilde{\lambda} + \tilde{\nu} \partial_x \tilde{\lambda} + \frac{1}{2} \tilde{\sigma}^2 \partial_{xx}\tilde{\lambda} \right)(s,S_s^n)\right|^2 |\delta_s - \delta_{t_{i-1}^n}|^4\right)\,\mathrm{d}s}\right] \\
    &\quad + \mathbb{E}\left[\int_{t_{i-1}^n}^{t}{\left(|\Delta\tilde{V}_s^n|^2 +  \left|\big(\beta_s \tilde{\lambda}(s,S_s^n) + \alpha_s (\tilde{\sigma}\partial_x \tilde{\lambda})(s,S_s^n)\big)\right|^2 |\delta_s - \delta_{t_{i-1}^n}|^2\right)\,\mathrm{d}s}\right] \\
    &\quad + 5\mathbb{E}\left[\int_{t_{i-1}^n}^{t}{|\delta_s \tilde{\sigma}(s,S_s) - \delta_{t_{i-1}^n} \tilde{\sigma}(s,S_s^n)|^2 \,\mathrm{d}s}\right] \\
    &\quad + 5\mathbb{E}\left[\int_{t_{i-1}^n}^{t}{|\alpha_s|^2 |\delta_s \tilde{\lambda}(s,S_s) - \delta_{t_{i-1}^n} \tilde{\lambda}(s,S_s^n)|^2 \,\mathrm{d}s}\right] \\
    &\quad + 5\mathbb{E}\left[\int_{t_{i-1}^n}^{t}{|(\tilde{\sigma} \partial_x \tilde{\lambda})(s,S_s^n)|^2 |\delta_{t_{i-1}^n}(\delta_s - \delta_{t_{i-1}^n})|^2\,\mathrm{d}s}\right] \\
    &\quad + 5\mathbb{E}\left[\int_{t_{i-1}^n}^{t}{\left|\frac{1}{2}(\tilde{\sigma} \partial_x \tilde{\lambda})(s,S_s^n)\right|^2 |\delta_s - \delta_{t_{i-1}^n}|^4\,\mathrm{d}s}\right] \\
    &\quad + 5\mathbb{E}\left[\int_{t_{i-1}^n}^{t}{|\alpha_s \tilde{\lambda}(s,S_s^n)|^2 |\delta_s - \delta_{t_{i-1}^n}|^2\,\mathrm{d}s}\right].
\end{align*}
which implies that there exists a constant $C_4 > 0$ such that for all $t \in [t_{i-1}^n,t_i^n)$,
\begin{align*}
\mathbb{E}[|\Delta\tilde{V}_t^n|^2] &\leq \mathbb{E}[|\Delta V_{t_{i-1}^n}^n|^2] \\
&\quad + C_4 \int_{t_{i-1}^n}^{t}{\big(\mathbb{E}[|\Delta\tilde{V}_s^n|^2] + \mathbb{E}[|S_s - S_s^n|^2] + \mathbb{E}[|\delta_s - \delta_s^n|^2]\big) \,\mathrm{d}s} \\
&\quad + C_4 \int_{t_{i-1}^n}^{t}{\big(\mathbb{E}[|\delta_{t_{i-1}^n}(\delta_s - \delta_{t_{i-1}^n})|^2] + \mathbb{E}[|\delta_s - \delta_{t_{i-1}^n}|^4]\big) \,\mathrm{d}s} \\
&\leq \mathbb{E}[|\Delta V_{t_{i-1}^n}^n|^2] \\
&\quad + C_4 \int_{t_{i-1}^n}^{t}{\big(\mathbb{E}[|\Delta\tilde{V}_s^n|^2] + \mathbb{E}[|S_s - S_s^n|^2] + \mathbb{E}[|\delta_s - \delta_s^n|^2]\big) \,\mathrm{d}s} \\
&\quad + C_4 \int_{t_{i-1}^n}^{t}{\left(\sqrt{\mathbb{E}[|\delta_{t_{i-1}^n}|^4]} \sqrt{\mathbb{E}[|\delta_s - \delta_{t_{i-1}^n}|^4]} + \mathbb{E}[|\delta_s - \delta_{t_{i-1}^n}|^4]\right) \,\mathrm{d}s}. \numberthis \label{proof discrete trading strategy V 4}
\end{align*}
Besides by the generalized Minkowski inequality and the Burkholder-Davis-Gundy inequality (see for instance Section 7.8.1 in \cite{pages2018numerical}) we have that there exists $K > 0$ such that for all $t \in [t_{i-1}^n,t_i^n)$,
$$ \big\Vert\delta_t - \delta_{t_{i-1}^n}\big\Vert_{4} \leq K \sqrt{\Delta_n},$$
hence
\begin{equation}
\label{useful result 2}
\sqrt{\mathbb{E}[|\delta_s - \delta_{t_{i-1}^n}|^4]} = \big\Vert\delta_t - \delta_{t_{i-1}^n}\big\Vert_{4}^2 \leq K^2 \Delta_n.
\end{equation}
By plugging (\ref{useful result 1}), (\ref{useful result 2}), Propositions \ref{prop discrete trading strategy delta} and \ref{prop discrete trading strategy spot} in the inequality (\ref{proof discrete trading strategy V 4}), we obtain that there exists a constant $C_5 > 0$ such that for all $t \in [t_{i-1}^n,t_i^n)$,
\begin{equation}
\label{proof discrete trading strategy V 5}
\mathbb{E}[|\Delta\tilde{V}_t^n|^2] \leq \mathbb{E}[|\Delta V_{t_{i-1}^n}^n|^2] + C_5 \Delta_n^2 + C_5 \int_{t_{i-1}^n}^{t}{\mathbb{E}[|\Delta \tilde{V}_s^n|^2]\,\mathrm{d}s}.
\end{equation}
Hence by Gronwall's lemma we have that for all $t \in [t_{i-1}^n,t_i^n)$,
\begin{align*}
\mathbb{E}[|\Delta\tilde{V}_t^n|^2] &\leq \exp(C_5(t-t_{i-1}^n))\left(\mathbb{E}[|\Delta V_{t_{i-1}^n}^n|^2] + C_5 \Delta_n^2\right) \\
&\leq \exp(C_5 \Delta_n)\left(\mathbb{E}[|\Delta V_{t_{i-1}^n}^n|^2] + C_5 \Delta_n^2\right), \numberthis \label{proof discrete trading strategy V 6}
\end{align*}
wich leads to, by plugging the inequality (\ref{proof discrete trading strategy V 6}) in (\ref{proof discrete trading strategy V 5}), that for all $t \in [t_{i-1}^n,t_i^n)$,
\begin{align*}
\mathbb{E}[|\Delta\tilde{V}_t^n|^2] &\leq \mathbb{E}[|\Delta V_{t_{i-1}^n}^n|^2] + C_5 \Delta_n^2 + C_5 \Delta_n\exp(C_5 \Delta_n)\left(\mathbb{E}[|\Delta V_{t_{i-1}^n}^n|^2] + C_5 \Delta_n^2\right) \\
&\leq \mathbb{E}[|\Delta V_{t_{i-1}^n}^n|^2]\left(1 + \underbrace{C_5 \Delta_n \exp(C_5 \Delta_n)}_{= O(\Delta_n)}\right) + \underbrace{C_5 \Delta_n^2 + C_5^2 \Delta_n^3 \exp(C_5 \Delta_n)}_{= O(\Delta_n^2)}.
\end{align*}
Therefore there exists a constant $C_6 > 0$ such that for all $t \in [t_{i-1}^n,t_i^n)$,
\begin{equation}
\label{proof discrete trading strategy V 7}
\mathbb{E}[|\Delta\tilde{V}_t^n|^2] \leq \mathbb{E}[|\Delta V_{t_{i-1}^n}^n|^2](1 + C_6 \Delta_n) + C_6 \Delta_n^2.
\end{equation}
Let $(\tau_k^{i,n})_{k \in \mathbb{N}}$ a non-decreasing sequence such that $\lim\limits_{k \rightarrow +\infty}{\tau_k^{i,n}} = t_i^n$, by Fatou's lemma we have
$$ \mathbb{E}\left[\liminf\limits_{k \rightarrow +\infty}{|\Delta \tilde{V}_{\tau_k^{i,n}}^n|^2}\right] \leq \liminf\limits_{k \rightarrow +\infty}{\mathbb{E}[|\Delta \tilde{V}_{\tau_k^{i,n}}^n|^2]} \leq \mathbb{E}[|\Delta V_{t_{i-1}^n}^n|^2](1 + C_6 \Delta_n) + C_6 \Delta_n^2, $$
and
$$ \mathbb{E}\left[\liminf\limits_{k \rightarrow +\infty}{|\Delta \tilde{V}_{\tau_k^{i,n}}^n|^2}\right] = \mathbb{E}\left[\lim\limits_{k \rightarrow +\infty}{|\Delta \tilde{V}_{\tau_k^{i,n}}^n|^2}\right] = \mathbb{E}[|\Delta V_{t_i^n}^n|^2] $$
since $\lim\limits_{t \rightarrow (t_i^n)^-}{\Delta\tilde{V}_t^n} = \Delta V_{t_i^n}^n$. Therefore we have for all $n \in \mathbb{N}^*, i \in \llbracket 1,n \rrbracket$,
\begin{align*}
\mathbb{E}[|\Delta V_{t_i^n}^n|^2] &\leq \mathbb{E}[|\Delta V_{t_{i-1}^n}^n|^2](1 + C_6 \Delta_n) + C_6 \Delta_n^2 \\
&\leq \mathbb{E}[|\Delta V_{t_{i-2}^n}^n|^2](1 + C_6 \Delta_n)^2 + C_6 \Delta_n^2(1 + C_6 \Delta_n) + C_6 \Delta_n^2 \\
&\ldots \\
&\leq \mathbb{E}[|\Delta V_{t_0^n}^n|^2]\left(1 + C_6 \Delta_n\right)^i + C_6 \Delta_n^2\sum_{k=0}^{i-1}{(1 + C_6 \Delta_n)^k}.
\end{align*}
Considering the fact that $\Delta V_{t_0^n}^n = 0$, we have for all $n \in \mathbb{N}^*, i \in \llbracket 1,n \rrbracket$,
\begin{align*}
\mathbb{E}[|\Delta V_{t_i^n}^n|^2] &\leq C_6 \Delta_n^2\frac{(1 + C_6 \Delta_n)^i - 1}{C_6 \Delta_n} \\
&\leq \Delta_n (1 + C_6 \Delta_n)^n \\
&\leq \Delta_n \exp(n \ln(1 + C_6 \Delta_n)) \\
&\leq \underbrace{\Delta_n \exp(C_6 n \Delta_n)}_{O(\Delta_n)}.
\end{align*}
Hence there exists a constant $C_7 > 0$ such that for all $n \in \mathbb{N}^*, i \in \llbracket 0,n \rrbracket$,
$$ \mathbb{E}[|\Delta V_{t_i^n}^n|^2] \leq C_7 \Delta_n $$
and for all $n \in \mathbb{N}^*, i \in \llbracket 0,n \rrbracket, t \in [t_{i-1}^n, t_i^n),$
\begin{align*}
\mathbb{E}[|\Delta V_t^n|^2] &\leq \mathbb{E}[|\Delta V_{t_{i-1}^n}^n|^2](1 + C_3 \Delta_n) + C_3 \Delta_n \\
&\leq \underbrace{C_7 \Delta_n (1 + C_3 \Delta_n) + C_3 \Delta_n}_{= O(\Delta_n)}
\end{align*}
which implies that there exists a constant $C_8 > 0$ such that for all $n \in \mathbb{N}^*, i \in \llbracket 1,n \rrbracket$,
$$ \sup\limits_{t \in [t_{i-1}^n,t_i^n]}{\mathbb{E}[|\Delta V_t^n|^2]} \leq C_8 \Delta_n. $$
This gives that for all $n \in \mathbb{N}^*$,
$$ \sup\limits_{t \in [0,T]}{\mathbb{E}[|\Delta V_t^n|^2]} = \max\limits_{1 \leq i \leq n}{\sup\limits_{t \in [t_{i-1}^n,t_i^n]}{\mathbb{E}[|\Delta V_t^n|^2]}} \leq C_8 \Delta_n. $$
\end{proof}

\end{document}